\keywords{correctness criteria, matching algorithms}
\def\eg{\emph{e.g.}}
\def\cf{\emph{cf.}}
\begin{document}

\title[Unique perfect matchings and proof nets]{Unique
  perfect matchings, forbidden transitions
  \texorpdfstring{\\}{ }
  and proof nets for linear logic with Mix}
\titlecomment{{\lsuper*}Extended version of a FSCD 2018 paper}

\author{{\fontencoding{T5}\selectfont Lê Thành Dũng Nguyễn}} 
\address{Université publique, France} 
\email{nltd@nguyentito.eu} 
\thanks{The manifesto
  \url{https://pageperso.lif.univ-mrs.fr/~sylvain.sene/affiliation.html} (in
  French; archived on 2020--02--12 on the Internet Wayback Machine
  (\url{https://archive.org/})) explains the given affiliation.}





\begin{abstract}
  \noindent This paper establishes a bridge between linear logic and mainstream
  graph theory, building on previous work by Retoré (2003). We show that the
  problem of correctness for MLL+Mix proof nets is equivalent to the problem of
  uniqueness of a perfect matching. By applying matching theory, we obtain new
  results for MLL+Mix proof nets: a linear-time correctness criterion, a
  quasi-linear sequentialization algorithm, and a characterization of the
  sub-polynomial complexity of the correctness problem. We also use graph
  algorithms to compute the dependency relation of Bagnol et al.\ (2015) and the
  kingdom ordering of Bellin (1997), and relate them to the notion of blossom
  which is central to combinatorial maximum matching algorithms.

  In this journal version, we have added an explanation of Retoré's
  \enquote{RB-graphs} in terms of a general construction on graphs with
  forbidden transitions. In fact, it is by analyzing RB-graphs that we arrived
  at this construction, and thus obtained a polynomial-time algorithm for
  finding trails avoiding forbidden transitions; the latter is among the
  material covered in another paper by the author focusing on graph theory.
\end{abstract}

\maketitle

\tikzstyle{vertex}=[circle,fill=black,minimum size=7pt,inner sep=0pt]
\tikzstyle{bigvertex}=[circle,draw,thick,fill=black!5,minimum size=16pt,inner sep=0pt]
\tikzstyle{matching edge}=[blue, ultra thick]
\tikzstyle{non matching edge}=[red]
\definecolor{lavenderindigo}{rgb}{0.58, 0.34, 0.92}
\definecolor{amber}{rgb}{1.0, 0.75, 0.0}

\section{Introduction}

\subsection{Algorithmics of proofs in linear logic}

One of the major innovations introduced at the birth of linear
logic~\cite{girard_linear_1987} was a representation of proofs as \emph{graphs},
instead of trees as in natural deduction or sequent calculus. A distinctive
property of these \emph{proof nets} is that checking that a proof is correct
cannot be done merely by a local verification of inference steps: among the
graphs which locally look like proof nets, called \emph{proof structures}, some
are invalid proofs. Hence the \emph{correctness} problem: given a proof
structure, is it a real proof net?

A lot of work has been devoted to this decision problem, and in the case of the
multiplicative fragment of linear logic (MLL), whose proof nets are the most
satisfactory, it can be considered solved from an algorithmic point of view.
Indeed, Guerrini~\cite{guerrini_linear_2011} and Murawski and
Ong~\cite{murawski_fast_2006} have found linear-time tests for MLL correctness;
the problem has also been shown to be $\mathsf{NL}$-complete by Jacobé de
Naurois and Mogbil~\cite{jacobe_de_naurois_correctness_2011}. Both the
linear-time algorithms we mentioned also solve the corresponding search problem:
computing a \emph{sequentialization} of a MLL proof net, i.e., a translation
into sequent calculus.

However, for MLL extended with the \emph{Mix rule}~\cite{fleury_mix_1994}
(MLL+Mix), the precise complexity of deciding correctness has remained unknown
(though a polynomial-time algorithm was given by
Danos~\cite{danos_logique_1990}). Thus, one of our goals in this paper is to
study the following problems:

\begin{prob}[\textsc{MixCorr}]%
  \label{prob-correctness}
  Given a proof structure $\pi$, is it an MLL+Mix proof net?
\end{prob}

\begin{prob}[\textsc{MixSeq}]%
  \label{prob-sequentialization}
  Reconstruct a sequent calculus proof for an MLL+Mix proof net.
\end{prob}

\subsection{Proof nets vs graph theory}

It turns out that a \emph{linear-time} algorithm for
\textsc{MixCorr} follows immediately from already known results\footnote{A
  similar historical remark can be made about correctness for MLL without Mix,
  see Remark~\ref{rem-mll-nomix}.}, see Theorem~\ref{thm-linear-time}.
The key is to use a construction by Retoré~\cite{retore_handsome_1999,
  retore_handsome_2003} to reduce it to the problem of \emph{uniqueness of a
  given perfect matching}, which can be solved in linear
time~\cite{gabow_unique_2001}:

\begin{prob}[\textsc{UniquenessPM}]%
  \label{prob-ugpm}
  Given a graph $G$, together with a \emph{perfect matching} $M$ of $G$, is $M$
  the only perfect matching of $G$? Equivalently, is there no \emph{alternating
    cycle} for $M$?
\end{prob}

This brings us to the central idea of this paper: \emph{from the point of view
  of algorithmics, MLL+Mix proof nets and unique perfect matchings are
  essentially the same thing}. This allows us to apply matching theory to the
study of proof nets, leading to several new results. Indeed, one would expect
graph algorithms to be of use in solving problems on proof structures, since
they are graphs! But for this purpose, a bridge between the theory of proof nets
and mainstream graph theory is needed, whereas previous work on the former
mostly made use of \enquote{homemade} objects such as \emph{paired graphs} (an
exception being Murawski and Ong's use of \emph{dominator trees}). By building
on Retoré's discovery of a connection with perfect matchings, this paper
proposes such a bridge.

Thus, proof structures are revealed to be part of a family of graph-theoretic
objects which admit equivalent (as shown by
Szeider~\cite{szeider_theorems_2004}) \enquote{structure from acyclicity}
properties. In linear logic, the corresponding acyclicity property has been
known for a long time: it is the \emph{Danos--Regnier correctness 
  criterion}~\cite{danos_structure_1989}, a necessary and sufficient condition
for a proof structure to be a proof net. These connections have also inspired
new results concerning other members of this family, not only perfect matchings
but also, \eg, \enquote{edge-colored graphs}; that is the subject of another
paper by the author~\cite{nguyen_constrained_2019}.

Another occurrence of an equivalent \enquote{structure from acyclicity} result,
of historical interest for us, is Retoré's
\enquote{aggregates}~\cite[Chapter~2]{retore_reseaux_1993}\footnote{To be more
  accurate, in the reference given, which is a PhD thesis written in French,
  they are called \enquote{agrégats}. However, the word \enquote{aggregate} is
  indeed the official translation, and appeared in the title of the never
  published note \emph{Graph theory from linear logic: Aggregates} (Preprint 47,
  Équipe de Logique, Université Paris 7). That title is also a good summary for
  what we try to achieve in the present paper and
  in~\cite{nguyen_constrained_2019}.}, an early attempt to define a purely
graph-theoretic counterpart to the theory of MLL+Mix correctness. It turns out
that these aggregates occur naturally in graph theory as a tractable case of the
\enquote{rainbow path problem} as we show in~\cite{nguyen_constrained_2019}.

\subsection{Contributions}

First, we establish our equivalence by giving a translation from graphs equipped
with perfect matchings to proof structures (\Cref{sec:equivalence}) --- Retoré's
pre-existing construction takes care of the converse direction\footnote{This is
  a first difference with the conference version, which did not include Retoré's
translation.}. We also propose
later an alternative to Retoré's translation (\Cref{sec:graphification}), having
better properties with respect to sequentialization; this yields a new
graph-theoretic proof of the \emph{sequentialization theorem}, i.e., the
equivalence between MLL+Mix proof nets and Danos--Regnier acyclic proof 
structures.

\subsubsection{Complexity of problems on proof nets}

As already mentioned, we give the first linear-time algorithm for
\textsc{MixCorr}~(\Cref{sec:linear-time}). As for its sub-polynomial complexity
(\Cref{sec:sub-polynomial}), we show that \textsc{MixCorr} is in randomized
$\mathsf{NC}$ and in $\mathsf{quasiNC}$ (informally, $\mathsf{NC}$ is the class
of problems with efficient \emph{parallel} algorithms). On the other hand, we
have a sort of hardness result: if \textsc{MixCorr} were in $\mathsf{NC}$ --- in
particular, if it were in $\mathsf{NL}$, as for MLL without Mix --- this would
imply a solution to a long-standing conjecture by Lovász
(Conjecture~\ref{upm-conjecture}) concerning the related \emph{unique perfect
  matching} problem:

\begin{prob}[{\textsc{UniquePM}~\cite{kozen_nc_1985, gabow_unique_2001,
      hoang_bipartite_2006}}]%
  \label{prob-upm}
  Given a graph $G$, determine whether it admits exactly one perfect matching
  and, if so, find this matching.
\end{prob}

We then turn to the sequentialization problem, for which we provide a
graph-theoretic reformulation --- thanks to our new translation in
\Cref{sec:graphification} --- and an algorithm relying on this reformulation.
This gives us a \emph{quasi-linear} time\footnote{More precisely, $O(n {(\log
  n)}^2 {(\log \log n)}^2)$ time. Both this and our $\mathsf{quasiNC}$ algorithms
  rely on very recent advances, respectively on dynamic bridge-finding data
  structures~\cite{holm_dynamic_2018} and on the perfect matching existence
  problem~\cite{svensson_matching_2017}. Any further progress on these problems
  would lead to an improvement of our complexity bounds.} solution to
\textsc{MixSeq} (\Cref{sec:sequentialization}); to our knowledge, this beats
previous algorithms for \textsc{MixSeq}.

As a demonstration of our matching-theoretic toolbox, we also show how to
compute some information on the set of \emph{all} sequentializations, namely
Bellin's \emph{kingdom ordering}~\cite{bellin_subnets_1997} of the links of a
MLL+Mix proof net (rediscovered by Bagnol et al.~\cite{bagnol_dependencies_2015}
under the name of \emph{order of introduction}). We give a polynomial time and a
$\mathsf{quasiNC}$ algorithm (\Cref{sec:kingdom-ordering}), both relying on an
effective characterization of this ordering.

\subsubsection{Further connections to graph theory}

We also show that this notion of kingdom ordering admits a direct counterpart in
unique perfect matchings. The above-mentioned characterization, when rephrased
in the language of graph theory (\Cref{sec:blossoms}), turns out to involve
objects which play a major role in matching algorithms, namely
\emph{blossoms}~\cite{edmonds_paths_1965}. In this way, we obtain a new result
of independent interest in combinatorics. The appendix of the conference version
of this paper contained a direct proof of this result; instead of reproducing it
here, we have moved it to the companion paper~\cite{nguyen_constrained_2019},
and limit ourselves here to the equivalence with the already
known~\cite{bellin_subnets_1997} proof net version.

Finally, in \Cref{sec:rb-graph-bis} --- a new section added for this journal
version\footnote{This results of that new section were previously claimed
  without proof in a contributed talk at the 1st International Workshop on
  Trends in Linear Logic and Applications (TLLA 2017).} --- we analyse Retoré's
\enquote{RB-graphs} reduction~\cite{retore_handsome_2003}, and show that it can
be understood in terms of graphs with \emph{forbidden
  transitions}~\cite{szeider_finding_2003} which can be seen as the generalized
paired graphs. This reveals a minor subtlety about what kind of cycles RB-graphs
actually detect in paired graphs.

\tableofcontents

\section{Preliminaries}%
\label{sec:background}

\subsection{Terminology}%
\label{sec:terminology}

\subsubsection{Graph theory}

By default, \enquote{graph} refers to an \emph{undirected} graph. Our
\emph{paths} and \emph{cycles} are \textbf{not allowed to contain repeated
  vertices}\footnote{This choice of terminology is common, see, \eg,~\cite[\S
  1.4]{bang-jensen_digraphs._2009}. The adjective \enquote{elementary} is
  sometimes used to refer to such paths and cycles.}; we will sometimes identify
them with their sets of edges (which characterize them) and apply set operations
on them. A \emph{bridge} of a graph is an edge whose removal increases the
number of connected components.

For directed graphs, the notion of connectedness we consider is \emph{weak
  connectedness}, i.e., connectedness of the graph obtained by forgetting the
edge directions. A \emph{predecessor} (resp.\ \emph{successor}) of a vertex is
the source (resp.\ target) of some incoming (resp.\ outgoing) edge.

\subsubsection{Complexity classes}

We refer to~\cite[\S{}1.4]{jacobe_de_naurois_correctness_2011} for the
logarithmic space classes $\mathsf{L}$ (deterministic) and $\mathsf{NL}$
(non-deterministic) and to~\cite{chandra_constant_1984} for the class
$\mathsf{AC}^0$ of constant-depth circuits. The class $\mathsf{NC}^k$ (resp.\
$\mathsf{quasiNC}^k$~\cite{barrington_quasipolynomial_1992}) consists of the
problems which can be solved by a uniform\footnote{For $\mathsf{NC}^k$ and $\mathsf{quasiNC}^k$,
  we may take this to mean that there is a deterministic logarithmic space
  Turing machine which, given $n$ in unary, computes the circuit for inputs of
  size $n$. We will not enter into the details of $\mathsf{AC}^0$ uniformity.}
family of circuits of depth $O(\log^k n)$ and polynomial (resp.\
quasi-polynomial, i.e., $2^{O(\log^c n)}$) size; $\mathsf{NC} = \bigcup_k
\mathsf{NC}^k$ and $\mathsf{quasiNC} = \bigcup_k \mathsf{quasiNC}^k$.

It is well-known that $\mathsf{AC}^0 \subseteq \mathsf{NC}^1 \subseteq
\mathsf{L} \subseteq \mathsf{NL} \subseteq \mathsf{NC}^2 \subseteq \mathsf{NC}
\subseteq \mathsf{P}$.


\subsection{Perfect matchings, alternating cycles and sequentialization}%
\label{sec:pm}

\begin{defi}
  Let $G = (V,E)$ be a graph. A \emph{matching} (resp.\ \emph{perfect matching})
  $M$ in $G$ is a subset of $E$ such that every vertex in $V$ is incident to
  \emph{at most one} (resp.\ \emph{exactly one}) edge in $M$. An
  \emph{alternating path} (resp.\ \emph{cycle}) for $M$ is a path (resp.\ cycle)
  where, for every pair of consecutive edges, one of them is in the matching and
  the other one is not.
\end{defi}

Testing the existence of a perfect matching in a graph --- or, more generally,
finding a maximum cardinality matching --- is one of the central computational
problems in graph theory. Combinatorial maximum matching algorithms,
starting\footnote{Note that the problem was solved long before in the special
  case of bipartite graphs. In fact, a solution for this case was found in
  Jacobi's posthumous
  papers~\cite{jacobi_investigando_1865,jacobi_looking_2009}.} with Edmonds's
\emph{blossom algorithm}~\cite{edmonds_paths_1965}\footnote{This paper is one of
  the first to propose defining efficient algorithms as polynomial-time
  algorithms; it also contributed to the birth of the field of polyhedral
  combinatorics.}, use alternating paths to iteratively increase the size of the
matching; similarly, alternating cycles are important for the problems
\textsc{UniquenessPM} and \textsc{UniquePM} because they witness the
\emph{non-uniqueness} of perfect matchings.

\begin{lem}[Berge~{\cite{berge_two_1957}}]%
  \label{berge}
  Let $G$ be a graph and $M$ be a perfect matching of $G$. Then if $M' \neq M$
  is a perfect matching, the symmetric difference $M \triangle M'$ is a
  vertex-disjoint union of cycles, which are alternating for both $M$ and $M'$.
  Conversely, if $C$ is an alternating cycle for $M$, then $M \triangle C$ is
  another perfect matching.
\end{lem}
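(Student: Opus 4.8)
The plan is to analyze the symmetric difference $M \triangle M'$ at the level of vertex degrees, which immediately forces the structural description in the first part, and then to verify the converse by a direct case distinction on whether a vertex lies on the cycle $C$.

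First I would handle the forward direction. Fix a vertex $v$. Since $M$ and $M'$ are both perfect matchings, $v$ is incident to exactly one edge $e \in M$ and exactly one edge $e' \in M'$. If $e = e'$, then $v$ is incident to no edge of $M \triangle M'$; otherwise $e \in M \setminus M' \subseteq M \triangle M'$ and $e' \in M' \setminus M \subseteq M \triangle M'$, while no other edge at $v$ lies in the symmetric difference, so $v$ has degree exactly $2$ in the subgraph $H := (V, M \triangle M')$. A graph in which every vertex has degree $0$ or $2$ is a vertex-disjoint union of isolated vertices and cycles: starting from any degree-$2$ vertex and following edges, there is always exactly one way to continue, so the walk must eventually close into a cycle (by our convention, vertices are not repeated). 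Discarding the isolated vertices, $H$ is thus a vertex-disjoint union of cycles. Along any such cycle, the two edges at each vertex are precisely one edge of $M \setminus M'$ and one of $M' \setminus M$, so consecutive edges alternate between these two sets; in particular each cycle is alternating for $M$, and likewise for $M'$. Finally $M \triangle M' \neq \emptyset$ since $M \neq M'$, so at least one such cycle exists.

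For the converse, let $C$ be an alternating cycle for $M$ and set $M' := M \triangle C = (M \setminus C) \cup (C \setminus M)$. Being an alternating cycle, $C$ has even length and its edge set splits into $C \cap M$ and $C \setminus M$, each of which covers every vertex of $C$ exactly once. Take any $v \in V$. If $v$ is not on $C$, then the unique $M$-edge at $v$ is not in $C$, hence survives in $M'$, and no edge of $C \setminus M$ meets $v$; so $v$ is incident to exactly one edge of $M'$. If $v$ is on $C$, let $e \in C \cap M$ and $f \in C \setminus M$ be its two incident edges in $C$; in passing from $M$ to $M'$ the edge $e$ is removed and $f$ is added, and since $e$ was the only $M$-edge at $v$, the edge $f$ is the only $M'$-edge at $v$. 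Either way $v$ is covered exactly once, so $M'$ is a perfect matching, and $M' \neq M$ since, e.g., $f \in M' \setminus M$.

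I do not expect a genuine obstacle: the argument is elementary bookkeeping with matchings. The only point worth stating explicitly rather than leaving implicit is the fact that a graph with all degrees in $\{0,2\}$ is a disjoint union of isolated vertices and cycles, which relies on our convention that paths and cycles have no repeated vertices.
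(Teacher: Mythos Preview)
Your proof is correct and follows the standard degree-counting argument for Berge's lemma. Note, however, that the paper does not actually supply a proof of this statement: it is quoted as a classical result with a citation to Berge~\cite{berge_two_1957}, so there is no proof in the paper to compare against. Your write-up would serve perfectly well as the omitted proof.
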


As an example, consider Figure~\ref{fig:pm-non-unique}. The matching on the left
admits an alternating cycle, the outer square; by taking the symmetric
difference between this matching and the set of edges of the cycle, one gets the
matching on the right. Conversely, the symmetric difference between both
matchings (which, in this case, is their union) is the square. Note also that in
Figure~\ref{fig:pm-unique}, there is no alternating cycle because vertex
repetitions are disallowed.

\begin{figure}
  \begin{subfigure}{6.8cm}
  \centering
    \begin{tikzpicture}
      \node[vertex] (w) at (0,2) {};
      \node[vertex] (x) at (2,2) {};
      \node[vertex] (y) at (0,0) {};
      \node[vertex] (z) at (2,0) {};

      \draw [non matching edge] (y) -- (x);
        \draw [matching edge] (w) -- (y);
        \draw [matching edge] (x) -- (z);
        \draw [non matching edge] (w) -- (x);
        \draw [non matching edge] (y) -- (z);

      \node[vertex] (w) at (3,2) {};
      \node[vertex] (x) at (5,2) {};
      \node[vertex] (y) at (3,0) {};
      \node[vertex] (z) at (5,0) {};

      \draw [non matching edge] (y) -- (x);
        \draw [non matching edge] (w) -- (y);
        \draw [non matching edge] (x) -- (z);
        \draw [matching edge] (w) -- (x);
        \draw [matching edge] (y) -- (z);
    \end{tikzpicture}
    \caption{Two PMs of the same graph.}%
    \label{fig:pm-non-unique}
  \end{subfigure}\begin{subfigure}{6.5cm}
    \centering
    \begin{tikzpicture}
      \node[vertex] (w) at (0,2) {};
      \node[vertex] (y) at (0,0) {};
      \node[vertex] (t) at (1.5,1) {};
      \node[vertex] (s) at (3.5,1) {};
      \node[vertex] (x) at (5,2) {};
      \node[vertex] (z) at (5,0) {};

      \draw [matching edge] (w) -- (y);
      \draw [matching edge] (x) -- (z);
      \draw[matching edge] (t) -- (s);
      \draw[non matching edge] (w) -- (t);
      \draw[non matching edge] (y) -- (t);
      \draw[non matching edge] (x) -- (s);
      \draw[non matching edge] (z) -- (s);

    \end{tikzpicture}
    \caption{A graph with a unique PM.}%
    \label{fig:pm-unique}
  \end{subfigure}
  \caption{Examples of perfect matchings (PMs). The edges in the matchings are
    thick and blue.}%
  \label{fig:pm}
\end{figure}
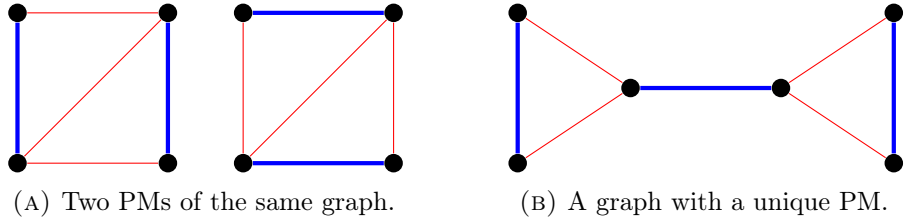

Another approach to finding perfect matchings, using linear algebra, was
initiated by Lovász~\cite{lovasz_determinants_1979} and leads to a
\emph{randomized} $\mathsf{NC}$ algorithm by Mulmuley et
al.~\cite{mulmuley_matching_1987}. Recently, Svensson and Tarnawski have shown
that this algorithm can be derandomized to run in deterministic
$\mathsf{quasiNC}$~\cite{svensson_matching_2017}.

There is also a considerable body of purely mathematical work on matchings,
starting from the 19th century. Let us mention for our purposes a result dating
from 1959.

\begin{thm}[Kotzig~\cite{kotzig_z_1959}]%
  \label{kotzig}
  Let $G$ be a graph. Suppose that $G$ admits a unique perfect matching
  $M$. Then $M$ contains a bridge of $G$.
\end{thm}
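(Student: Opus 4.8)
The plan is to argue by induction on the number of vertices of $G$, using Lemma~\ref{berge} to control how perfect matchings behave. If $G$ has a unique perfect matching $M$, pick any edge $e = uv \in M$. The key dichotomy is: either $e$ is a bridge of $G$ and we are done, or $e$ lies on a cycle, in which case I want to find a smaller graph to which to apply the induction hypothesis. The natural candidate is $G' = G \setminus \{u,v\}$ (delete both endpoints of $e$), which inherits the perfect matching $M' = M \setminus \{e\}$; moreover $M'$ is still the \emph{unique} perfect matching of $G'$, since any alternating cycle for $M'$ in $G'$ would also be an alternating cycle for $M$ in $G$, contradicting uniqueness via Lemma~\ref{berge}. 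So by induction $M'$ contains an edge $f$ that is a bridge of $G'$; the task is then to promote $f$ to a bridge of $G$, which will not be true in general, so this needs refinement.

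The cleaner route, which I would actually pursue, is to work with a \emph{minimal counterexample} or equivalently to choose $e$ cleverly. Consider the following: among all edges of $M$, none is a bridge of $G$ (assumption toward contradiction), so every edge of $M$ lies on some cycle of $G$. I would like to build an alternating cycle directly. Start from an edge $e_0 = u_0 v_0 \in M$; since $e_0$ is not a bridge, there is a path $P$ from $u_0$ to $v_0$ in $G \setminus e_0$. Walk along $P$ from $u_0$; the first edge leaving $u_0$ is a non-matching edge (since $u_0$'s only matching edge is $e_0$), reaching some vertex $w_1$; follow $w_1$'s matching edge $e_1$, then a non-matching edge, and so on, trying to trace an alternating walk. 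The difficulty is that this greedy walk need not close up into a \emph{simple} alternating cycle and may get stuck or revisit vertices in a way that creates a non-alternating closure. Managing this is the main obstacle: one needs a counting or parity argument, or an appeal to the structure of $M \triangle (\text{some path})$, to extract a genuine alternating cycle. A slick way around it: if every edge of $M$ lies on a cycle, consider a shortest cycle $C$ in $G$ containing at least one matching edge; a minimality argument on $C$ forces $C$ to be alternating (any matching edge along $C$ must be flanked by non-matching edges on $C$, else we could shortcut using another matching edge's endpoints — here one uses that matching edges are vertex-disjoint). Then by Lemma~\ref{berge}, $M \triangle C$ is a second perfect matching, contradicting uniqueness.

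Thus the skeleton is: (1) assume no edge of $M$ is a bridge; (2) conclude every matching edge lies on a cycle; (3) take a shortest cycle meeting $M$ and show, using the disjointness of matching edges plus a detour/shortcut argument on the cycle, that it must be alternating; (4) apply Berge's lemma to get a contradiction. The step I expect to be delicate is (3): proving that a minimal cycle through a matching edge is necessarily alternating. The intuition is that if two consecutive edges of the cycle were both non-matching at a vertex $x$, then $x$'s matching partner $x'$ — with matching edge $xx'$ — would let us either shorten the cycle or split off a smaller cycle still containing a matching edge, contradicting minimality; but one has to handle carefully the case where $x'$ itself lies on the cycle, and whether the resulting shorter closed walk is still a simple cycle. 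An induction on $|V(G)|$ combined with the vertex-deletion reduction sketched above is the safe fallback if the direct minimal-cycle argument gets tangled in these case distinctions.
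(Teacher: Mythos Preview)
The paper does not prove Kotzig's theorem; it is cited as a classical 1959 result and used as a black box. So there is no proof in the paper to compare against, and one can only assess your argument on its own merits.

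Your step~(3) is not merely delicate --- it is false as stated. Take $V=\{a,b,x,x'\}$, $M=\{ab,\,xx'\}$, and let $G=K_4$. No matching edge is a bridge, yet the shortest cycles containing a matching edge are the triangles (e.g.\ $a,b,x$), which have odd length and hence cannot be alternating. An alternating cycle does exist ($a,b,x',x$), but it is strictly longer. So minimizing cycle length does not force alternation, and your detour/shortcut heuristic cannot be completed to a proof of that implication.

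You have also inverted which sub-case is problematic. When $x'$ lies on $C$, the edge $xx'$ is a genuine chord (it cannot coincide with either edge of $C$ at $x$, since those are non-matching while $xx'\in M$), and each of the two resulting cycles is simple, strictly shorter, and contains the matching edge $xx'$; that case is clean. The case you do \emph{not} address is $x'\notin C$: there is no chord, and merely knowing that $xx'$ is not a bridge gives you some other $x$--$x'$ path with no control on its length. This is precisely the situation in the $K_4$ example.

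Your induction fallback carries the gap you already flagged --- a bridge of $G\setminus\{u,v\}$ need not be a bridge of $G$ --- and you offer no mechanism to recover one. Standard proofs of Kotzig's theorem avoid shortest cycles altogether: one takes a \emph{longest} alternating path, shows both its end-edges must lie in $M$, and then uses the degree~$\geq 2$ hypothesis at an endpoint to find a chord back into the path; a parity analysis (possibly combined with a rotation argument) then closes an alternating cycle. If you want to repair your attempt, that is the direction to look.
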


As shown by Retoré~\cite{retore_handsome_2003}, Kotzig's theorem leads to an
inductive characterization of the set of graphs equipped with a unique perfect
matching.

\begin{thm}[Sequentialization for unique perfect
  matchings~{\cite{retore_handsome_2003}}]%
  \label{upm-sequentialization}
  The class $\mathcal{UPM}$ of graphs equipped with an unique perfect matching
  is inductively generated as follows:
  \begin{itemize}
  \item The empty graph (with the empty matching) is in $\mathcal{UPM}$.
  \item The disjoint union of two non-empty members of $\mathcal{UPM}$ is in
    $\mathcal{UPM}$.
  \item Let $(G = (V,E), M \subseteq E) \in \mathcal{UPM}$ and $(G' = (V',E'),
    M' \subseteq E') \in \mathcal{UPM}$, with $V$ and $V'$ disjoint. Let $U
    \subseteq V$, $U' \subseteq V'$ such that $U \neq \emptyset$ (resp.~$U' \neq
    \emptyset$) unless $G$ (resp.~$G'$) is the empty graph, and let $x, x'$ be
    two fresh vertices not in $V$ nor $V'$. Then $(G'' = (V'', E''), M''
    \subseteq E'') \in \mathcal{UPM}$, where
    \begin{itemize}
    \item $V'' = V \cup V' \cup \{x,x'\}$
    \item $E'' = E \cup E' \cup \{(x,x')\} \cup (U \times \{x\}) \cup (U' \times
      \{x'\})$
    \item $M'' = M \cup M' \cup \{(x,x')\}$
    \end{itemize}
  \end{itemize}
\end{thm}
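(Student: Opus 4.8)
The statement asserts an equality between two classes, so I would prove the two inclusions separately: \emph{soundness} (every pair $(G'',M'')$ produced by the three clauses has $M''$ as its \emph{unique} perfect matching, hence lies in $\mathcal{UPM}$) and \emph{completeness} (conversely every graph equipped with a unique perfect matching is produced by the clauses). Both are inductions — on the derivation for soundness, on $|V(G)|$ for completeness — and the real content is concentrated in two facts already available to us: Berge's lemma (\Cref{berge}), which turns \enquote{second perfect matching} into \enquote{alternating cycle} and back, and Kotzig's theorem (\Cref{kotzig}), which will feed the completeness induction.

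\textbf{Soundness.} The empty case is trivial; for the disjoint-union clause, note that an alternating cycle for $M \cup M'$ is connected and hence lies inside one summand, where by induction hypothesis it cannot exist. For the third clause, one first checks directly that $M''$ is a perfect matching of $G''$: it covers $V$ via $M$, $V'$ via $M'$, and $x,x'$ via the fresh edge $(x,x')$, which is a matching edge disjoint from $V\cup V'$. The key observation is that $(x,x')$ is a \emph{bridge} of $G''$: deleting it separates $V\cup\{x\}$ from $V'\cup\{x'\}$, because by construction every other edge at $x$ (resp.\ $x'$) leads into $V$ (resp.\ $V'$). Hence no cycle of $G''$ uses $(x,x')$. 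But an alternating cycle passing through $x$ (resp.\ $x'$) would be forced to use the unique $M''$-edge at that vertex, namely $(x,x')$; so no alternating cycle for $M''$ meets $\{x,x'\}$. Such a cycle therefore uses only edges of $E\cup E'$, and being connected it lies inside $G$ or inside $G'$, where it would be an alternating cycle for $M$ or $M'$ — contradicting the induction hypothesis. By \Cref{berge}, $M''$ is the unique perfect matching of $G''$.

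\textbf{Completeness.} Induct on $|V(G)|$. The empty graph is the first clause. If $G$ is nonempty and disconnected, write $G = G_1 \sqcup G_2$ with both parts nonempty; then $M$ restricts to perfect matchings $M_i$ of $G_i$, and each $M_i$ is unique since a rival perfect matching of one $G_i$ would recombine with the other $M_j$ into a rival perfect matching of $G$. The induction hypothesis and the second clause then apply. If $G$ is connected and nonempty, Kotzig's theorem gives a bridge $(x,x')\in M$; deleting it splits $G$ into exactly two components with vertex sets $V_x\ni x$ and $V_{x'}\ni x'$. Put $G_1 = G[V_x\setminus\{x\}]$, $G_2 = G[V_{x'}\setminus\{x'\}]$, let $U$ (resp.\ $U'$) be the neighbours of $x$ (resp.\ $x'$) other than $x'$ (resp.\ $x$), and $M_i = M\cap E(G_i)$. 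Since $(x,x')$ is the only edge across the cut, every edge of $G$ is internal to some $G_i$, equal to $(x,x')$, or joins $x$ to $U$ or $x'$ to $U'$; and since $x$ is matched only to $x'$, every other vertex of $G_i$ is matched within $G_i$, so $M_i$ is perfect on $G_i$ and $M = M_1\cup M_2\cup\{(x,x')\}$. Uniqueness of each $M_i$ follows as before, $U=\emptyset$ forces $V_x=\{x\}$ and hence $G_1$ empty (so the side condition of the clause is respected), and each $G_i$ has strictly fewer vertices; thus $G$ is produced by the third clause from $(G_1,M_1)$ and $(G_2,M_2)$.

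\textbf{Expected obstacle.} The only non-routine ingredient is Kotzig's theorem, which is exactly what makes the connected case of completeness go through: it guarantees that a connected graph with a unique perfect matching always has a \emph{matching} bridge to peel off. Everything surrounding it is bookkeeping — in particular, checking in the connected case that the bridge decomposition reproduces the edge set, the matching, and the emptiness side-conditions of the third clause exactly, including the degenerate cases where one side reduces to a single vertex. I would budget most of the care for that verification and for the soundness argument's claim that an alternating cycle avoiding a matching bridge cannot touch its endpoints.
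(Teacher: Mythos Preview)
Your proof is correct and follows exactly the line the paper indicates: the paper does not spell out a proof of this theorem but attributes it to Retor\'e and explicitly notes that \enquote{Kotzig's theorem leads to an inductive characterization}, which is precisely your completeness argument; the soundness direction via \Cref{berge} and the bridge observation is the expected companion.
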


\begin{rem}
  By relaxing the non-emptiness condition on $U$ and $U'$, the disjoint union
  operation becomes unnecessary; this is actually the original
  statement~\cite[Theorem 1]{retore_handsome_2003}. Our motivation for this
  change is to get a good fit with Theorem~\ref{graphification-seq-bijection}:
  we want the disjoint union of graphs to correspond to the Mix rule on proof
  nets.
\end{rem}

The inspiration for the above theorem comes from linear logic: it is a
graph-theoretic version of the sequentialization theorems for proof nets, with
Kotzig's theorem being analogous to the \enquote{splitting lemmas} which appear
in various proofs of sequentialization. \Cref{sec:around-seq} is dedicated to
investigating this connection further.

\subsection{Proof structures, proof nets and the correctness criterion}%
\label{sec:proof-structures}

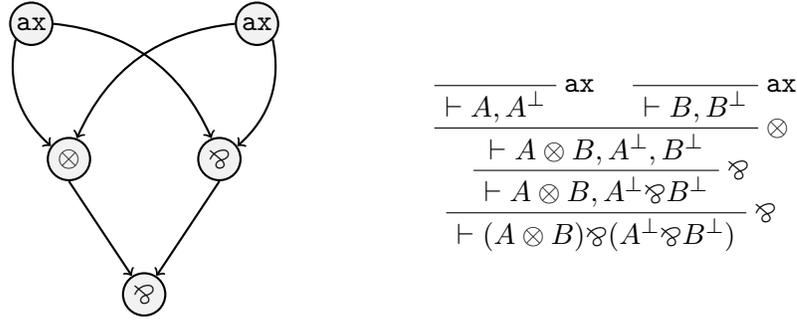
\begin{figure}
  \centering
  \begin{subfigure}{6cm}
    \centering
    \begin{tikzpicture}
      \node[bigvertex] (Ax1) at (1.5,3.6) {$\mathtt{ax}$};
      \node[bigvertex] (Ax2) at (4.5,3.6) {$\mathtt{ax}$};
      \node[bigvertex] (T) at (2,1.8) {$\otimes$};
      \draw[black, thick, ->] (Ax1.south west) to [bend right] (T);
      \draw[black, thick, ->] (Ax2.west) to [bend right] (T);

      \node[bigvertex] (P1) at (4,1.8) {$\bindnasrepma$};
      \draw[black, thick, ->] (Ax1.east) to [bend left] (P1);
      \draw[black, thick, ->] (Ax2.south east) to [bend left] (P1);

      \node[bigvertex] (P2) at (3,0) {$\bindnasrepma$};
      \draw[black, thick, ->] (P1.south) -- (P2);
      \draw[black, thick, ->] (T.south) -- (P2);
  \end{tikzpicture}
    \end{subfigure}\qquad\begin{subfigure}{5cm}
      \centering
      \begin{prooftree}
        \AxiomC{}
        \RightLabel{$\mathtt{ax}$}
        \UnaryInfC{$\vdash A, A^\perp$}
        \AxiomC{}
        \RightLabel{$\mathtt{ax}$}
        \UnaryInfC{$\vdash B, B^\perp$}
        \RightLabel{$\otimes$}
        \BinaryInfC{$\vdash A \otimes B, A^\perp, B^\perp$}
        \RightLabel{$\bindnasrepma$}
        \UnaryInfC{$\vdash A \otimes B, A^\perp \bindnasrepma B^\perp$}
        \RightLabel{$\bindnasrepma$}
        \UnaryInfC{$\vdash (A \otimes B) \bindnasrepma (A^\perp \bindnasrepma B^\perp)$}
      \end{prooftree}
    \end{subfigure}
    \caption{A proof net (left) and its sequentialization (right), written as a
      sequent calculus proof. Edges are usually labeled by the MLL formulae
      appearing in the sequentialization; since we focus on the combinatorics of
      proof structures and not on their logical meaning, we omit them here.}%
  \label{fig:proof-structure}
\end{figure}

\begin{figure}
  \centering
  \[ \frac{}{\vdash A, A^\bot} \text{(\texttt{ax}-rule)}
    \quad
    \frac{\vdash \Gamma, A \quad \vdash B, \Delta}{\vdash \Gamma, A \otimes B,
      \Delta}
    \text{($\otimes$-rule)}
    \quad
    \frac{\vdash \Gamma, A, B}{\vdash \Gamma, A \bindnasrepma B}
    \text{($\bindnasrepma$-rule)}
    \quad
    \frac{\vdash \Gamma \quad \vdash \Delta}{\vdash \Gamma, \Delta}
    \text{(Mix rule)}
  \]
  \caption{Rules for the MLL+Mix sequent calculus; note the correspondence with
    Definition~\ref{def-proof-nets}.}%
  \label{fig:sequent}
\end{figure}

A proof structure is some kind of graph-like object with the precise definition
varying in the literature (we will come back to this point in
Remark~\ref{rem-ps-def}). Since our aim is to apply results from graph theory,
it will be helpful to commit to a representation of proof structures as graphs.

We write $\deg^-$ for the indegree and $\deg^+$ for the outdegree of a vertex.

\begin{defi}%
  \label{def-proof-structure}
  A \emph{proof structure} is a non-empty directed acyclic multigraph $(V,A)$
  with a labeling of the vertices $l : V \to \{\mathtt{ax}, \otimes,
  \bindnasrepma\}$ such that, for $v \in V$:
  \begin{itemize}
  \item if $l(v) = \mathtt{ax}$, then $\deg^-(v) = 0$ and $\deg^+(v) \leq 2$,
  \item if $l(v) \in \{\otimes, \bindnasrepma\}$, then $\deg^-(v) = 2$ and
    $\deg^+(v) \leq 1$.
  \end{itemize}
  Vertices of a proof structure will also be called \emph{links}. A
  \emph{terminal link} is a link with outdegree~0. A \emph{sub-proof structure}
  is a vertex-induced subgraph which is a proof structure.
\end{defi}
\begin{rem}
  It is customary to add \enquote{dangling outgoing edges} from the terminal
  links and to consider them to be the \emph{conclusions} of the proof net. See
  Definition~\ref{def-proof-structure-conclusions} and Remark~\ref{rem-ps-def}.
\end{rem}

\begin{defi}%
  \label{def-proof-nets}
  The set of \emph{MLL proof nets} is the subset of proof structures inductively
  generated by the following rules:
  \begin{itemize}
  \item \textbf{$\mathtt{ax}$-rule}: a proof structure with a single
    $\mathtt{ax}$-link is a proof net.
  \item \textbf{$\otimes$-rule}: if $N$ and $N'$ are proof nets, $u$ is a link
    of $N$ and $v$ is a link of $N'$, then taking the disjoint union of $N$ and
    $N'$, adding a new $\otimes$-link $w$, an edge from $u$ to $w$ and an edge
    from $v$ to $w$ gives a proof net, as long as the resulting graph is a proof
    structure (i.e., the degree constraints are satisfied).
  \item \textbf{$\bindnasrepma$-rule}: if $N$ is a proof net and $u,v$ are links
    of $N$, then adding a new $\bindnasrepma$-link $w$, an edge from $u$ to $w$
    and an edge from $v$ to $w$ gives a proof net, with the same proviso as
    above.
  \end{itemize}
  The set of \emph{MLL+Mix proof nets} is inductively generated by the above
  rules together with the \textbf{Mix rule}: if $N$ and $N'$ are proof nets,
  their disjoint union is a proof net.

  A proof structure is said to be \emph{correct} if it is a MLL+Mix proof net.
\end{defi}

\begin{figure}
  \centering
      \begin{tikzpicture}
      \node[bigvertex] (Ax1) at (-7.5,3.6) {$\mathtt{ax}$};
      \node[bigvertex] (Ax2) at (-4.5,3.6) {$\mathtt{ax}$};
      \node[bigvertex] (T) at (-7,1.8) {$\otimes$};
      \draw[black, thick, ->] (Ax1.south west) to [bend right] (T);
      \draw[black, thick, ->] (Ax2.west) to [bend right] (T);

      \node[bigvertex] (Ax1) at (-3,3.6) {$\mathtt{ax}$};
      \node[bigvertex] (Ax2) at (0,3.6) {$\mathtt{ax}$};
      \node[bigvertex] (T) at (-2.5,1.8) {$\otimes$};
      \draw[black, thick, ->] (Ax1.south west) to [bend right] (T);
      \draw[black, thick, ->] (Ax2.west) to [bend right] (T);

      \node[bigvertex] (P1) at (-0.5,1.8) {$\bindnasrepma$};
      \draw[black, thick, ->] (Ax1.east) to [bend left] (P1);
      \draw[black, thick, ->] (Ax2.south east) to [bend left] (P1);

      \node[bigvertex] (Ax1) at (1.5,3.6) {$\mathtt{ax}$};
      \node[bigvertex] (Ax2) at (4.5,3.6) {$\mathtt{ax}$};
      \node[bigvertex] (T) at (2,1.8) {$\otimes$};
      \draw[black, thick, ->] (Ax1.south west) to [bend right] (T);
      \draw[black, thick, ->] (Ax2.west) to [bend right] (T);

      \node[bigvertex] (P1) at (4,1.8) {$\bindnasrepma$};
      \draw[black, thick, ->] (Ax1.east) to [bend left] (P1);
      \draw[black, thick, ->] (Ax2.south east) to [bend left] (P1);

      \node[bigvertex] (P2) at (3,0) {$\bindnasrepma$};
      \draw[black, thick, ->] (P1.south) -- (P2);
      \draw[black, thick, ->] (T.south) -- (P2);
  \end{tikzpicture}
  \caption{Two successive applications of the $\bindnasrepma$-rule to obtain the
    proof net of Figure~\ref{fig:proof-structure} at the end; compare with the two
    bottom inferences of the sequent calculus proof of
    Figure~\ref{fig:proof-structure}. The leftmost proof net can be obtained by
    invoking the $\mathtt{ax}$-rule to create two proof nets, then combining
    them with a $\otimes$-rule.}%
  \label{fig:inductive}
\end{figure}
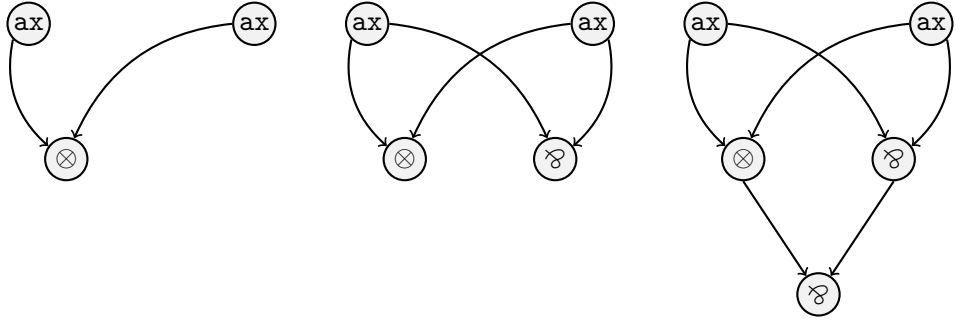

\begin{rem}
  As with any inductively defined set, membership proofs for the set of MLL
  (resp.\ MLL+Mix) proof nets may be presented as inductive derivation trees,
  which are isomorphic to the usual \emph{sequent calculus proofs} of MLL
  (resp.\ MLL+Mix): see Figure~\ref{fig:proof-structure} for an example, and
  Figure~\ref{fig:sequent} for the inference rules of the sequent calculus. An example
  of inductive construction presented directly on proof nets is given in
  Figure~\ref{fig:inductive}.
\end{rem}

\begin{rem}
  The proof structures and proof nets defined here are \emph{cut-free}. This
  restriction is without loss of generality, since a cut link has exactly the
  same behavior as a terminal $\otimes$-link with respect to correctness and
  sequentialization.
\end{rem}

To tackle the problem of correctness, it is useful to have non-inductive
characterizations of proof nets, called \emph{correctness criteria}, at our
disposal. Many of them are formulated using the notion of \emph{paired graphs}.
We will state a criterion first discovered by Danos and Regnier for
MLL~\cite{danos_structure_1989} and extended to MLL+Mix by Fleury and
Retoré~\cite{fleury_mix_1994}.

\begin{defi}%
  \label{def-paired-graph}
  A \emph{paired graph} consists of an undirected graph $G = (V, E)$ and a set
  $\mathcal{P}$ of unordered pairs of edges such that:
  \begin{itemize}
  \item if $\{e, f\} \in \mathcal{P}$, then $e$ and $f$ have a vertex in common;
  \item the pairs are disjoint: if $p, p' \in \mathcal{P}$ and $p \neq p'$, then
    $p \cap p' = \emptyset$.
  \end{itemize}
  When $\{e, f\} \in \mathcal{P}$, the edges $e$ and $f$ are said to be
  \emph{paired}.

  A \emph{switching} $S$ is a set of edges containing exactly one from every
  pair in $\mathcal{P}$. The \emph{switching graph} for $S$ is the spanning
  subgraph $(V, E \setminus(\bigcup{P} \setminus S))$ of $G$, where
  $\bigcup{\mathcal{P}}$ is the union of all pairs in $\mathcal{P}$. A
  \emph{switching path (resp.\ cycle)} is a path (resp.\ cycle) which intersects
  each pair of $\mathcal{P}$ at most once.
\end{defi}

\begin{rem}
  Equivalently, switching cycles are cycles which exist in some switching graph.
\end{rem}

\begin{defi}%
  \label{def-correctness-graph}
  Let $\pi$ be a proof structure. Its \emph{correctness graph} $C(\pi)$ is the
  paired graph obtained by forgetting the directions of the edges and the labels
  of the vertices in $\pi$, and pairing together two edges when their
  targets\footnote{That is, the targets of the directed edges in $\pi$ they come
    from.} are the same $\bindnasrepma$-link.

  A \emph{switching path (resp.\ cycle)} in $\pi$ is a sequence of edges of
  $\pi$ whose image in $C(\pi)$ is a switching path (resp.\ cycle).
\end{defi}

Examples of switchings graphs of a correctness graph are given in Figure~\ref{fig:switching}.

\begin{thm}[Danos--Regnier correctness criterion]
  \label{danos-regnier}
  A proof structure $\pi$ is a MLL (resp.\ MLL+Mix) proof net if and only if all
  the switching graphs of $C(\pi)$ are trees (resp.\ forests).
\end{thm}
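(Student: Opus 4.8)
The plan is to prove the two directions separately, running the MLL and MLL+Mix cases in parallel: the only difference between them is \enquote{tree} versus \enquote{forest}, which is exactly the presence or absence of a connectedness requirement on switching graphs (equivalently, on $\pi$ itself).

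For the \enquote{only if} direction --- every proof net satisfies the criterion --- I would argue by induction on the inductive generation of proof nets. For the $\mathtt{ax}$-rule, $C(\pi)$ is a single vertex, which is both a tree and a forest. For the $\otimes$-rule, $C(\pi)$ is obtained from $C(N)$ and $C(N')$ by adding a vertex $w$ together with two \emph{unpaired} edges, one into $C(N)$ and one into $C(N')$; any switching of $C(\pi)$ restricts to switchings of $C(N)$ and of $C(N')$, and the corresponding switching graph is the disjoint union of their switching graphs with $w$ and its two edges added back --- an operation that preserves \enquote{forest} and, because it reconnects the two pieces into one, also preserves \enquote{tree}. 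For the $\bindnasrepma$-rule the two new edges into $w$ form a pair, so every switching keeps exactly one of them and $w$ is a leaf of every switching graph; adding a leaf preserves both trees and forests. Finally the Mix rule is disjoint union, and a disjoint union of forests is a forest but in general not a tree --- which is precisely why Mix matches the relaxation from \enquote{tree} to \enquote{forest}. (That all these constructions remain proof structures is built into the statement of the rules.)

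For the \enquote{if} direction --- sequentialization --- I would induct on the number of links of $\pi$, assuming all switching graphs of $C(\pi)$ are forests (resp.\ trees). If $\pi$ is disconnected then so is every switching graph, so in the MLL case this is impossible; in the MLL+Mix case the switching graphs of a connected component $\pi_i$ are exactly the connected components of the switching graphs of $\pi$, hence forests, so by induction each $\pi_i$ is a proof net and $\pi$ is rebuilt by the Mix rule. So assume $\pi$ connected; if it is a single link it is a terminal $\mathtt{ax}$-link (base case), and otherwise it has no isolated vertex, hence no terminal $\mathtt{ax}$-link, while as a finite DAG it has at least one terminal link, necessarily a $\otimes$- or $\bindnasrepma$-link. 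If some terminal link $w$ is a $\bindnasrepma$-link with premises $e,f$, then $\pi' = \pi \setminus \{w\}$ is a proof structure and every switching graph of $C(\pi')$ is a switching graph of $C(\pi)$ with the leaf $w$ deleted (only one of the paired edges $e,f$ survives any switching), hence still a forest (resp.\ tree); by induction $\pi'$ is a proof net, and the $\bindnasrepma$-rule rebuilds $\pi$. The remaining case is that \emph{every} terminal link is a $\otimes$-link, and here I would invoke the splitting $\otimes$-link lemma: some terminal $\otimes$-link $w$, with premises coming from links $u$ and $v$, is such that deleting $w$ splits $\pi$ into two sub-proof-structures, one containing $u$ and one containing $v$. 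Granting this, each piece satisfies the criterion (a switching graph of $\pi$ is the disjoint union of a switching graph of each piece with $w$ and its two edges added back, which is a forest, resp.\ tree, iff both pieces' switching graphs are), so by induction both pieces are proof nets and the $\otimes$-rule recovers $\pi$.

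The whole weight of the theorem rests on the splitting $\otimes$-link lemma, which is the one genuinely non-local step. The naive attempt --- take any terminal $\otimes$-link, and if its removal fails to disconnect $\pi$ find a cycle through it --- fails, because the reconnecting path may be forced through \emph{both} premises of some $\bindnasrepma$-link and so fail to be a switching cycle; one really needs acyclicity of \emph{every} switching, used globally. I would prove the lemma by Girard's empire method: assign to each edge $a$ its \emph{empire}, the largest correct sub-proof-structure having $a$ as an outgoing edge; show it is well defined and itself satisfies the criterion; then show that a terminal $\otimes$-link that is maximal for the preorder \enquote{$w \preceq w'$ when the empire of a premise of $w$ is contained in that of a premise of $w'$} must be splitting, the forest condition being what forbids the empires of its two premises from overlapping and forces them, together with $w$, to exhaust $\pi$. (Danos--Regnier's original argument, or an Euler-characteristic-style count on switching graphs, are alternative routes to the same lemma.) Everything else in the argument is routine bookkeeping.
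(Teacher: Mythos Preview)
Your argument is correct and follows the classical route: an easy induction for the \enquote{only if} direction, and for the \enquote{if} direction an induction on the number of links whose only nontrivial step is the splitting $\otimes$-link lemma, which you propose to establish via Girard's empires. This is the standard proof-theoretic approach and it works as you describe.

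The paper, however, takes a deliberately different route for the MLL+Mix case, and this is in fact one of its advertised contributions (see the discussion opening \Cref{sec:around-seq} and the paragraph closing \Cref{sec:graphification}). Rather than proving a splitting lemma internally to proof nets, the paper translates a proof structure $\pi$ into its \emph{graphification} $(G,M)$, shows that Danos--Regnier acyclicity of $\pi$ is equivalent to uniqueness of the perfect matching $M$ (Proposition~\ref{graphification-correctness}), and then invokes Retor\'e's sequentialization for unique perfect matchings (Theorem~\ref{upm-sequentialization}), itself a corollary of Kotzig's theorem (Theorem~\ref{kotzig}). The bijection of Theorem~\ref{graphification-seq-bijection} transports the resulting sequentialization back to $\pi$. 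In this approach, Kotzig's theorem --- a bridge in the matching exists whenever the matching is unique --- plays exactly the role of your splitting $\otimes$ lemma, with Lemma~\ref{lemma-graphification-conclusion} guaranteeing that the bridge corresponds to a terminal, splitting link. What your approach buys is self-containment: no perfect-matching machinery is needed. What the paper's approach buys is a unification: the splitting phenomenon is revealed as an instance of a classical 1959 graph-theoretic result rather than something bespoke to linear logic, which is precisely the thesis the paper is defending. Note also that the paper's proof covers only the MLL+Mix (forest) case explicitly; recovering the MLL (tree) case would require the Euler--Poincar\'e counting argument alluded to in Remark~\ref{rem-mll-nomix}, whereas your argument handles both cases uniformly.
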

\begin{rem}
  Equivalently, $\pi$ is a MLL+Mix proof net if and only if it contains no
  switching cycle.
\end{rem}

\begin{figure}
  \centering
  \begin{tikzpicture}
      \node[bigvertex] (Ax1) at (1.5,3.6) {$\mathtt{ax}$};
      \node[bigvertex] (Ax2) at (4.5,3.6) {$\mathtt{ax}$};
      \node[bigvertex] (T) at (2,1.8) {$\otimes$};
      \draw[black, thick, ->] (Ax1.south west) to [bend right] (T);
      \draw[black, thick, ->] (Ax2.west) to [bend right] (T);

      \node[bigvertex] (P1) at (4,1.8) {$\bindnasrepma$};
      \draw[black, thick, ->] (Ax2.south east) to [bend left] (P1);

      \node[bigvertex] (P2) at (3,0) {$\bindnasrepma$};
      \draw[black, thick, ->] (T.south) -- (P2);
    \end{tikzpicture}
    \qquad\qquad
    \begin{tikzpicture}
      \node[bigvertex] (Ax1) at (1.5,3.6) {$\mathtt{ax}$};
      \node[bigvertex] (Ax2) at (4.5,3.6) {$\mathtt{ax}$};
      \node[bigvertex] (T) at (2,1.8) {$\otimes$};
      \draw[black, thick, ->] (Ax1.south west) to [bend right] (T);
      \draw[black, thick, ->] (Ax2.west) to [bend right] (T);

      \node[bigvertex] (P1) at (4,1.8) {$\bindnasrepma$};
      \draw[black, thick, ->] (Ax2.south east) to [bend left] (P1);

      \node[bigvertex] (P2) at (3,0) {$\bindnasrepma$};
      \draw[black, thick, ->] (P1.south) -- (P2);
    \end{tikzpicture}

    \caption{Two switching graphs out of four possibilities for the proof structure of
      Figure~\ref{fig:proof-structure}.}%
  \label{fig:switching}
\end{figure}
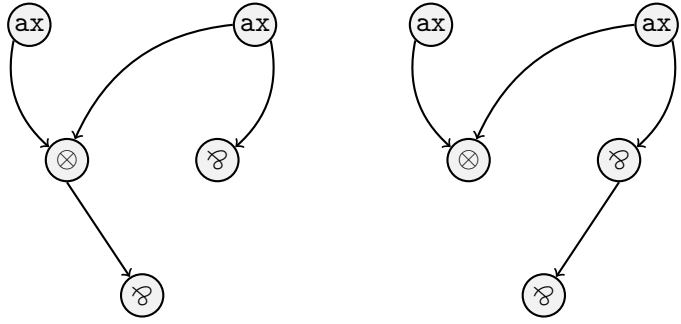

The above is usually called a \emph{sequentialization theorem}: it means that a
proof structure which satisfies the correctness criterion admits a sequent
calculus derivation.

The analogy with Theorem~\ref{upm-sequentialization} is that proof nets are to
proof structures what \emph{unique} perfect matchings are to perfect matchings.
The next section is dedicated to formalizing this analogy into an equivalence.

\section{An equivalence through mutual reductions}%
\label{sec:equivalence}

We will now see how to turn a proof structure into a graph equipped with a
perfect matching, in such a way that switching cycles become alternating cycles,
and vice versa. Such a translation from proof structures to perfect matchings
was first proposed by Retoré~\cite{retore_handsome_2003}, under the name of
\emph{RB-graphs}. After recalling the definition of RB-graphs and their
properties in \Cref{sec:rb-graph}, we propose our own translation in the
converse direction --- which we call the \emph{proofification}
construction --- in \Cref{sec:proofification}.

Thus, this section sets up the reductions (in the sense of complexity theory)
that will be exploited in \Cref{sec:complexity}. But further developments in
\Cref{sec:around-seq} and \Cref{sec:kingdom-ordering} will require the
introduction of a new translation (\emph{graphification}) from proofs to graphs.

\begin{rem}%
  \label{rem-unclear}
  The nature of the object corresponding to a matching edge in a proof structure
  will vary depending on the translation considered: for RB-graphs, they
  correspond to edges or terminal links, whereas in the case of proofifications,
  they are translated into $\otimes$-links. (And in the graphifications of
  \Cref{sec:graphification}, they correspond to links.)

  Thus, by taking the proofification of a RB-graph of a proof structure, one
  gets a different proof structure, with the edges of the former being sent to
  $\otimes$-links of the latter. It is unclear whether this transformation has
  any meaning in terms of linear logic; in particular it does not preserve
  correctness for MLL without Mix.
\end{rem}

\subsection{From proof structures to perfect matchings:
  Retoré's RB-graphs}%
\label{sec:rb-graph}

To define RB-graphs, it is more convenient to start from a slightly altered
definition of proof structures.

\begin{defi}%
  \label{def-proof-structure-conclusions}
  A \emph{proof structure with conclusions} is a non-empty directed acyclic
  multigraph $(V,A)$ with a \emph{partial} labeling of the vertices $l : V
  \rightharpoonup \{\mathtt{ax}, \otimes, \bindnasrepma\}$ such that, for $v \in
  V$:
  \begin{itemize}
  \item if $l(v) = \mathtt{ax}$, then $\deg^-(v) = 0$ and $\deg^+(v) = 2$;
  \item if $l(v) \in \{\otimes, \bindnasrepma\}$, then $\deg^-(v) = 2$ and
    $\deg^+(v) = 1$;
  \item else, $v$ is unlabeled, and then $\deg^-(v) = 1$ and $\deg^+(v) = 0$.
  \end{itemize}
  In the latter case, $v$ is called a \emph{conclusion vertex} and its unique
  incoming edge is called a \emph{conclusion edge}.
\end{defi}

Compared with Definition~\ref{def-proof-structure}, the bounds on the outdegree
have become equalities, while a new kind of vertex has been added. The idea is
that, when the inequality on the outdegree is strict, there are
\enquote{missing} outgoing edges, which are materialized here as conclusion
edges. Yet the object being manipulated is still fundamentally the same; indeed,
the following is immediate (see Figure~\ref{fig:pn-conclusions} for an example):
\begin{prop}%
  \label{pn-conclusions-bijection}
  Given a proof structure with conclusions, the subgraph induced by the labeled
  vertices is a proof structure according to
  Definition~\ref{def-proof-structure}. This correspondence is bijective:
  conversely, there is a unique way to add unlabeled conclusions to a proof
  structure.
\end{prop}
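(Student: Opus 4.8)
The plan is to make the informal remark preceding the statement precise: deleting the unlabeled vertices of a proof structure with conclusions deletes exactly the conclusion edges, and the discrepancy between the degree constraints of Definition~\ref{def-proof-structure-conclusions} and those of Definition~\ref{def-proof-structure} is precisely the slack absorbed by those edges. Everything then reduces to bookkeeping with (out)degrees.

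First I would isolate the one structural observation that drives the whole argument: in a proof structure with conclusions, \emph{every edge has a labeled source}. Indeed, a conclusion vertex has $\deg^+ = 0$, hence is the source of no edge, while every other vertex is labeled. Consequently, writing $V_0$ for the set of labeled vertices, the vertex-induced subgraph on $V_0$ is obtained from $(V,A)$ by deleting the unlabeled vertices together with \emph{exactly} the conclusion edges (whose sources lie in $V_0$ but whose targets do not); no edge between two labeled vertices is lost. With this in hand the first assertion is immediate: the induced subgraph inherits acyclicity and the multigraph structure; it is non-empty because a non-empty DAG has a source, a source has indegree $0$, hence is not a conclusion vertex, hence is labeled; and the degree conditions follow by restriction --- an $\mathtt{ax}$-link keeps its indegree $0$ and has at most $2$ surviving outgoing edges, while a $\otimes$- or $\bindnasrepma$-link keeps both incoming edges (their sources are labeled), so its indegree stays $2$, and keeps at most its one outgoing edge. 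These are exactly the constraints of Definition~\ref{def-proof-structure}.

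For bijectivity I would first exhibit the inverse construction: given a proof structure $\pi = (V,A)$, attach to each vertex $v$ a number of fresh unlabeled vertices, each reached by a single new edge from $v$, equal to $2 - \deg^+(v)$ if $l(v) = \mathtt{ax}$ and to $1 - \deg^+(v)$ otherwise. This is well defined since $\deg^+(v) \le 2$ (resp.\ $\le 1$) in $\pi$, and it visibly yields a proof structure with conclusions whose labeled part is $\pi$. Uniqueness --- up to renaming the conclusion vertices, which is the only freedom left --- then follows again from the "labeled source" observation: in any proof structure with conclusions restricting to $\pi$, the labeled vertices and the edges between them are forced to be $V$ and $A$; every other edge emanates from $V$ and lands on a distinct conclusion vertex (indegree exactly $1$), no edge leaves a conclusion vertex (outdegree $0$), and the number of such edges at a vertex $v$ is pinned down by the requirement that $\deg^+(v)$ be $2$ or $1$ according to $l(v)$. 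Hence the construction above is the only possibility.

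I do not expect any genuine obstacle here: the statement really is "immediate", as the authors say. The single point that deserves to be stated cleanly --- and the only place where a careless argument could go wrong --- is the observation that every edge has a labeled source, since this is what simultaneously guarantees that passing to the labeled-induced subgraph preserves the indegree-$2$ condition on $\otimes$- and $\bindnasrepma$-links and that it removes nothing but conclusion edges.
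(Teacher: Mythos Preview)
Your argument is correct and is exactly the routine bookkeeping that the paper has in mind; the paper itself does not give a proof, simply declaring the proposition \enquote{immediate} and referring to Figure~\ref{fig:pn-conclusions}. Your isolation of the key observation (every edge has a labeled source) is the right way to make that immediacy precise.
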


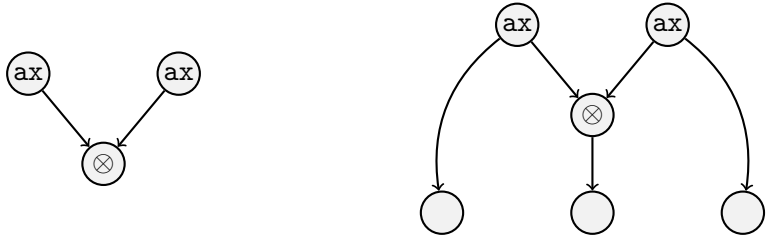
\begin{figure}
  \centering
    \begin{subfigure}{4cm}
    \centering
    \begin{tikzpicture}
    \node[bigvertex] (L) at (1,3.2) {$\mathtt{ax}$};
    \node[bigvertex] (R) at (3,3.2) {$\mathtt{ax}$};
    \node[bigvertex] (N) at (2,2) {$\otimes$};

    \draw[black, thick, ->] (R) -- (N);
    \draw[black, thick, ->] (L) -- (N);
    \end{tikzpicture}
\end{subfigure}\begin{subfigure}{9cm}
  \centering
  \begin{tikzpicture}
  \node[bigvertex] (L) at (1,3.2) {$\mathtt{ax}$};
  \node[bigvertex] (R) at (3,3.2) {$\mathtt{ax}$};
  \node[bigvertex] (N) at (2,2) {$\otimes$};

  \node[bigvertex] (LL) at (0,0.7) {};
  \node[bigvertex] (RR) at (4,0.7) {};
  \node[bigvertex] (NN) at (2,0.7) {};

  \draw[black, thick, ->] (R) -- (N);
  \draw[black, thick, ->] (L) -- (N);
  \draw[black, thick, ->] (N) -- (NN);
  \draw[black, thick, ->] (L) to [bend right] (LL);
  \draw[black, thick, ->] (R) to [bend left] (RR);
  \end{tikzpicture}
\end{subfigure}
  \caption{An instance of the bijection of
    Proposition~\ref{pn-conclusions-bijection}: the proof structure (according
    to Definition~\ref{def-proof-structure}) on the left corresponds to the proof
    structure with conclusions
    (Definition~\ref{def-proof-structure-conclusions}) on the right.}%
  \label{fig:pn-conclusions}
\end{figure}

\begin{rem}%
  \label{rem-ps-def}
  Here we are confronted with the fact that there is no single canonical
  definition of MLL proof structures (although two given definitions are always
  canonically isomorphic). Depending on the task at hand, different
  combinatorial formalizations of the same object may be more or less
  convenient.
  To define proof nets inductively, it was easier to use proof structures
  without conclusions and rely on the notion of terminal link. This will also
  prove useful for the sequentialization algorithm of
  \Cref{sec:sequentialization}. But the conclusion edges are logically
  significant\footnote{An annoying point, however, is that the conclusion
    \emph{vertices} have no significance, so sometimes proof structures are
    defined with \enquote{dangling edges} with no target. However, dangling
    edges drag us out of the world of graphs, and into hypergraphs --- indeed,
    they are hyperedges of arity 1. Proof structures are also often defined as
    the dual hypergraph: links are hyperedges, and formulas are vertices. For
    our purposes, we have chosen to keep proof structures as actual graphs, to
    make the connections with graph theory clearer.}: they correspond to the
  formulas in the sequent being proven.
\end{rem}

Starting from this, we can now introduce RB-graphs. The definition we use is
taken from Straßburger's lecture notes at
ESSLLI'06~\cite{strasburger_proof_2006}, and differs slightly from Retoré's
original one (edges are handled more uniformly in Straßburger's version).

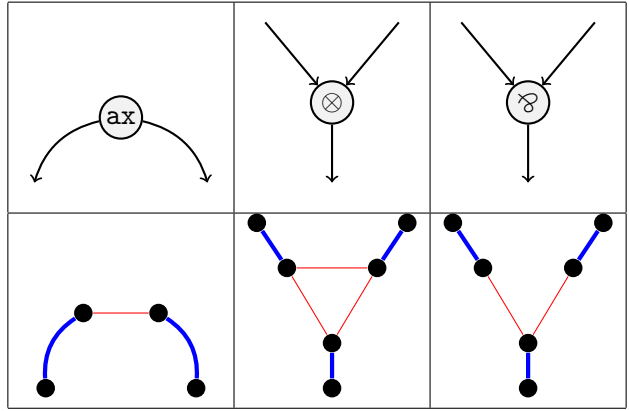
\begin{figure}
  \centering
    \begin{tabular}{|c|c|c|} 
      \hline 
      \begin{tikzpicture}
    \node[bigvertex] (Ax1) at (1.2,1) {$\mathtt{ax}$};
    \node[draw=none] (T) at (0,0) {};
    \node[draw=none] (P1) at (2.4,0) {};

    \draw[black, thick, ->] (Ax1) to [bend left] (P1);
    \draw[black, thick, ->] (Ax1) to [bend right] (T);
      \end{tikzpicture}
      &
      \begin{tikzpicture}
    \node[draw=none] (L) at (0,2.4) {};
    \node[draw=none] (R) at (2,2.4) {};
    \node[bigvertex] (N) at (1,1.2) {$\otimes$};
    \node[draw=none] (C) at (1,0) {};

    \draw[black, thick, ->] (N) -- (C);
    \draw[black, thick, ->] (R) -- (N);
    \draw[black, thick, ->] (L) -- (N);
      \end{tikzpicture}

    &
      \begin{tikzpicture}
    \node[draw=none] (L) at (0,2.4) {};
    \node[draw=none] (R) at (2,2.4) {};
    \node[bigvertex] (N) at (1,1.2) {$\bindnasrepma$};
    \node[draw=none] (C) at (1,0) {};

    \draw[black, thick, ->] (N) -- (C);
    \draw[black, thick, ->] (R) -- (N);
    \draw[black, thick, ->] (L) -- (N);
      \end{tikzpicture}
    \\
      \hline 
      \begin{tikzpicture}
      \node[vertex] (Ax) at (0.5,1) {};
      \node[vertex] (notAx) at (1.5,1) {};

      \node[vertex] (A) at (0,0) {};
      \node[vertex] (notA) at (2,0) {};

      \draw[non matching edge] (Ax) -- (notAx);
      \draw[matching edge] (Ax) to [bend right] (A);
      \draw[matching edge] (notAx) to [bend left] (notA);
      \end{tikzpicture}
    & \begin{tikzpicture}
  \node[vertex] (AtBi) at (10,3) {};
  \node[vertex] (AtBo) at (10.4,2.4) {};
  \draw[matching edge] (AtBi) -- (AtBo);

  \node[vertex] (ApBi) at (12,3) {};
  \node[vertex] (ApBo) at (11.6,2.4) {};
  \draw[matching edge] (ApBi) -- (ApBo);

  \draw[non matching edge] (AtBo) -- (ApBo);

  \node[vertex] (cli) at (11,1.4) {};
  \node[vertex] (clo) at (11,0.8) {};
  \draw[matching edge] (cli) -- (clo);
  \draw[non matching edge] (cli) -- (AtBo);
  \draw[non matching edge] (cli) -- (ApBo);
      \end{tikzpicture}
    & \begin{tikzpicture}
  \node[vertex] (AtBi) at (10,3) {};
  \node[vertex] (AtBo) at (10.4,2.4) {};
  \draw[matching edge] (AtBi) -- (AtBo);

  \node[vertex] (ApBi) at (12,3) {};
  \node[vertex] (ApBo) at (11.6,2.4) {};
  \draw[matching edge] (ApBi) -- (ApBo);

  \node[vertex] (cli) at (11,1.4) {};
  \node[vertex] (clo) at (11,0.8) {};
  \draw[matching edge] (cli) -- (clo);
  \draw[non matching edge] (cli) -- (AtBo);
  \draw[non matching edge] (cli) -- (ApBo);
      \end{tikzpicture}
      \\
    \hline 
  \end{tabular}
  \caption{Translation of proof structures links (top) to RB-graphs
    (bottom).}%
  \label{fig:rb-graph-rules}
\end{figure}

\begin{defiC}[{\cite{retore_handsome_2003,strasburger_proof_2006}}]
  Let $(V,A,l)$ be a proof structure with conclusions. The corresponding
  \emph{RB-graph} is a graph $G$ equipped with a perfect matching $M$ such that:
  \begin{itemize}
  \item $M$ is in bijection with the directed edges $A$;
  \item the non-matching edges of $G$ are derived from the labeling $l : V
    \rightharpoonup \{\mathtt{ax}, \otimes, \bindnasrepma\}$ of the links,
    following the rules of Figure~\ref{fig:rb-graph-rules} (conclusion vertices do not
    induce non-matching edges).
  \end{itemize}
\end{defiC}

\noindent
An example of RB-graph is given in Figure~\ref{fig:rb-graph-example}. The interest of
this translation lies in:
\begin{prop}[implicit in~\cite{retore_handsome_2003}]%
  \label{prop-rb-bijection}
  The switching cycles in a proof structure are in bijection with the alternating
  cycles in its RB-graph.
\end{prop}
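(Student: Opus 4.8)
The plan is to make the bijection fully explicit and then verify it in both directions by a short case analysis on the link labels. Fix notation: in the RB-graph $G$ of $\pi$, write $m_e$ for the matching edge corresponding to a directed edge $e$ of $\pi$, and let $e^+$ (resp.\ $e^-$) be its endpoint lying in the gadget of the source (resp.\ target) of $e$; a conclusion endpoint is incident to $m_e$ and nothing else. Reading off Figure~\ref{fig:rb-graph-rules}, the non-matching edges attached to a link $\ell$ are: a single edge joining the two $e^+$ of its outgoing edges, if $\ell$ is an $\mathtt{ax}$-link; the triangle on $\{a^-,b^-,c^+\}$, if $\ell$ is a $\otimes$-link with inputs $a,b$ and output $c$; and just the two edges $a^-c^+$ and $b^-c^+$ (so with $a^-b^-$ \emph{missing}), if $\ell$ is a $\bindnasrepma$-link. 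The observation that drives everything is that an alternating cycle uses \emph{at most one} non-matching edge inside a single link-gadget: at any of its vertices it uses exactly one matching and one non-matching edge, and in each of the three gadgets any two non-matching edges share a vertex (vacuously for $\mathtt{ax}$; they all meet $c^+$ for $\bindnasrepma$; and in a triangle any two edges meet). Hence an alternating cycle has the cyclic form $m_{b_1}, f_1, m_{b_2}, f_2, \dots, m_{b_k}, f_k$ with the $f_j$ lying in pairwise distinct gadgets and the $b_j$ pairwise distinct edges of $\pi$.

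For the forward map $\Phi$: a switching cycle $\gamma$ visits a cyclic sequence of distinct links, using two incident edges at each. I would check that at each visited link $\ell$ there is a \emph{unique} non-matching edge of the $\ell$-gadget joining the $\ell$-endpoints of the two edges used there. For a $\mathtt{ax}$- or $\otimes$-link, every transition that a cycle can make at $\ell$ is realized by exactly one gadget edge. For a $\bindnasrepma$-link, the only transition not realized is ``enter via one input, leave via the other input'', which is precisely the transition that the pairing at $\ell$ forbids to a switching cycle; so it is exactly switchingness of $\gamma$ that makes the required edge exist. Taking $\Phi(\gamma)$ to be the $m_e$ for $e \in \gamma$ together with these gadget edges and tracing $\gamma$ around, one sees that $\Phi(\gamma)$ is a single closed alternating walk with no repeated vertices, hence an alternating cycle; and $\Phi$ is injective since $\gamma$ is recovered as $\{e : m_e \in \Phi(\gamma)\}$.

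For surjectivity, start from an alternating cycle $C$, written as above as $m_{b_1}, f_1, \dots, m_{b_k}, f_k$ with $f_j$ in the gadget of a link $\ell_j$. Because $f_j$ lies in $\ell_j$'s gadget, both $b_j$ and $b_{j+1}$ are incident to $\ell_j$, so $\gamma := \{b_1,\dots,b_k\}$ is a cycle in $C(\pi)$ and, by the uniqueness just used, $\Phi(\gamma) = C$; it remains to see that $\gamma$ is switching. Suppose some $\{b_j,b_{j+1}\}$ is a pair, i.e.\ $b_j$ and $b_{j+1}$ both target a $\bindnasrepma$-link $\ell'$. If $\ell' = \ell_j$, then $b_j,b_{j+1}$ are the two inputs of $\ell_j$ and $f_j$ would be the absent edge $b_j^-b_{j+1}^-$, impossible. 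Otherwise $b_j,b_{j+1}$ are parallel edges out of $\ell_j$ into $\ell'$, so $C$ reaches both $b_j^-$ and $b_{j+1}^-$ inside the $\ell'$-gadget; but the only non-matching edges there incident to $b_j^-$ or $b_{j+1}^-$ both run to the single vertex $c^+$ (the source-endpoint of $\ell'$'s output), so $C$ would visit $c^+$ twice, contradicting that $C$ is a cycle. Hence $\gamma$ is switching, and $\Phi$ is onto.

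The one genuinely delicate point — and where I expect the write-up to become technical rather than conceptual — is precisely this multigraph case of parallel edges feeding a $\bindnasrepma$-link: there the $\mathtt{ax}$/$\otimes$-side gadget edges do exist, and it is only their interaction with the $\bindnasrepma$-gadget, together with the no-repeated-vertices convention for cycles, that prevents a spurious alternating cycle. Apart from that, the proof is the finite verification over the three link types sketched above, plus the trivial remark that a conclusion endpoint has degree one and so lies on no alternating cycle, mirroring the fact that a switching cycle never turns around at a conclusion.
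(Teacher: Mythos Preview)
The paper does not give a proof where the proposition is stated (\S3.1); it is attributed to Retor\'e. Later, in \S6, the paper offers a \emph{factorized} argument: it introduces the general PM-line graph $L_{PM}(G,T)$ for graphs with forbidden transitions, cites Proposition~\ref{trail-bijection} (compatible closed trails $\leftrightarrow$ alternating cycles), observes that the RB-graph is exactly $L_{PM}(C(\pi),T)$, and then remarks that for correctness graphs of proof structures, compatible closed trails coincide with switching cycles because a $\bindnasrepma$-link has at most one outgoing edge.

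Your proof is correct and takes a genuinely different route: a direct, elementary case analysis on the three link gadgets, with the key local observation that any two non-matching edges in a single gadget share a vertex (so an alternating cycle uses at most one per gadget). This buys you a self-contained argument with no detour through forbidden transitions, at the cost of the ad hoc case split; the paper's \S6 approach is more conceptual and reusable, but leans on an external result and leaves the ``closed trails $=$ cycles for correctness graphs'' step as a remark. Two small points worth tightening in a final write-up: (i) you only check that \emph{consecutive} $b_j,b_{j+1}$ are not paired, but since the $\ell_j$ are pairwise distinct, two edges of a simple cycle sharing a common endpoint must be consecutive, so this is without loss of generality---worth one sentence; (ii) in the parallel-edge case, ``visit $c^+$ twice'' is shorthand for ``$c^+$ would carry two non-matching edges of $C$, contradicting alternation'' (or, when $k=2$, the single $f_2$ would have to equal two distinct edges), and saying it that way avoids any ambiguity about vertex repetition in cycles. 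Your flagging of the multigraph subtlety is exactly right: it is the only place where the argument is not a one-line lookup in Figure~\ref{fig:rb-graph-rules}.
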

\begin{cor}[Retoré's correctness criterion~\cite{retore_handsome_2003}]%
  \label{cor-retore}
  A proof structure satisfies the Danos--Regnier criterion for MLL+Mix if and 
  only if the perfect matching of its RB-graph is unique.
\end{cor}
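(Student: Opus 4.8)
The plan is to obtain the corollary as a direct chaining of equivalences that are already available: the Danos--Regnier criterion in its cycle form, the bijection of Proposition~\ref{prop-rb-bijection}, and Berge's lemma. Concretely, I would argue as follows. By Theorem~\ref{danos-regnier} together with the remark immediately following it, a proof structure $\pi$ satisfies the Danos--Regnier criterion for MLL+Mix if and only if $\pi$ contains no switching cycle. By Proposition~\ref{prop-rb-bijection}, the switching cycles of $\pi$ are in bijection with the alternating cycles of its RB-graph; in particular $\pi$ has no switching cycle if and only if the RB-graph has no alternating cycle for its matching $M$. Finally, since by construction an RB-graph always comes equipped with a perfect matching $M$, Berge's Lemma~\ref{berge} applies: $M$ is the unique perfect matching of the RB-graph if and only if there is no alternating cycle for $M$ (this is exactly the equivalence recorded in the statement of Problem~\ref{prob-ugpm}). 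Concatenating these three equivalences yields the claim.

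The only point that deserves a word of care is that Proposition~\ref{prop-rb-bijection} must be used in both directions: not merely that every switching cycle of $\pi$ gives rise to an alternating cycle of the RB-graph, but also that conversely every alternating cycle of the RB-graph comes from a switching cycle — otherwise one would only get one implication of the corollary. Since Proposition~\ref{prop-rb-bijection} asserts a genuine bijection, both directions are covered, and there is nothing further to check beyond observing that the RB-graph matching is perfect so that Berge's lemma is applicable.

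Accordingly, I expect no real obstacle here: all the substantive combinatorial content has been discharged in Proposition~\ref{prop-rb-bijection} (and, upstream, in the statements of Theorem~\ref{danos-regnier} and Lemma~\ref{berge}), and the corollary is a two-line deduction. If anything, the ``hard part'' is purely presentational — making explicit that ``no switching cycle'', ``no alternating cycle in the RB-graph'', and ``uniqueness of the perfect matching of the RB-graph'' are three ways of saying the same thing once the bijection is in hand.
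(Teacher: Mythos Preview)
Your proposal is correct and matches the paper's approach exactly: the corollary is stated immediately after Proposition~\ref{prop-rb-bijection} with no separate proof, precisely because it follows by the chain of equivalences you spell out (Danos--Regnier for MLL+Mix $\Leftrightarrow$ no switching cycle, Proposition~\ref{prop-rb-bijection}, and Berge's Lemma~\ref{berge}). There is nothing to add.
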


  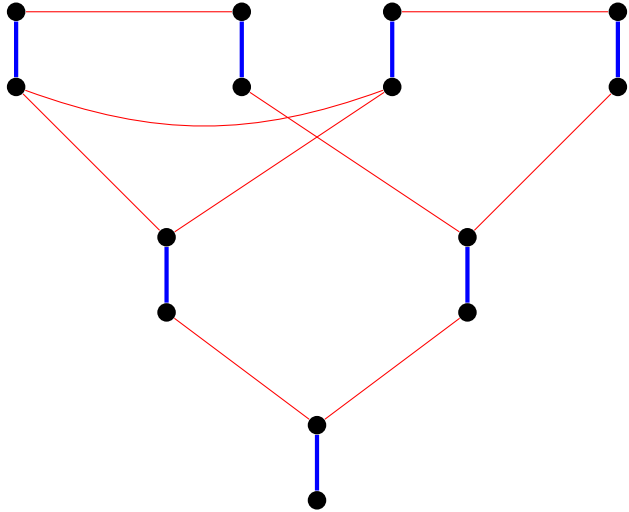
\begin{figure}
    \centering
    \begin{tikzpicture}
      \node[vertex] (A) at (0,6) {};
      \node[vertex] (notA) at (3,6) {};
      \node[vertex] (B) at (5,6) {};
      \node[vertex] (notB) at (8,6) {};

      \node[vertex] (Ax) at (0,7) {};
      \node[vertex] (notAx) at (3,7) {};
      \node[vertex] (Bx) at (5,7) {};
      \node[vertex] (notBx) at (8,7) {};

      \draw[non matching edge] (Ax) -- (notAx);
      \draw[non matching edge] (Bx) -- (notBx);
      \draw[matching edge] (Ax) -- (A);
      \draw[matching edge] (notAx) -- (notA);
      \draw[matching edge] (Bx) -- (B);
      \draw[matching edge] (notBx) -- (notB);

      \node[vertex] (AtBi) at (2,4) {};
      \node[vertex] (AtBo) at (2,3) {};
      \draw[matching edge] (AtBi) -- (AtBo);
      \draw[non matching edge] (AtBi) -- (A);
      \draw[non matching edge] (AtBi) -- (B);
      \draw[non matching edge] (A) to [bend right=20] (B);

      \node[vertex] (ApBi) at (6,4) {};
      \node[vertex] (ApBo) at (6,3) {};
      \draw[matching edge] (ApBi) -- (ApBo);
      \draw[non matching edge] (ApBi) -- (notA);
      \draw[non matching edge] (ApBi) -- (notB);

      \node[vertex] (ccli) at (4,1.5) {};
      \node[vertex] (cclo) at (4,0.5) {};
      \draw[matching edge] (ccli) -- (cclo);
      \draw[non matching edge] (ccli) -- (AtBo);
      \draw[non matching edge] (ccli) -- (ApBo);
    \end{tikzpicture}
    \caption{RB-graph corresponding to the proof net of
      Figure~\ref{fig:proof-structure}.}%
  \label{fig:rb-graph-example}
  \end{figure}

\subsection{From perfect matchings to proof structures}%
\label{sec:proofification}

The translation we present below involves \enquote{$k$-ary
  $\bindnasrepma$-links}. When $k > 1$, these are just binary trees of $k-1$
$\bindnasrepma$-links (correctness is independent of the choice of binary tree:
semantically, this is associativity of $\bindnasrepma$) with $k$ leaves
(incoming edges) and a single root (outgoing edge); the $k=1$ case corresponds
to a single edge and no link.

\begin{defi}
  Let $G = (V,E)$ be a graph and $M$ be a perfect matching of $G$. We define the
  \emph{proofification} of $(G,M)$ as the proof structure $\pi$ built as follows:
  \begin{itemize}
  \item For each non-matching edge $e = (u,v) \in E \setminus M$, we create an
    $\mathtt{ax}$-link $\mathtt{ax}_e$ whose two outgoing edges we will call
    $A_{u,v}$ and $A_{v,u}$.
  \item For each vertex $u \in V$, if $\deg(u) > 1$, we add a $k$-ary
    $\bindnasrepma$-link with $k = \deg(u)-1$, whose incoming edges are the
    $A_{u,v}$ for all neighbors $v$ of $u$ such that $(u,v) \notin M$, and we
    call its outgoing edge $B_u$. If $\deg(u) = 1$, we add an $\mathtt{ax}$-link
    calling one of its outgoing edges $B_u$.
  \item For each matching edge $(u,v) \in M$, we add an $\otimes$-link whose
    incoming edges are $B_u$ and $B_v$. These $\otimes$-links are the terminal
    links of $\pi$.
  \end{itemize}
\end{defi}

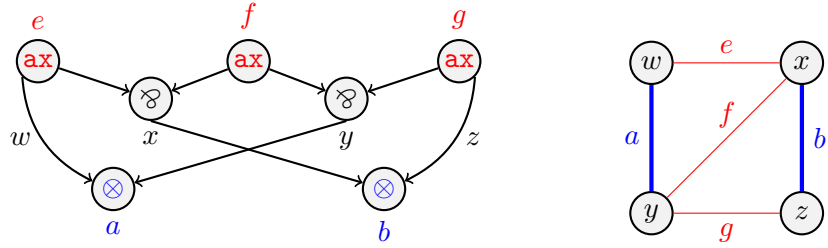
\begin{figure}
  \centering
  \begin{tikzpicture}
    \node[bigvertex, text=red, label=above:{\color{red}$e$}] (Ax1) at (0.5,3.5) {$\mathtt{ax}$};
    \node[bigvertex, text=red, label=above:{\color{red}$f$}] (Ax2) at (3.3,3.5) {$\mathtt{ax}$};
    \node[bigvertex, text=red, label=above:{\color{red}$g$}] (Ax3) at (6.1,3.5) {$\mathtt{ax}$};
    \node[bigvertex, label=below:{$x$}] (P1) at (2,3) {$\bindnasrepma$};
    \node[bigvertex, label=below:{$y$}] (P2) at (4.6,3) {$\bindnasrepma$};
    \node[bigvertex, text=blue, label=below:{\color{blue}$a$}] (T1) at (1.5,1.8) {$\otimes$};
    \node[bigvertex, text=blue, label=below:{\color{blue}$b$}] (T2) at (5.1,1.8) {$\otimes$};

    \draw[black, thick, ->] (Ax1.south west) to [bend right] node[midway, left] {$w$} (T1);
    \draw[black, thick, ->] (Ax3.south east) to [bend left]  node[midway, right] {$z$} (T2);
    \draw[black, thick, ->] (Ax1) -- (P1);
    \draw[black, thick, ->] (Ax2) -- (P1);
    \draw[black, thick, ->] (Ax2) -- (P2);
    \draw[black, thick, ->] (Ax3) -- (P2);
    \draw[black, thick, ->] (P1.south) -- (T2);
    \draw[black, thick, ->] (P2.south) -- (T1);
  \end{tikzpicture}
  \qquad\qquad
  \begin{tikzpicture}
    \node[bigvertex] (w) at (0,2) {$w$};
    \node[bigvertex] (x) at (2,2) {$x$};
    \node[bigvertex] (y) at (0,0) {$y$};
    \node[bigvertex] (z) at (2,0) {$z$};

    \draw [non matching edge] (w) -- node[above] {$e$} ++ (x);
    \draw [non matching edge] (y) -- node[above] {$f$} ++ (x);
    \draw [non matching edge] (y) -- node[below] {$g$} ++ (z);

    \draw [matching edge] (w) -- node[left] {$a$} ++ (y);
    \draw [matching edge] (x) -- node[right] {$b$} ++ (z);
  \end{tikzpicture}
  \caption{The proofification of the graph of Figure~\ref{fig:pm-non-unique}.}%
  \label{fig:proofification-examples}
\end{figure}

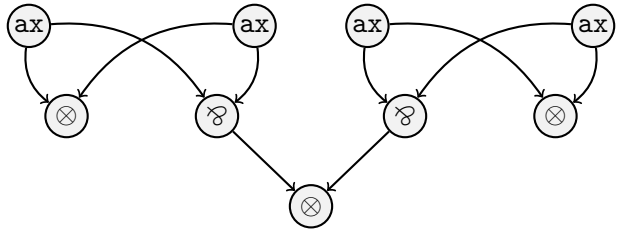
\begin{figure}
  \centering
    \begin{tikzpicture}
    \node[bigvertex] (Ax1) at (1.5,4.4) {$\mathtt{ax}$};
    \node[bigvertex] (Ax2) at (4.5,4.4) {$\mathtt{ax}$};
    \node[bigvertex] (T1) at (2,3.2) {$\otimes$};
    \node[bigvertex] (P1) at (4,3.2) {$\bindnasrepma$};

    \draw[black, thick, ->] (Ax1) to [bend left] (P1);
    \draw[black, thick, ->] (Ax2) to [bend left] (P1);
    \draw[black, thick, ->] (Ax1) to [bend right] (T1);
    \draw[black, thick, ->] (Ax2) to [bend right] (T1);

    \node[bigvertex] (Ax1) at (6,4.4) {$\mathtt{ax}$};
    \node[bigvertex] (Ax2) at (9,4.4) {$\mathtt{ax}$};
    \node[bigvertex] (T2) at (6.5,3.2) {$\bindnasrepma$};
    \node[bigvertex] (P2) at (8.5,3.2) {$\otimes$};

    \draw[black, thick, ->] (Ax1) to [bend left] (P2);
    \draw[black, thick, ->] (Ax2) to [bend left] (P2);
    \draw[black, thick, ->] (Ax1) to [bend right] (T2);
    \draw[black, thick, ->] (Ax2) to [bend right] (T2);

    \node[bigvertex] (X) at (5.25,2) {$\otimes$};
    \draw[black, thick, ->] (P1) -- (X);
    \draw[black, thick, ->] (T2) -- (X);
  \end{tikzpicture}
  \caption{The proofification of the graph in Figure~\ref{fig:pm-unique}. Since the
    perfect matching in Figure~\ref{fig:pm-unique} was unique, we get a MLL+Mix proof
    net. One can check that in this case, it is even correct for MLL.\\
    This proof net will also be used as an example in
    \Cref{sec:kingdom-ordering}.}%
  \label{fig:proofification-ex2}
\end{figure}

\noindent
Examples of proofification are provided in Figure~\ref{fig:proofification-examples}
(annotated figure) and Figure~\ref{fig:proofification-ex2}.

\begin{prop}%
  \label{proofification-cycle}
  Let $G$ be a graph and $M$ be a perfect matching of $G$. The alternating
  cycles for $M$ in $G$ are in bijection with the switching cycles in the proofification
  of $(G,M)$.
\end{prop}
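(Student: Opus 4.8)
The plan is to exhibit an explicit bijection between the edge set of $G$ and the edges of the proofification $\pi$, show it respects the "alternating/switching" structure, and check that it sends cycles to cycles. First I would describe the edge correspondence. In $\pi$, each non-matching edge $e = (u,v) \in E \setminus M$ gives rise to an $\mathtt{ax}$-link $\mathtt{ax}_e$ with two outgoing edges $A_{u,v}$ and $A_{v,u}$; each vertex $u$ gives a $\bindnasrepma$- or $\mathtt{ax}$-link with output $B_u$; each matching edge $(u,v) \in M$ gives an $\otimes$-link receiving $B_u$ and $B_v$. The correctness graph $C(\pi)$ forgets directions and pairs edges with a common $\bindnasrepma$-target. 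I would map a non-matching edge $e=(u,v)$ of $G$ to the two-edge path $A_{u,v} \cdot A_{v,u}$ through $\mathtt{ax}_e$ in $C(\pi)$, and a matching edge $(u,v)$ of $G$ to the two-edge path $B_u \cdot B_v$ through the corresponding $\otimes$-link. Thus an alternating walk in $G$ — which by definition alternates non-matching and matching edges — unfolds in $C(\pi)$ into an alternating concatenation of "$\mathtt{ax}$-bridges" ($A$-edges) and "$\otimes$-bridges" ($B$-edges), with the vertex $B_u$ of $C(\pi)$ sitting between the two, and where the $B_u$ incident to a non-matching edge $(u,v)$ at $u$ is connected to $A_{u,v}$ through the $\bindnasrepma$-link (or directly, when $\deg(u)=1$).

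Next I would verify the switching condition. A switching cycle in $\pi$ is one whose image in $C(\pi)$ uses at most one edge from each $\bindnasrepma$-pair. The only pairs in $C(\pi)$ are the pairs of incoming edges sharing a $\bindnasrepma$-link; at vertex $u$ the link $\bindnasrepma_u$ has incoming edges $\{A_{u,v} : (u,v)\notin M\}$, any two of which are paired. A walk coming from the proofification of an alternating walk in $G$ enters the region around $u$ at most twice — once along the matching edge at $u$ (via $B_u$) and once along a single non-matching edge $(u,v)$ at $u$ (via $A_{u,v}$) — so it touches the fan of $\bindnasrepma_u$-edges at most once, hence crosses each $\bindnasrepma_u$-pair at most once. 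Therefore the image of an alternating cycle of $G$ is automatically a switching cycle. Conversely, given a switching cycle in $\pi$, I would argue that it must traverse each $\mathtt{ax}$-link and each $\otimes$-link it meets "straight through" (entering by one port, leaving by the other — an $\mathtt{ax}$-link has only two edges, and an $\otimes$-link's two incoming edges $B_u,B_v$ are unpaired so both can be used, while its outgoing edge, if present, is a cut-like extra that a cycle need not use since the $\otimes$ is terminal) and that at each vertex-region it passes through $B_u$ and exactly one $A_{u,v}$ by the switching constraint; contracting each $\mathtt{ax}$-bridge and $\otimes$-bridge back to a single edge of $G$ then yields an alternating cycle. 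Being a cycle (no repeated vertices) transfers in both directions because the contraction/expansion is a local homeomorphism of the underlying graphs.

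The main obstacle I anticipate is the careful bookkeeping at vertices of degree $1$ and at the $\bindnasrepma$-links of higher arity. When $\deg(u)=1$, there is no $\bindnasrepma_u$ and $B_u$ is instead an output of a fresh $\mathtt{ax}$-link whose other output is unused, so a cycle simply cannot route through such a $u$ — matching the fact that a degree-$1$ vertex of $G$ lies on no cycle at all; this case must be stated so the bijection's domain and codomain line up. For $\deg(u) > 2$, the $k$-ary $\bindnasrepma$-link is a binary tree of ordinary $\bindnasrepma$-links, and I would note that any switching path entering this tree from one leaf $A_{u,v}$ and leaving toward $B_u$ (or vice versa) is forced and unique, so the choice of binary tree is irrelevant — this is where one invokes that "$k$-ary $\bindnasrepma$-link" behaves like a single multi-input node for correctness purposes. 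Once these two local analyses are pinned down, the global statement follows by composing the local bijections along the walk, and the cycle condition is immediate since the construction neither creates nor destroys vertex repetitions. (I would also remark that this proof is essentially the composition of Proposition~\ref{prop-rb-bijection} with a direct check, since proofification is designed to be a kind of inverse to the RB-graph construction at the level of cycles — but giving the direct argument is cleaner and self-contained.)
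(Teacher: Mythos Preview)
Your proposal is correct and follows essentially the same route as the paper: both establish the bijection by decomposing a switching cycle in $\pi$ into segments between $\mathtt{ax}$-links and $\otimes$-links, which then correspond to an alternation of non-matching and matching edges of $G$. The paper's argument is terser, using the observation that a switching cycle \emph{changes direction} only at $\mathtt{ax}$- and $\otimes$-links to obtain this decomposition immediately, whereas you spell out the local routing through the $\bindnasrepma$-trees explicitly (and correctly handle the degree-$1$ and higher-arity cases that the paper leaves implicit).
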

\begin{proof}
  Let $\pi$ be the proofification of $(G,M)$. Any switching cycle in $\pi$ changes
  direction only at $\mathtt{ax}$-links and $\otimes$-links, and therefore can
  be partitioned into an alternation of $\otimes$-links, corresponding to
  matching edges, and of paths starting with some $B_u$, ending with some $B_v$
  and crossing some $\mathtt{ax}_e$, corresponding to non-matching edges $e =
  (u,v)$. Therefore, it corresponds to an alternating cycle for $M$, and the
  mapping defined this way is bijective.
\end{proof}

We end this section on a property of the \emph{sequentializations} of $\pi$.

\begin{prop}%
  \label{proofification-bridge}
  Let $G$ be a graph with a unique perfect matching $M$ and let $\pi$ be the
  proofification of $(G,M)$. A matching edge $e \in M$ is a bridge of $G$ if and
  only if its corresponding $\otimes$-link is introduced by the last rule of
  some sequentialization of $\pi$.
\end{prop}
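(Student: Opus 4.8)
The plan is to combine the two characterisations already available to us: the inductive description of $\mathcal{UPM}$ (Theorem~\ref{upm-sequentialization}), which tells us exactly how the edge $(x,x')$ introduced last "splits" the graph, and the inductive definition of MLL+Mix proof nets, so that the terminal $\otimes$-link introduced by the last rule of a sequentialization of $\pi$ corresponds to a way of writing the underlying graph as built from smaller pieces joined at a single matching edge. Since $\pi$ has a unique perfect matching (because $G$ does, by Proposition~\ref{proofification-cycle} together with the switching-cycle form of the Danos--Regnier criterion, Theorem~\ref{danos-regnier}), $\pi$ is a correct proof net and admits sequentializations, so both sides of the equivalence make sense.

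For the forward direction I would argue as follows. Suppose $e = (u,v) \in M$ is a bridge of $G$. Removing $e$ splits $G$ into two components; each inherits a unique perfect matching (the restrictions of $M \setminus \{e\}$), and $u$ lies in one component $G_1$, $v$ in the other $G_2$. I would check that the proofification of $G$ decomposes accordingly: the $\otimes$-link $t_e$ associated with $e$ has incoming edges $B_u$ and $B_v$, and deleting $t_e$ separates $\pi$ into the proofification-like pieces coming from $G_1$ (with distinguished conclusion $B_u$) and from $G_2$ (with distinguished conclusion $B_v$) — here I must be a little careful about the $\bindnasrepma$-trees at $u$ and $v$ and about vertices of degree one, but the point is that no non-matching edge of $G$ and no $\mathtt{ax}$-link of $\pi$ crosses between the two sides, precisely because $e$ is a bridge. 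Each piece is itself a correct proof net (its correctness graph has no switching cycle, as such a cycle would give an alternating cycle in $G_i$, of which there are none), so each is sequentializable; grafting the two sequentializations and finishing with the $\otimes$-rule that introduces $t_e$ yields a sequentialization of $\pi$ whose last rule introduces $t_e$.

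For the converse, suppose some sequentialization of $\pi$ ends with a $\otimes$-rule introducing the link corresponding to $e \in M$; call it $t_e$. Then $\pi$ is the disjoint union of two sub-proof structures $N_u \ni B_u$-source and $N_v \ni B_v$-source, plus the link $t_e$ and the two edges $B_u \to t_e$, $B_v \to t_e$. I want to conclude $e$ is a bridge of $G$, i.e. that deleting $e$ disconnects $u$ from $v$. Suppose not: then there is a path in $G$ from $u$ to $v$ avoiding $e$; I would translate this path into $\pi$, obtaining a path that connects $N_u$ to $N_v$ without going through $t_e$ — contradicting that $N_u$ and $N_v$ are disjoint (the graph of $\pi$ minus $t_e$ and its two incident edges is disconnected). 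Translating a graph path into a path of $\pi$ requires the usual correspondence between edges/vertices of $G$ and $\mathtt{ax}$-links/$\bindnasrepma$-trees of $\pi$ established in the proof of Proposition~\ref{proofification-cycle}, applied now to paths rather than cycles; this is where I expect the main bookkeeping obstacle to lie, since I must verify that a $G$-path never needs $t_e$ to pass from one side to the other and that traversing a vertex $w$ of $G$ inside $\pi$ (going "up" one $\mathtt{ax}_f$, through the $\bindnasrepma$-tree at $w$, and "down" another $\mathtt{ax}_g$) stays within a single side. Granting that, disconnectedness of $\pi \setminus t_e$ forces disconnectedness of $G \setminus e$, so $e$ is a bridge.

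The hard part will be making the "grafting/splitting" correspondence between the graph operation (deleting a bridge in $\mathcal{UPM}$) and the proof-net operation (removing a terminal $\otimes$-link) fully precise at the level of the proofification — in particular handling degree-one vertices of $G$ (which contribute an extra $\mathtt{ax}$-link rather than a $\bindnasrepma$-tree) and the choice of binary $\bindnasrepma$-tree, and checking that the pieces one obtains really are honest (sub-)proof structures to which sequentialization applies. Once that dictionary is in place, both directions are short connectivity arguments.
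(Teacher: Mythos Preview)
Your approach works and is morally the same as the paper's, just unfolded. The paper's proof is a single sentence: it invokes the standard fact that a terminal $\otimes$-link can be introduced by the last rule of \emph{some} sequentialization if and only if it is \emph{splitting} (its removal disconnects its two predecessors), and leaves implicit that $t_e$ is splitting in $\pi$ exactly when $e$ is a bridge in $G$ --- which is precisely the connectivity correspondence between $G$ and its proofification that you spell out in your backward direction. What you do, in effect, is re-prove the splitting characterization in context: forward, you split along the bridge, sequentialize each side, and graft; backward, you use that the $\otimes$-rule requires disjoint premises and translate a putative $G$-path avoiding $e$ into a $\pi$-path avoiding $t_e$. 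This is fine and more self-contained; the paper's route is shorter because the splitting lemma is part of the standard proof-net toolbox.

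One correction: when you remove the bridge $e = (u,v)$, the components $G_1$ and $G_2$ do \emph{not} inherit perfect matchings --- $u$ is left unmatched in $G_1$ and $v$ in $G_2$, so $M \setminus \{e\}$ restricted to either side is only a near-perfect matching. This does not actually damage your argument, since the property you really use is that the two pieces of $\pi \setminus \{t_e\}$ are switching-acyclic (hence correct MLL+Mix proof nets), and that follows simply because they are sub-proof structures of a switching-acyclic $\pi$. Just drop the parenthetical about unique perfect matchings of the $G_i$.
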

\begin{proof}
  This follows from the fact that a $\otimes$-link may be introduced by the last
  rule of a sequentialization if and only if it is splitting, i.e., its removal
  disconnects its two precedessors.
\end{proof}

This is consistent with the discussion at the end of \Cref{sec:pm}: a bridge in
a unique perfect matching may be taken as a \enquote{last rule} in its
\enquote{sequentialization} in the sense of Theorem~\ref{upm-sequentialization}.
However, RB-graphs do not satisfy a property analogous to the above proposition.
This is one of the motivations for the introduction of our new translation
(\Cref{sec:graphification}), cf.~Theorem~\ref{graphification-seq-bijection}.



\section{On the complexity of MLL+Mix correctness}%
\label{sec:complexity}

Through the translations of the previous section, MLL+Mix proof \emph{nets}
become \emph{unique} perfect matchings and conversely: these translations
provide \emph{reductions} between the problems \textsc{MixCorr} and
\textsc{UniquenessPM}, allowing us to draw complexity-theoretic conclusions on
proof nets from known results in graph theory. We first look at the time
complexity of \textsc{MixCorr}, then turn to its complexity under constant-depth
($\mathsf{AC}^0$) reductions.

\subsection{An immediate linear-time algorithm}%
\label{sec:linear-time}

Computing Retoré's RB-graphs (\Cref{sec:rb-graph}) and deciding
\textsc{UniquenessPM}~\cite[\S{}3]{gabow_unique_2001} can both be done in linear
time, so:
\begin{thm}%
  \label{thm-linear-time}
  \textsc{MixCorr} can be decided in linear time.
\end{thm}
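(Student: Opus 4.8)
The plan is to chain together two linear-time procedures via the reduction set up in Section~\ref{sec:rb-graph}. First I would observe that \textsc{MixCorr} asks, by Definition~\ref{def-proof-nets} and the Danos--Regnier criterion (Theorem~\ref{danos-regnier}), whether the input proof structure $\pi$ satisfies the MLL+Mix correctness criterion; by Corollary~\ref{cor-retore} this holds if and only if the perfect matching $M$ of the RB-graph $(G,M)$ of $\pi$ is unique, which is exactly an instance of \textsc{UniquenessPM} (Problem~\ref{prob-ugpm}). So the algorithm is: (1) build $(G,M)$ from $\pi$; (2) run the linear-time \textsc{UniquenessPM} test of Gabow, Kaplan and Tarjan on $(G,M)$; (3) answer yes iff step~(2) reports that $M$ is unique.

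Next I would check that each step runs in linear time in the size of $\pi$. For step~(1): inspecting the rules of Figure~\ref{fig:rb-graph-rules}, each link of $\pi$ contributes a bounded number of vertices and edges to $(G,M)$ (two vertices per directed edge of $\pi$, and at most three non-matching edges per link), so $|V(G)| + |E(G)| = O(|\pi|)$ and the construction is a straightforward linear-time traversal of $\pi$. One should also note that the output really is a graph with a \emph{given} perfect matching $M$ (the $M$-edges are produced explicitly by the construction, one per directed edge of $\pi$), so no matching-finding is needed — we feed both $G$ and $M$ to step~(2). For step~(2), I would cite \cite[\S 3]{gabow_unique_2001}, which solves \textsc{UniquenessPM} — equivalently, decides whether there is an alternating cycle for the given $M$ — in time linear in $|V(G)| + |E(G)|$, hence linear in $|\pi|$. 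Composing, the whole procedure is linear time.

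Finally I would invoke correctness: by Proposition~\ref{prop-rb-bijection} the switching cycles of $\pi$ are in bijection with the alternating cycles of $(G,M)$, so $\pi$ has no switching cycle iff $(G,M)$ has no alternating cycle iff (Berge, Lemma~\ref{berge}, or directly Problem~\ref{prob-ugpm}) $M$ is the unique perfect matching of $G$; and by the remark following Theorem~\ref{danos-regnier}, $\pi$ is an MLL+Mix proof net iff it has no switching cycle. Hence the algorithm decides \textsc{MixCorr} correctly, which completes the argument. There is no real obstacle here — the content is entirely in the already-established Corollary~\ref{cor-retore} and in the cited linear-time matching result; the only thing to be careful about is the bookkeeping in step~(1), namely that the RB-graph construction is genuinely local and linear-size and that it supplies $M$ explicitly rather than leaving it to be recomputed.
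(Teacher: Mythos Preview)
Your proposal is correct and follows exactly the paper's approach: the paper's proof is the one-line observation that computing the RB-graph and deciding \textsc{UniquenessPM} (via \cite[\S3]{gabow_unique_2001}) are both linear-time, and your write-up simply unpacks this with the supporting references (Corollary~\ref{cor-retore}, Proposition~\ref{prop-rb-bijection}, the local nature of Figure~\ref{fig:rb-graph-rules}). There is nothing to add or correct.
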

\begin{rem}%
  \label{rem-mll-nomix}
  By using the \enquote{Euler--Poincaré lemma} (an old part of the linear logic 
  folklore, written down in, \eg,~\cite{bagnol_dependencies_2015}) to
  count the uses of the Mix rule in a proof net, this also allows us to decide
  the correctness of a proof structure for MLL without Mix in linear time.

  Historically, all the necessary ingredients for Theorem~\ref{thm-linear-time}
  and for its corollary for MLL already existed before the announcement (at
  LICS'99, in July 1999) of Guerrini's linear-time correctness criterion for
  MLL~\cite{guerrini_linear_2011}. Indeed, Retoré first presented his RB-graphs
  at the 1996 Linear Logic Tokyo Meeting~\cite{retore_perfect_1996}, and Gabow
  et al.'s algorithm~\cite{gabow_unique_2001} was published at the STOC'99
  conference in May 1999.

  Yet this does not make Guerrini's work obsolete: the latter also gives a way
  to compute a \emph{sequentialization} in linear time for MLL proof nets. The
  other previously known linear-time algorithm for MLL
  correctness~\cite{murawski_fast_2006} also provides a linear-time
  sequentialization procedure. For MLL+Mix, we do not quite manage to match this
  complexity, though we obtain a \emph{quasi-linear} algorithm,
  cf.~\Cref{sec:sequentialization}.

  This is because the methods used in~\cite{guerrini_linear_2011,
    murawski_fast_2006} are quite different from ours: instead of using the
  Danos--Regnier switching acyclicity criterion, their starting points are 
  respectively contractibility (cf.\ Remark~\ref{rem-contractibility}) and
  translation to essential nets (cf.\ \Cref{sec:essential-nets}), which does not
  work with the Mix rule. Therefore, linear time correctness for MLL+Mix is
  absolutely not a trivial generalization of the previous literature on MLL
  without Mix. The discussion at the start of \Cref{sec:kingdom-ordering} makes
  a similar point with respect to sequentialization.
\end{rem}
\begin{rem}%
  \label{rem-alternating-difficult}
  Our decision procedure has the advantage of being simpler to describe than the
  aforementioned algorithms for MLL correctness. That said, this apparent
  simplicity is due to our use of the algorithm of Gabow et
  al.~\cite{gabow_unique_2001} as a black box. Looking inside the black box
  reveals, for instance, that it uses the \emph{incremental tree set union} data
  structure of Gabow and Tarjan~\cite{gabow_linear-time_1985}, which,
  intringuingly, is also a crucial ingredient of
  both~\cite{guerrini_linear_2011, murawski_fast_2006}.

  Finding an alternating cycle is indeed more tricky than in appears at first
  sight. Naively, one would perform a graph traversal which visits alternatively
  matching edges and non-matching edges. The issue is that this would not not
  ensure that the alternating cycle found is \emph{elementary}, i.e., that there
  are no vertex repetitions, which is an essential condition (that we have
  included in our definition of \enquote{cycle} in \Cref{sec:terminology}). The
  difficulty of the problem indeed lies in the interaction of this global
  constraint with the local alternation condition. The analogous issue, seen
  directly on proof structures, is that the traversal does not remember whether
  a premise of a $\bindnasrepma$-link has already been traversed before (in fact
  the standard path-finding algorithms rely on a kind of history independence:
  it does not matter how you reached some intermediate vertex, as long as your
  path was of minimum length).
\end{rem}

\begin{rem}%
  \label{rem-contractibility}
  Gabow et al.'s algorithm for \textsc{UniquenessPM} relies on the technique of
  \emph{blossom shrinking} pioneered by Edmonds~\cite{edmonds_paths_1965}, a
  kind of graph contraction which may remind us of Danos's
  \emph{contractibility} correctness criterion~\cite{danos_logique_1990} for MLL
  without Mix. Indeed, there exists a formal connection: a rewrite step of
  \emph{big-step contractibility}~\cite{bagnol_dependencies_2015} corresponds,
  when translated to either Retoré's RB-graphs or our graphifications
  (\Cref{sec:graphification}), to contracting a blossom. However, not all
  blossoms are redexes for big-step contractibility. See \Cref{sec:blossoms} for
  further discussion of blossoms.
\end{rem}

\subsection{Characterizing the sub-polynomial complexity}%
\label{sec:sub-polynomial}

For MLL proof nets without Mix, correctness is known to be
$\mathsf{NL}$-complete under $\mathsf{AC}^0$ reductions thanks to the
Mogbil--Naurois criterion~\cite{jacobe_de_naurois_correctness_2011}. What about 
MLL+Mix? Since the reductions of \Cref{sec:equivalence} can be computed in
constant depth, we have:

\begin{thm}%
  \label{ugpm-to-mix-correctness}
  \textsc{MixCorr} and \textsc{UniquenessPM} are equivalent under
  $\mathsf{AC}^0$ reductions.
\end{thm}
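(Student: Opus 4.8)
The plan is to combine the two reductions already established in \Cref{sec:equivalence} and observe that each of them is computable by a constant-depth circuit family, i.e.\ is an $\mathsf{AC}^0$ reduction. Concretely, I would proceed in two directions. First, for \textsc{MixCorr} $\leq_{\mathsf{AC}^0}$ \textsc{UniquenessPM}: given a proof structure $\pi$ as input, produce its RB-graph $(G,M)$ as in \Cref{sec:rb-graph}, then ask whether $M$ is the unique perfect matching of $G$; by Corollary~\ref{cor-retore} this answers whether $\pi$ is a proof net. The key point to check is that the map $\pi \mapsto (G,M)$ is $\mathsf{AC}^0$-computable: each link of $\pi$ contributes a fixed-size gadget of vertices and edges according to the finitely many rules of Figure~\ref{fig:rb-graph-rules}, and incidence in $G$ is determined by looking at the label of a link together with the (bounded) local data of which directed edges enter or leave it. No counting, transitive closure, or iteration is needed, so this is a local, constant-depth transformation of the input encoding; the matching $M$ is simply read off as the set of gadget-internal "matching" edges.

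Second, for the converse \textsc{UniquenessPM} $\leq_{\mathsf{AC}^0}$ \textsc{MixCorr}: given $(G,M)$, output the proofification $\pi$ of $(G,M)$ from \Cref{sec:proofification}, and ask whether $\pi$ is an MLL+Mix proof net. By Proposition~\ref{proofification-cycle} the switching cycles of $\pi$ are in bijection with the alternating cycles for $M$ in $G$, so by the remark after Theorem~\ref{danos-regnier} (no switching cycle iff proof net) together with Lemma~\ref{berge}, $\pi$ is correct iff $M$ is the unique perfect matching of $G$. Again I must verify $\mathsf{AC}^0$-computability: each non-matching edge yields one $\mathtt{ax}$-link, each matching edge one $\otimes$-link, and each vertex $u$ a $k$-ary $\bindnasrepma$-link with $k = \deg(u)-1$. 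The only non-entirely-trivial ingredient is the expansion of a $k$-ary $\bindnasrepma$-link into a binary tree of $\bindnasrepma$-links; this is a standard balanced-tree construction whose shape depends only on $k$ (a number at most the input size), and laying out such a tree, with its $O(k)$ vertices and edges indexed in a fixed way, is computable in constant depth. All adjacency relations in $\pi$ are then Boolean combinations of bounded local predicates on the input.

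Having both reductions, I would conclude that \textsc{MixCorr} and \textsc{UniquenessPM} are interreducible under $\mathsf{AC}^0$ reductions, which is the statement of the theorem. I expect the main obstacle — really the only point requiring care — to be the verification that these constructions genuinely fall within $\mathsf{AC}^0$ rather than, say, $\mathsf{NC}^1$ or $\mathsf{L}$: one has to be slightly careful about the encoding of graphs and proof structures (adjacency lists vs.\ matrices, how $k$-ary links are represented) and about the fact that producing the output requires indexing newly created vertices by small integers, which is fine in $\mathsf{AC}^0$ but needs to be stated. Since the paper has already announced (in \Cref{sec:sub-polynomial}) that "the reductions of \Cref{sec:equivalence} can be computed in constant depth", the actual write-up can be brief: it suffices to point at the two propositions (Corollary~\ref{cor-retore} and Proposition~\ref{proofification-cycle}) providing correctness of the reductions, and to remark that both the RB-graph and the proofification constructions are manifestly local and hence $\mathsf{AC}^0$.
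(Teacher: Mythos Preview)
Your proposal is correct and follows exactly the paper's approach: the paper proves this theorem in a single sentence, simply observing that the reductions of \Cref{sec:equivalence} (RB-graphs in one direction, proofification in the other) are computable in constant depth. Your write-up supplies the details that the paper leaves implicit --- in particular the locality of both constructions and the handling of $k$-ary $\bindnasrepma$-trees --- but the underlying argument is the same.
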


Thus, it will suffice to study the complexity of \textsc{UniquenessPM}. Let us
start with a positive result, using the parallel algorithms for
perfect matchings mentioned in \Cref{sec:pm}.

\begin{prop}%
  \label{correctness-quasi-nc}
  \textsc{UniquenessPM} is in \emph{randomized} $\mathsf{NC}$ and in
  \emph{deterministic} $\mathsf{quasiNC}$.
\end{prop}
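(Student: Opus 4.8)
The plan is to reduce \textsc{UniquenessPM} to the perfect matching *existence* problem and then invoke the known parallel algorithms for the latter --- the randomized $\mathsf{NC}$ algorithm of Mulmuley, Vazirani and Vazirani \cite{mulmuley_matching_1987} and the derandomized $\mathsf{quasiNC}$ algorithm of Svensson and Tarnawski \cite{svensson_matching_2017}. The key observation, in the spirit of \Cref{berge}, is that a perfect matching $M$ of $G$ is non-unique precisely when there exists an alternating cycle for $M$, and an alternating cycle through a fixed non-matching edge $e = (u,v)$ corresponds to an alternating path from $u$ to $v$ in $G \setminus e$ whose two end-edges are both in $M$ --- equivalently, a perfect matching of $G$ that uses $e$. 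So for each of the $O(n^2)$ non-matching edges $e$, I would check whether $G$ still admits a perfect matching after forcing $e$ into the matching; $M$ is unique if and only if all of these checks fail.

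Concretely, to test whether $G$ has a perfect matching containing a given edge $e = (u,v)$, I would form the graph $G_e$ obtained by deleting $u$, $v$, and all their incident edges, and ask whether $G_e$ has a perfect matching; this holds iff $G$ has a perfect matching through $e$. Since $e \in E \setminus M$ and $M$ is already a perfect matching of $G$, producing such a matching through $e$ is exactly producing a second perfect matching of $G$ (it differs from $M$ on the alternating cycle closed up by $e$), and conversely by \Cref{berge} any alternating cycle for $M$ contains at least one non-matching edge. Hence:
\[
  M \text{ is the unique perfect matching of } G
  \iff
  \text{for every } e \in E \setminus M,\ G_e \text{ has no perfect matching.}
\]
All the $G_e$ can be constructed in parallel in constant depth, and the perfect-matching-existence queries are run in parallel; a single unbounded fan-in AND gathers the answers. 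This keeps us within randomized $\mathsf{NC}$ (resp.\ deterministic $\mathsf{quasiNC}$), since these classes are closed under taking a polynomial number of parallel subroutine calls and composing with an $\mathsf{AC}^0$ post-processing.

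The main obstacle I anticipate is not mathematical depth but a bookkeeping subtlety: the cited matching algorithms are stated as *decision* procedures for perfect matching existence (and, in the $\mathsf{quasiNC}$ case, actually *construct* a matching), so I must be careful that the reduction only uses them as black boxes in the permitted way, and that the parallel composition genuinely stays in $\mathsf{NC}$ / $\mathsf{quasiNC}$ rather than blowing up the depth. A secondary point worth spelling out is why the ``alternating path with both ends matched'' really does close into an *elementary* alternating cycle --- this is where the disjointness structure of $M \triangle M'$ from \Cref{berge} does the work, so the reduction through ``perfect matching using $e$'' sidesteps the elementarity difficulty flagged in Remark~\ref{rem-alternating-difficult} entirely. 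Once these are in place, the proposition follows immediately.
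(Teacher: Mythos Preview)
Your proposal is correct and follows essentially the same approach as the paper: reduce non-uniqueness of $M$ to polynomially many parallel instances of perfect matching existence, then invoke \cite{mulmuley_matching_1987} and \cite{svensson_matching_2017}. The only cosmetic difference is that the paper iterates over the $|M|$ \emph{matching} edges (checking whether $G \setminus \{e\}$ still has a perfect matching for each $e \in M$) rather than over the non-matching edges as you do; this yields fewer instances but is otherwise the same reduction.
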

\begin{proof}
  Let $G = (V,E)$ be a graph and $M$ be a perfect matching of $G$. $M$ is
  \emph{not} unique if and only if, for some $e \in M$, the graph $G_e = (V, E
  \setminus \{e\})$ has a perfect matching. To test the uniqueness of $M$, run
  the $|M|$ parallel instances, one for each $G_e$, of a randomized
  $\mathsf{NC}$~\cite{mulmuley_matching_1987} or deterministic
  $\mathsf{quasiNC}$~\cite{svensson_matching_2017} algorithm for deciding the
  existence of a perfect matching, and compute the disjunction of their answers
  in $\mathsf{AC}^0$.
\end{proof}

Being in $\mathsf{quasiNC}$ is a much weaker\footnote{In fact, one can show that
  $\mathsf{NL} \subsetneq \mathsf{NSPACE}(O(\log^{3/2} n)) \subseteq
  \mathsf{quasiNC}^3$, and the problem of finding a perfect matching lies in the
  latter, according to Svensson and Tarnawski's analysis.} result than being in
$\mathsf{NL}$. But as we shall now see, even showing that \textsc{UniquenessPM}
is in $\mathsf{NC}$ (recall that $\mathsf{NL} \subset \mathsf{NC}$) would be a
major result. It would answer in the affirmative the following conjecture dating
back from the 1980's:

\begin{conj}[Lovász\footnote{The conjecture is attributed to Lovász by a
    paper by Kozen et al.~\cite{kozen_nc_1985} which claims to solve it. But
    Hoang et al.~\cite{hoang_bipartite_2006} note that \enquote{this was later
      retracted in a personal communication by the authors}. Still, the proposed
    solution works for bipartite graphs.}]%
  \label{upm-conjecture}
  \textsc{UniquePM} is in $\mathsf{NC}$.
\end{conj}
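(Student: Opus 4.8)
Since \Cref{upm-conjecture} is a notorious open problem, I do not expect to settle it here; what follows is the natural line of attack, and the point at which it stalls. The plan is to split \textsc{UniquePM} into (a) deciding that $G$ has \emph{exactly} one perfect matching, and (b) producing that matching, and then to reduce (a) to the conjunction ``$G$ has a perfect matching'' and ``\textsc{UniquenessPM}''. By \Cref{correctness-quasi-nc} the second conjunct already lies in randomized $\mathsf{NC}$ and in $\mathsf{quasiNC}$, so the whole difficulty is concentrated in putting the \emph{existence} of a perfect matching into $\mathsf{NC}$ --- itself a famous open question, the best known bounds being the randomized $\mathsf{NC}$ algorithm of Mulmuley et al.\ \cite{mulmuley_matching_1987} and the $\mathsf{quasiNC}$ algorithm of Svensson and Tarnawski \cite{svensson_matching_2017}.

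For the \emph{bipartite} case this detour is unnecessary and the conjecture becomes a theorem, so I would treat it first as a warm-up. Orient the edges of the given matching $M$ from one side of the bipartition to the other, and the non-matching edges the other way; by \Cref{berge}, $M$ is the unique perfect matching if and only if this digraph is acyclic, and acyclicity is decidable in $\mathsf{NL} \subseteq \mathsf{NC}$. To \emph{find} $M$ when one is not handed it, one reads it off a transitive closure: in the bipartite case \Cref{upm-sequentialization} specialises to a staircase structure in which every application of \Cref{kotzig} exhibits a degree-one vertex, and the resulting triangular order on the two sides can be recovered in $\mathsf{NC}$ from reachability data. This reproduces the bipartite result of \cite{kozen_nc_1985, hoang_bipartite_2006}.

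The general case meets two obstacles. The first is the one already noted: deciding existence of a perfect matching in $\mathsf{NC}$ is open. The second, which I expect to be the real crux, is structural and specific to uniqueness: the clean ``orient the matching, test acyclicity'' argument collapses for non-bipartite graphs, because an alternating cycle need not respect any global orientation --- it may thread through \emph{blossoms} in the sense of Edmonds \cite{edmonds_paths_1965}. Detecting such cycles is precisely why the linear-time algorithm of Gabow et al.\ \cite{gabow_unique_2001} resorts to blossom shrinking instead of a naive traversal (cf.\ the discussion around \Cref{thm-linear-time}). The concrete target would therefore be to compute, in $\mathsf{NC}$, the Gallai--Edmonds-style blossom decomposition that the later sections of this paper tie to the kingdom ordering, and to show that uniqueness can be read off it; equivalently, by \Cref{ugpm-to-mix-correctness}, one must parallelise the Danos--Regnier test for MLL+Mix below the $\mathsf{quasiNC}$ ceiling of \Cref{correctness-quasi-nc}. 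Parallelising the inherently iterative shrinking process is the hard part, and it is plausibly no easier than the general parallel-matching question; the honest conclusion of this proposal is that $\mathsf{quasiNC}$ is the current ceiling and that a genuinely new idea is needed to break it.
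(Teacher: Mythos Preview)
The statement is a \emph{conjecture}, not a theorem: the paper does not prove it, and you correctly recognise this at the outset. There is therefore no proof to compare against. What the paper \emph{does} do, immediately after stating the conjecture, is prove Proposition~\ref{nc2-reduction}, which pins down where the obstruction lies --- and here your diagnosis drifts.

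Your plan reduces \textsc{UniquePM} to the conjunction ``$G$ has a perfect matching'' $\wedge$ ``\textsc{UniquenessPM}'', and concludes that the bottleneck is deciding PM-existence in $\mathsf{NC}$. Two issues. First, \textsc{UniquenessPM} takes a perfect matching as \emph{input}, so you cannot form that conjunction without already having produced the matching; your step~(b) must precede, not follow, your step~(a). Second, and more importantly, the paper's Proposition~\ref{nc2-reduction} shows that general PM-existence is \emph{not} needed: the Rabin--Vazirani algorithm~\cite{rabin_maximum_1989} outputs, in $\mathsf{NC}^2$, a set $M$ which is guaranteed to be the unique perfect matching whenever one exists. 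Checking that this $M$ is a perfect matching is trivial, so \textsc{UniquePM} reduces in $\mathsf{NC}^2$ to \textsc{UniquenessPM} alone. The sole obstacle to Conjecture~\ref{upm-conjecture} is therefore placing \textsc{UniquenessPM} (equivalently, by Theorem~\ref{ugpm-to-mix-correctness}, \textsc{MixCorr}) in deterministic $\mathsf{NC}$ --- a point you do eventually reach as your ``real crux'', but only after listing PM-existence as a first obstacle that the Rabin--Vazirani trick sidesteps entirely.
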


Indeed, the following shows that $\textsc{UniquenessPM} \in \mathsf{NC}
\Rightarrow \textsc{UniquePM} \in \mathsf{NC}$ (and the converse follows from
the definitions).

\begin{prop}%
  \label{nc2-reduction}
  There is a $\mathsf{NC}^2$ reduction from \textsc{UniquePM} to
  \textsc{UniquenessPM}.
\end{prop}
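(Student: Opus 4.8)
The plan is to reduce an instance of \textsc{UniquePM} --- a bare graph $G = (V,E)$ with no matching given --- to \textsc{UniquenessPM}, which requires a graph together with one of its perfect matchings. The obstacle is precisely that \textsc{UniquenessPM} needs a witness matching handed to it, while \textsc{UniquePM} provides none; so the first task is to \emph{find} a perfect matching of $G$ (if one exists) within $\mathsf{NC}^2$. Here I would invoke the standard fact that the bipartite/general perfect-matching \emph{search} problem is in $\mathsf{NC}^2$ via Lovász's algebraic approach together with the Mulmuley--Vazirani--Vazirani isolation lemma... but that is only randomized $\mathsf{NC}$. However, we only need a reduction, not an unconditional algorithm: the point of the proposition is the implication $\textsc{UniquenessPM} \in \mathsf{NC} \Rightarrow \textsc{UniquePM} \in \mathsf{NC}$, so I should instead observe that \emph{computing some perfect matching} (when one exists) already reduces to \textsc{UniquenessPM}, or more simply that this search step is subsumed. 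Let me reconsider: the cleanest route is to first test whether $G$ has \emph{exactly one} perfect matching by a Tutte-matrix / determinant computation, which is genuinely in $\mathsf{NC}^2$ (determinant of an integer matrix is in $\mathsf{NC}^2$), and then reconstruct that unique matching edge by edge.

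Concretely, I would proceed as follows. First, compute the number of perfect matchings modulo a suitable prime, or more robustly, observe that $G$ has a unique perfect matching if and only if $G$ has a perfect matching and, for every edge $e = (u,v) \in E$, either $e$ lies in no perfect matching or $G \setminus \{u,v\}$ has a unique perfect matching --- but recursion is the wrong tool in $\mathsf{NC}$. Better: use the classical characterization that $G$ has a unique perfect matching iff $G$ has a perfect matching and the associated Tutte matrix has a certain rank/determinant property; in the bipartite case the permanent equals $1$ up to sign, detectable by the determinant of the $0/1$ biadjacency matrix being $\pm 1$. So the reduction runs in two $\mathsf{NC}^2$ phases: (i) a preprocessing phase that decides, in $\mathsf{NC}^2$, whether $G$ admits a perfect matching at all and whether it is unique, and simultaneously \emph{constructs} a perfect matching $M$ when one exists --- an edge $e$ belongs to the unique perfect matching iff $G \setminus \{u,v\}$ still has a perfect matching, and all these $|E|$ tests run in parallel, each being a determinant computation; (ii) having produced $(G, M)$, invoke the \textsc{UniquenessPM} oracle. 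The overall composition of an $\mathsf{NC}^2$ preprocessing with an oracle call stays within $\mathsf{NC}^2$-Turing-reducibility, which is what is claimed.

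The main subtlety --- and the step I expect to require the most care --- is making the matching-construction phase genuinely deterministic $\mathsf{NC}^2$ rather than merely randomized $\mathsf{NC}$. For general (non-bipartite) graphs, extracting a perfect matching in deterministic $\mathsf{NC}$ is exactly the open problem behind Conjecture~\ref{upm-conjecture}, so one cannot do it unconditionally. The resolution is that we do \emph{not} need to extract an arbitrary perfect matching: we only need to handle the case where the matching, if it exists, is \emph{unique}, and a unique perfect matching \emph{can} be found in $\mathsf{NC}^2$. Indeed, when the perfect matching is unique, edge $e = (u,v)$ is in it precisely when $\det$ of the Tutte matrix of $G \setminus \{u,v\}$ is nonzero (over $\mathbb{Q}$, treating indeterminates symbolically, or via polynomial-identity testing made deterministic because the unique-matching case collapses the Pfaffian to a single monomial). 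If the uniqueness test in phase (i) fails, we simply answer ``not unique'' directly without ever calling the oracle; if it succeeds, we feed the reconstructed $(G,M)$ to \textsc{UniquenessPM}, which will confirm uniqueness. This case split is what keeps everything inside $\mathsf{NC}^2$ and is the crux of the argument; the determinant and parallel-arithmetic bookkeeping around it is routine.
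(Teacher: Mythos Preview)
Your core idea is right and is essentially the same as the paper's: run a deterministic $\mathsf{NC}^2$ procedure that, \emph{provided $G$ has a unique perfect matching}, outputs that matching; then check whether the output $M$ is in fact a perfect matching, and if so feed $(G,M)$ to the \textsc{UniquenessPM} oracle. The paper simply cites the Rabin--Vazirani algorithm~\cite{rabin_maximum_1989} for the first step, whereas you sketch it yourself via Tutte-matrix determinants; your last paragraph correctly identifies why this works deterministically (under uniqueness the Pfaffian collapses to a single monomial, so a fixed substitution suffices).

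However, your description of phase~(i) is internally inconsistent. You write that it ``decides, in $\mathsf{NC}^2$, whether $G$ admits a perfect matching at all and whether it is unique'' and later speak of ``if the uniqueness test in phase~(i) fails''. Neither of these decisions is known to be in deterministic $\mathsf{NC}^2$; the second is precisely \textsc{UniquePM}, the problem you are reducing, so if phase~(i) could do it there would be nothing left to prove. The correct logic --- which the paper states cleanly and which your own final paragraph almost reaches --- is that phase~(i) produces a set of edges $M$ with the sole guarantee that \emph{if} the perfect matching of $G$ is unique \emph{then} $M$ equals it; when $G$ has no perfect matching or several, $M$ may be arbitrary. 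One then checks (trivially, in $\mathsf{AC}^0$) whether $M$ happens to be a perfect matching, answers ``no'' if it is not, and otherwise lets the oracle call on $(G,M)$ serve as the uniqueness test. So delete the claimed uniqueness test from phase~(i): it is not merely redundant but, as far as anyone knows, impossible there; the oracle call \emph{is} the uniqueness test.
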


\begin{proof}
  This is a consequence of a $\mathsf{NC}^2$ algorithm by Rabin and
  Vazirani~\cite[\S4]{rabin_maximum_1989} which, given a graph $G$, computes a
  set of edges $M$ such that if $G$ admits a unique perfect matching, then $M$
  is this matching. Starting from any graph $G$, run this algorithm and test
  whether its output is a perfect matching. If not, then $G$ does not admit a
  unique perfect matching; if it is, then $G$ is a positive instance of
  \textsc{UniquePM} if and only if $(G,M)$ is a positive instance of
  \textsc{UniquenessPM}.
\end{proof}

To sum up these results about \textsc{UniquenessPM}, which apply to
\textsc{MixCorr}:

\begin{thm}
  \textsc{MixCorr} is in randomized $\mathsf{NC}$ and in
  deterministic $\mathsf{quasiNC}$; it is in deterministic
  $\mathsf{NC}$ if and only if Conjecture~\ref{upm-conjecture} is
  true.
\end{thm}

\section{Tackling sequentialization via an appropriate translation}%
\label{sec:around-seq}

We are now interested in using our graph-theoretical tools to deal with
problems concerning the \emph{order of logical rules}, typically that of
computing a \emph{sequentialization} (problem \textsc{MixSeq} from the
introduction). However, there is a mismatch between RB-graphs and proof nets: a
bridge in a RB-graph does not necessarily correspond to the last rule of some
sequentialization of the proof net --- in fact, it generally does not even
correspond to a terminal link. The key issue is indeed that the successor
relation (called $S(\pi)$ in \cref{sec:computing-kingdom}), i.e., \enquote{there
  is a directed edge from $l$ to $l'$} is forgotten by the translation to
RB-graphs.

A toy case to witness the inconvenience caused by this mismatch is the
following: we would like to deduce the sequentialization theorem for the
Danos--Regnier criterion (Theorem~\ref{danos-regnier}) as an immediate corollary 
of Retoré's sequentialization for unique perfect matchings
(Theorem~\ref{upm-sequentialization}). But this is not possible with RB-graphs
-- instead, one must resort to a proof by induction using Kotzig's theorem
(Theorem~\ref{kotzig}), see~\cite[\S{}2.4]{retore_handsome_1999}.

\subsection{A new encoding: graphification}%
\label{sec:graphification}

To fulfill the desiderata mentioned above, we introduce the following
construction, which involves a trick to encode the successor relation.

\begin{defi}
  Let $\pi$ be a proof structure and $L$ be its set of links. The \emph{graphification}
  of $\pi$ is the graph $G = (V,E)$ equipped with a perfect matching $M
  \subseteq E$ with
  \begin{itemize}
  \item the matching edges corresponding to the links: $V = \bigcup_{l \in L}
    \{a_l, b_l\}$, $M = \{(a_l, b_l) \mid l \in L\}$,
  \item and the remaining edges in $E \setminus M$ reflect the incoming edges of
    the $\otimes$-links and $\bindnasrepma$-links, as specified by
    Figure~\ref{fig:graphification-rules}.
  \end{itemize}
\end{defi}

\noindent
Figure~\ref{fig:graphification-example} shows an example of this construction.
As another example, Figure~\ref{fig:pm-unique} from \Cref{sec:pm} is actually the
graphification of Figure~\ref{fig:pn-conclusions} from \Cref{sec:rb-graph}.

\begin{rem}
  There is an ambiguity about the \enquote{$\bindnasrepma$ of $\mathtt{ax}$}
  configuration (cf.~Figure~\ref{fig:par-of-ax}) that can occur in correct proof nets:
  should it result in a multigraph with parallel non-matching edges, or in a
  simple graph? For simplicity we choose the simple graph option, since that is
  the setting for most of the literature on matchings, but this detail has very
  little importance.
\end{rem}

\begin{figure}
  \begin{subfigure}{7.5cm}
    \begin{tabular}{|c|c|} 
      \hline 
      \begin{tikzpicture}
    \node[bigvertex] (L) at (1,3.5) {};
    \node[bigvertex] (R) at (3,3.5) {};
    \node[bigvertex] (N) at (2,2) {$\otimes$};

    \draw[black, thick, ->] (R) -- (N);
    \draw[black, thick, ->] (L) -- (N);
      \end{tikzpicture}

    &
      \begin{tikzpicture}
    \node[bigvertex] (L) at (1,3.5) {};
    \node[bigvertex] (R) at (3,3.5) {};
    \node[bigvertex] (N) at (2,2) {$\bindnasrepma$};

    \draw[black, thick, ->] (R) -- (N);
    \draw[black, thick, ->] (L) -- (N);
      \end{tikzpicture}
    \\
      \hline 
      \begin{tikzpicture}
        \node[vertex] (Al) at (0,2) {};
        \node[vertex] (Ar) at (1,2) {};
        \draw[matching edge] (Al) -- (Ar);

        \node[vertex] (Bl) at (2,2) {};
        \node[vertex] (Br) at (3,2) {};
        \draw[matching edge] (Bl) -- (Br);

        \node[vertex] (Cl) at (1,0) {};
        \node[vertex] (Cr) at (2,0) {};
        \draw[matching edge] (Cl) -- (Cr);

        \draw[non matching edge] (Al) -- (Cl);
        \draw[non matching edge] (Ar) -- (Cl);
        \draw[non matching edge] (Bl) -- (Cr);
        \draw[non matching edge] (Br) -- (Cr);
      \end{tikzpicture}
    & \begin{tikzpicture}
        \node[vertex] (Al) at (0,2) {};
        \node[vertex] (Ar) at (1,2) {};
        \draw[matching edge] (Al) -- (Ar);

        \node[vertex] (Bl) at (2,2) {};
        \node[vertex] (Br) at (3,2) {};
        \draw[matching edge] (Bl) -- (Br);

        \node[vertex] (Cl) at (1,0) {};
        \node[vertex] (Cr) at (2,0) {};
        \draw[matching edge] (Cl) -- (Cr);

        \draw[non matching edge] (Al) -- (Cl);
        \draw[non matching edge] (Ar) -- (Cl);
        \draw[non matching edge] (Bl) -- (Cl);
        \draw[non matching edge] (Br) -- (Cl);
      \end{tikzpicture}
      \\
    \hline 
    \end{tabular}
  \caption{Translation rules for sets of incoming edges.}%
    \label{fig:graphification-rules}
  \end{subfigure}\qquad\begin{subfigure}{5cm}
    \centering
  \begin{tikzpicture}
        \node[vertex] (Ax1l) at (1,4) {};
        \node[vertex] (Ax1r) at (2,4) {};
        \draw[matching edge] (Ax1l) -- node[above, text=black] {$\mathtt{ax}$} ++ (Ax1r);
        \node[vertex] (Ax2l) at (4,4) {};
        \node[vertex] (Ax2r) at (5,4) {};
        \draw[matching edge] (Ax2l) -- node[above, text=black] {$\mathtt{ax}$} ++ (Ax2r);
        \node[vertex] (Tl) at (1.5,2) {};
        \node[vertex] (Tr) at (2.5,2) {};
        \draw[matching edge] (Tl) -- node[above, text=black] {$\otimes$} ++ (Tr);
        \node[vertex] (P1l) at (3.5,2) {};
        \node[vertex] (P1r) at (4.5,2) {};
        \draw[matching edge] (P1l) -- node[above, text=black] {$\bindnasrepma$} ++ (P1r);
        \node[vertex] (P2l) at (2.5,0) {};
        \node[vertex] (P2r) at (3.5,0) {};
        \draw[matching edge] (P2l) -- node[above, text=black] {$\bindnasrepma$} ++ (P2r);

        \draw[non matching edge] (Ax1l) -- (Tl);
        \draw[non matching edge] (Ax1r) -- (Tl);
        \draw[non matching edge] (Ax2l) -- (Tr);
        \draw[non matching edge] (Ax2r) -- (Tr);
        \draw[non matching edge] (Ax1l) -- (P1l);
        \draw[non matching edge] (Ax1r) -- (P1l);
        \draw[non matching edge] (Ax2l) -- (P1l);
        \draw[non matching edge] (Ax2r) -- (P1l);
        \draw[non matching edge] (Tl) -- (P2l);
        \draw[non matching edge] (Tr) -- (P2l);
        \draw[non matching edge] (P1l) -- (P2l);
        \draw[non matching edge] (P1r) -- (P2l);
  \end{tikzpicture}
  \caption{Graphification of the proof structure of Figure~\ref{fig:proof-structure}}%
    \label{fig:graphification-example}
\end{subfigure}
  \caption{The graphification construction.}%
  \label{fig:graphification}
\end{figure}
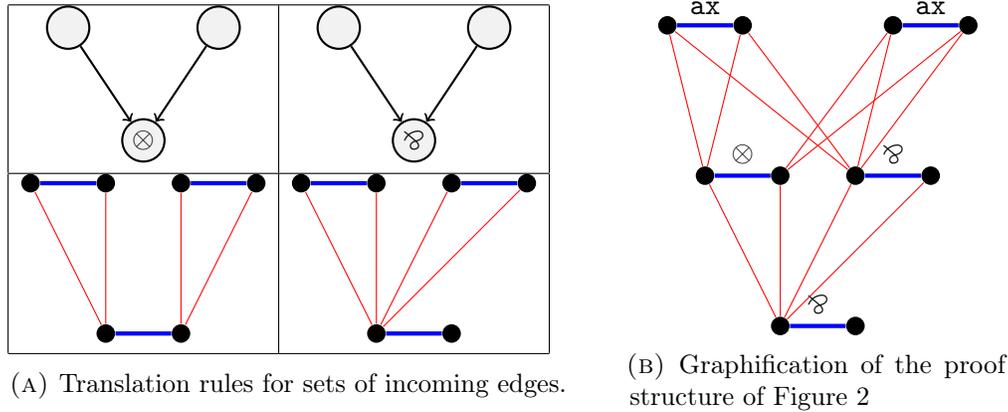

\begin{figure}
  \begin{subfigure}{3cm}
    \begin{tikzpicture}
      \node[bigvertex] (L) at (0,1.7) {$\mathtt{ax}$};
      \node[bigvertex] (N) at (0,0) {$\bindnasrepma$};

      \draw[black, thick, ->] (L) to [bend left] (N);
      \draw[black, thick, ->] (L) to [bend right] (N);
    \end{tikzpicture}
  \end{subfigure}
  \begin{subfigure}{3cm}
    \begin{tikzpicture}
      \node[vertex] (Al) at (1,2) {}; \node[vertex] (Ar) at (2,2) {};
      \draw[matching edge] (Al) -- (Ar);

      \node[vertex] (Cl) at (1,0) {}; \node[vertex] (Cr) at (2,0) {};
      \draw[matching edge] (Cl) -- (Cr);

      \draw[non matching edge] (Al) -- (Cl);
      \draw[non matching edge] (Ar) -- (Cl);
    \end{tikzpicture}
  \end{subfigure}
\caption{The \enquote{$\bindnasrepma$ of $\mathtt{ax}$} configuration and its
  graphification.}%
  \label{fig:par-of-ax}
\end{figure}

\begin{rem}
  To extend Remark~\ref{rem-unclear}, there is no clear relationship between
  graphifications and either of the two translations seen until now (RB-graphs
  and proofifications).
\end{rem}

Just like RB-graphs, graphifications provide a linear time and $\mathsf{AC}^0$
reduction from \textsc{MixCorr} to \textsc{UniquenessPM}: the complexity results
of the previous section could have been obtained using graphifications (as was
done in the conference version). We focus on the soundness of the reduction,
since its complexity is more or less intuitive.
\begin{prop}[Graphification-based correctness criterion]%
  \label{graphification-correctness}
  A proof structure satisfies the Danos--Regnier correctness criterion for 
  MLL+Mix if and only if the perfect matching of its graphification is unique.
\end{prop}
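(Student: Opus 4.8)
The plan is to exhibit a bijection between switching cycles in a proof structure $\pi$ and alternating cycles for the perfect matching $M$ in its graphification $G$, after which the statement follows from the Danos--Regnier criterion (Theorem~\ref{danos-regnier}, in its remark form: $\pi$ is an MLL+Mix proof net iff it has no switching cycle) combined with Berge's lemma (Lemma~\ref{berge}: $M$ is unique iff it admits no alternating cycle). This is the same strategy already used for RB-graphs (Proposition~\ref{prop-rb-bijection} and Corollary~\ref{cor-retore}) and for proofifications (Proposition~\ref{proofification-cycle}); only the translation has changed, so the real content is adapting the cycle correspondence to the new encoding.

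First I would describe the correspondence on edges. In $G$, each link $l$ contributes the matching edge $(a_l,b_l)$, and the non-matching edges of $G$ encode the two incoming edges of each $\otimes$- or $\bindnasrepma$-link as in Figure~\ref{fig:graphification-rules}: for a $\otimes$-link the two premises attach to \emph{different} endpoints $C_l, C_r$, whereas for a $\bindnasrepma$-link both premises attach to the \emph{same} endpoint $C_l$. The key observation is that this asymmetry is exactly what encodes switchings: a path through the matching edge $(a_l,b_l)$ of a $\otimes$-link may enter via one premise-endpoint and leave via the other (so both premises of a $\otimes$ can be used on the same traversal), while a path through the matching edge of a $\bindnasrepma$-link that enters and leaves through non-matching edges must do so at the single endpoint $C_l$, and since our cycles are vertex-disjoint it can therefore pass through \emph{at most one} of the two premise-edges of that $\bindnasrepma$-link. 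That is precisely the switching-cycle condition — intersect each $\bindnasrepma$-pair at most once. The edge $(a_l, b_l)$ for an $\mathtt{ax}$-link, and the endpoint through which a $\bindnasrepma$-premise edge $e$ leaves on the ``far'' side, serve to route the cycle along the edges of $\pi$ in the expected way.

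Concretely, I would define the map as follows. Given a switching cycle in $\pi$, i.e.\ a cyclic sequence of edges of $\pi$ alternately directed up and down and meeting each $\bindnasrepma$-pair at most once, replace each edge $e$ of $\pi$ from link $l$ to link $l'$ by the non-matching edge of $G$ it induces (one of the two ``premise slots'' of $l'$, attached on one side to an endpoint of $(a_l,b_l)$ and on the other to $C_l$ or $C_r$ of $l'$), and interleave the matching edges $(a_l,b_l)$ for the links visited; one checks this yields an alternating closed walk in $G$ with no repeated vertex, hence an alternating cycle. Conversely, an alternating cycle $\Gamma$ in $G$ uses each traversed matching edge once; reading off, for each such $(a_l,b_l)$, the two non-matching edges of $G$ adjacent to $\Gamma$ at $a_l$ and at $b_l$, and translating them back to edges of $\pi$, produces a cyclic sequence of edges of $\pi$; the disjointness of $\Gamma$'s vertices forces the $\bindnasrepma$-switching condition (via the shared-endpoint remark above) and the $\otimes$-endpoint structure forces the up/down alternation, so this is a switching cycle. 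The two maps are mutually inverse by inspection of Figure~\ref{fig:graphification-rules}.

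The main obstacle — and the only place requiring care — is the case analysis at $\bindnasrepma$-links with a shared endpoint, particularly the ``$\bindnasrepma$ of $\mathtt{ax}$'' configuration of Figure~\ref{fig:par-of-ax} where both premises come from the same $\mathtt{ax}$-link and (under our simple-graph convention) collapse to a single non-matching edge: one must check that in this degenerate case an alternating cycle cannot sneak ``through'' the $\bindnasrepma$ using what in $\pi$ would be both of its premises, which is exactly what the vertex-disjointness of the cycle rules out, matching the fact that a switching cycle can use at most one premise of that $\bindnasrepma$. Everything else is a routine unwinding of the translation rules. Once the bijection is established, the proposition is immediate: $\pi$ is an MLL+Mix proof net $\iff$ $\pi$ has no switching cycle $\iff$ $G$ has no alternating cycle for $M$ $\iff$ $M$ is the unique perfect matching of $G$.
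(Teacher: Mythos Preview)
Your overall strategy coincides with the paper's: show that $\pi$ has a switching cycle if and only if its graphification $(G,M)$ has an alternating cycle, then conclude via Danos--Regnier and Berge. However, your claim of a \emph{bijection} between switching cycles and alternating cycles is incorrect, and the paper explicitly flags this right after its own proof, remarking that \enquote{the situation was a bit nicer for RB-graphs, with an actual bijection between cycles \ldots\ unlike the case of graphifications}. The issue is already visible in your description: an edge of $\pi$ from a link $l$ to its successor $l'$ does not induce \emph{the} non-matching edge of $G$ but \emph{two} of them, since the relevant premise endpoint of $l'$ is joined to both $a_l$ and $b_l$ (Figure~\ref{fig:graphification-rules}). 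Hence, when lifting a switching cycle to an alternating cycle there are genuine choices --- in particular at $\mathtt{ax}$-links --- and several alternating cycles project to the same switching cycle; the paper only builds a surjection from alternating cycles to switching cycles together with a right inverse.

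This does not invalidate the proposition, which needs only the two existence implications. But it means your sentence \enquote{replace each edge $e$ of $\pi$ \ldots\ by the non-matching edge of $G$ it induces} is underspecified, and \enquote{one checks this yields an alternating closed walk in $G$ with no repeated vertex} hides the actual work: the choices must be made coherently so that at every matching edge one enters and leaves by distinct endpoints. The paper handles this by partitioning the switching cycle into maximal directed subpaths from $\mathtt{ax}$-links to $\otimes$-links and, at each intermediate link $l$ with predecessor $p$ and successor $s$ along such a path, selecting as exit towards $s$ the unique endpoint of $(a_l,b_l)$ that is \emph{not} adjacent to both endpoints of $p$'s matching edge; at $\mathtt{ax}$-links an arbitrary choice is made. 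So the place really requiring care is this coherent selection of non-matching edges, not the $\bindnasrepma$-of-$\mathtt{ax}$ degeneracy you single out.
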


\begin{proof}
  By negating the two sides of the equivalence, the goal becomes proving that a
  proof structure $\pi$ contains a switching cycle if and only if its graphification
  $(G,M)$ contains an alternating cycle.

  Consider any alternating cycle for $M$ in $G$ of length $2n$, and take the
  $\mathbb{Z}/(n)$-indexed sequence of vertices corresponding to the matching
  edges in the cycle. By construction of the graphification, if two edges in $M$
  are incident to a common non-matching edge, then the corresponding links in
  $\pi$ are adjacent: thus, in our sequence, each vertex is adjacent to the
  previous and the next one, and thus we have a cycle. If it were not a
  switching cycle, it would contain three consecutive links $p, q, r$ with $q$ a
  $\bindnasrepma$-link and $p, r$ its predecessors\footnote{To expand on this
    point: this is because we have prohibited vertex repetitions in our
    definition of cycles. This is legitimate since a graph is a forest if and
    only if it does not contain a non-vertex-repeating cycle.}; but then the
  alternating cycle would have to cross two incident non-matching edges (from
  $p$ to $q$ and from $q$ to $r$), which is impossible. Thus, $\pi$ contains a
  switching cycle.

  To show the converse we will exhibit a right inverse to the map from
  alternating cycles to switching cycles defined above. Consider a switching
  cycle: it can be partitioned into directed paths from $\mathtt{ax}$-links to
  $\otimes$-links. Let $l$ be an intermediate link in such a path, and $e, p, s$
  be matching edges corresponding respectively to $l$, its predecessor, and its
  successor in the directed path. $s$ has a unique endpoint $u$ which is
  incident to both endpoints of $e$; $e$ has a unique endpoint $v$ which is
  \emph{not} incident to both endpoints of $p$. To join $e$ with $s$, we use the
  edge $(u,v)$. By taking all these non-matching edges for all maximal directed
  paths in the cycle, as well as a choice of two edges incident to each matching
  edge corresponding to an $\mathtt{ax}$-link, and the matching edges $(a_l,
  b_l)$ corresponding to all the links $l$ in the cycle, we get an alternating
  cycle.
\end{proof}

The situation was a bit nicer for RB-graphs, with an actual bijection between
cycles (Proposition~\ref{prop-rb-bijection}) unlike the case of graphifications.
That said, the main technical advantages of the latter that we sought are
summarized by the following properties.

\begin{lem}%
  \label{lemma-graphification-conclusion}
  Let $\pi$ be a proof structure with graphification $(G,M)$ and $l$ be a link of $\pi$
  such that $(a_l, b_l) \in M$ is a bridge of $G$. Then $l$ is a terminal link
  in $\pi$, and if $l$ is a $\otimes$-link, then removing $l$ from $\pi$
  disconnects its predecessors.
\end{lem}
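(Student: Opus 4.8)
The plan is to prove both conclusions by contraposition, using the elementary fact that an edge of a graph is a bridge exactly when deleting it puts its two endpoints in different connected components, together with a dictionary between weak connectivity in $\pi$ and connectivity in its graphification $G$. The first step is to read off from \Cref{fig:graphification-rules} how non-matching edges sit around a matching edge $(a_l, b_l)$: the endpoints $a_l, b_l$ receive non-matching edges only from (a) $l$ being a $\otimes$- or $\bindnasrepma$-link, which attaches the premises of $l$ to one or both of $a_l, b_l$ as prescribed by the figure, and (b) $l$ being a premise of some link $l'$, which joins \emph{both} $a_l$ and $b_l$ to the \emph{same} endpoint of $(a_{l'}, b_{l'})$. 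Item (b) already yields the terminality claim: if $l$ is not terminal, choose an outgoing edge $l \to l'$; then $a_l$ and $b_l$ share a neighbour $c \in \{a_{l'}, b_{l'}\}$, with $c \notin \{a_l, b_l\}$ since $\pi$ is acyclic, so $a_l - c - b_l$ is a path in $G$ avoiding $(a_l, b_l)$ and $(a_l, b_l)$ is not a bridge. Contraposing, if $(a_l, b_l)$ is a bridge then $l$ is terminal.

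Next I would set up the connectivity dictionary: two links $m, m'$ lie in the same weakly connected component of $\pi$ if and only if $\{a_m, b_m\}$ and $\{a_{m'}, b_{m'}\}$ lie in the same component of $G$. One direction uses that the matching edge $(a_m, b_m)$ joins the two vertices of $m$ and that every edge of $\pi$ --- whose target is a $\otimes$- or $\bindnasrepma$-link, as $\mathtt{ax}$-links have indegree $0$ --- contributes a non-matching edge between the vertex pairs of its endpoints; the other direction uses that, conversely, every edge of $G$ stays within one link's pair or runs between the pairs of two $\pi$-adjacent links. The same statement applies to $\pi \setminus \{l\}$: since a terminal link is a premise of nothing, removing $l$ and its two incoming edges turns $G$ into precisely the graphification $G'$ of $\pi \setminus \{l\}$, namely $G$ with the vertices $a_l, b_l$ and their incident edges deleted --- and $\pi \setminus \{l\}$ is still a proof structure, as deleting a terminal link leaves all indegrees untouched.

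For the second conclusion, assume $l$ is a $\otimes$-link whose matching edge is a bridge and let $p, q$ be its left and right premises; by terminality and the $\otimes$-rule of \Cref{fig:graphification-rules}, one endpoint of $(a_l,b_l)$ --- say $a_l$ --- is adjacent exactly to $b_l, a_p, b_p$ and the other, $b_l$, exactly to $a_l, a_q, b_q$. If $p = q$ then $a_l - a_p - b_l$ avoids $(a_l, b_l)$, contradicting the bridge hypothesis, so $p \neq q$. Let $H$ be $G$ with the edge $(a_l, b_l)$ removed; since this edge is a bridge, $a_l$ and $b_l$ fall into distinct components $X \supseteq \{a_l, a_p, b_p\}$ and $Y \supseteq \{b_l, a_q, b_q\}$ of $H$. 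As $G'$ is obtained from $H$ by further deleting the vertices $a_l, b_l$, it is a subgraph of $H$, so each of its components is contained in one of $X, Y$ (or in some other component of $H$); the matching edges $(a_p, b_p)$ and $(a_q, b_q)$ survive in $G'$, so $\{a_p, b_p\}$ lies in a single component of $G'$ inside $X$ and $\{a_q, b_q\}$ in a single component inside $Y$, and these are distinct because $X \cap Y = \emptyset$. The connectivity dictionary applied to $\pi \setminus \{l\}$ and $G'$ then puts $p$ and $q$ in distinct weakly connected components of $\pi \setminus \{l\}$, which is the assertion that removing $l$ disconnects its predecessors.

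The point needing the most care is the interface between the two halves of the argument: verifying that deleting the two vertices of a \emph{terminal} link $l$ from $G$ really produces the graphification of $\pi \setminus \{l\}$ --- this is exactly where the terminality obtained in the first step is indispensable, since for a non-terminal $l$ the deletion would also destroy non-matching edges incident to $a_l, b_l$ that encode incoming edges of \emph{other} links --- and that the connectivity dictionary therefore transfers to $\pi \setminus \{l\}$ verbatim. Beyond that, everything is a direct unwinding of the graphification rules.
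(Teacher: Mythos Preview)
Your proof is correct and follows the same approach as the paper's: both obtain terminality by exhibiting the length-$2$ path through a successor's endpoint, and both obtain the $\otimes$-splitting from the fact that in $G$ with $(a_l,b_l)$ deleted the two predecessors' matching edges fall into the two components containing $a_l$ and $b_l$ respectively. Your write-up is in fact more thorough on the second claim --- you spell out the connectivity dictionary and the passage from $G\setminus\{(a_l,b_l)\}$ to the graphification of $\pi\setminus\{l\}$ --- whereas the paper leaves that translation implicit.
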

\begin{proof}
  Suppose for contradiction that $l$ is not a terminal link, and let $l'$ be a
  successor of $l$. Then for some endpoint $v$ of $(a_{l'}, b_{l'})$, $(a_l, v)$
  and $(b_l, v)$ are both edges in $G$, and they make up a path between $a_l$
  and $b_l$ not going through $(a_l, b_l)$. Thus, $(a_l, b_l)$ cannot be a
  bridge.

  The fact that $(a_l,b_l)$ is a bridge means that by removing this edge, $a_l$
  and $b_l$ are in different connected components; if $l$ is a $\otimes$-link,
  each of these connected components contain the matching edge corresponding to
  one predecessor of $l$.
\end{proof}

\begin{thm}%
  \label{graphification-seq-bijection}
  Let $\pi$ be a proof structure and $(G,M)$ be its graphification. There is a
  bijection between the sequent calculus proofs corresponding to $\pi$ (if any)
  and the sequentializations (i.e., the derivation trees for the inductive
  definition of Theorem~\ref{upm-sequentialization}) of $(G,M)$ (if any),
  through which occurrences of Mix rules correspond to disjoint unions and
  conversely.
\end{thm}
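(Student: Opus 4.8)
The plan is to proceed by induction, establishing a correspondence between the \emph{last} rule of a sequent calculus proof of $\pi$ and the \emph{last} rule of a sequentialization of $(G,M)$. First I would dispose of the degenerate case: if $\pi$ is not a proof net then it has no sequent calculus proof, while by Proposition~\ref{graphification-correctness} the matching $M$ is not unique, so $(G,M)\notin\mathcal{UPM}$ has no sequentialization either, and both sides are empty. So from now on assume $\pi$ is a proof net, hence $(G,M)\in\mathcal{UPM}$. Recall that the sequent calculus proofs of $\pi$ are exactly the derivation trees witnessing membership of $\pi$ in the inductively defined set of MLL+Mix proof nets (Definition~\ref{def-proof-nets}), and sequentializations of $(G,M)$ are the derivation trees for the inductive definition of Theorem~\ref{upm-sequentialization}.

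The induction rests on two observations. The first is that graphification is compatible with connected components: since the non-matching edges of Figure~\ref{fig:graphification-rules} only ever link the matching edge of a link to those of its premises and successor, no such edge joins two connected components of (the underlying undirected graph of) $\pi$, so the components of $G$ are in bijection with those of $\pi$, and the graphification of a union of components of $\pi$ is the corresponding union of components of $G$. In particular such a union of components is again a proof structure, hence by Theorem~\ref{danos-regnier} again a proof net; this makes the disjoint-union rule of Theorem~\ref{upm-sequentialization} correspond exactly to the Mix rule, the ingredients being the graphifications of the two sub-proof-nets.

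The second observation, where the real work lies, classifies the Kotzig-style rule of Theorem~\ref{upm-sequentialization} in terms of $\pi$. The \enquote{new} matching edge $(x,x')$ of such a rule is some $(a_l,b_l)$ corresponding to a link $l$ of $\pi$, and it is plainly a bridge of $G$ (its endpoints attach only to the two disjoint ingredients), so by Lemma~\ref{lemma-graphification-conclusion} $l$ is a terminal link, and if $l$ is a $\otimes$-link then removing $l$ disconnects its two premises. Conversely I would read off, from Figure~\ref{fig:graphification-rules}, the two non-matching neighbourhoods $N_G(a_l), N_G(b_l)$: for a terminal $\mathtt{ax}$-link both are empty; for a terminal $\bindnasrepma$-link one of them collects all endpoints of the matching edges of $l$'s premises and the other is empty; for a terminal, splitting $\otimes$-link they are the endpoint-pairs of the matching edges of the two (distinct) premises, hence disjoint. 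Since the subsets $U,U'$ of a Kotzig rule are exactly $N_G(x), N_G(x')$ and must lie in the disjoint vertex sets of its two ingredients, this forces: in the $\mathtt{ax}$ case, both ingredients to be the empty graph, so $\pi$ is a single $\mathtt{ax}$-link and we match the $\mathtt{ax}$-rule; in the $\bindnasrepma$ case, one ingredient to be the empty graph and the other the graphification of $\pi$ with $l$ removed, matching the $\bindnasrepma$-rule; in the $\otimes$ case, the ingredients to split the components of $G\setminus\{a_l,b_l\}$ into two parts respectively containing the matching edges of the two premises, which by the first observation matches the choice of sub-proof-nets in the $\otimes$-rule. Running the same computation in reverse shows that every $\mathtt{ax}$-, $\bindnasrepma$- or $\otimes$-rule applied to $\pi$ does yield a valid Kotzig rule on $(G,M)$, its ingredients being graphifications of proof nets (first observation), hence in $\mathcal{UPM}$ by Proposition~\ref{graphification-correctness}.

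Assembling: the last rule of a sequent calculus proof of $\pi$ determines and is determined by the last rule of a sequentialization of $(G,M)$, together with, for each non-empty ingredient, a sequentialization of the graphification of the corresponding (strictly smaller) sub-proof-net, any empty-graph ingredient contributing only its unique leaf sequentialization; applying the induction hypothesis to the smaller proof nets yields the bijection, under which Mix rules correspond to disjoint unions by construction. I expect the main obstacle to be precisely the neighbourhood classification of the Kotzig rule — in particular recognising that an $\mathtt{ax}$-link has $N_G(a_l)=N_G(b_l)$ and a terminal $\bindnasrepma$-link has one empty neighbourhood, which is what forces those two cases into degenerate Kotzig rules with one or two empty ingredients and thereby reconciles the binary Kotzig rule with the nullary $\mathtt{ax}$-rule and the unary $\bindnasrepma$-rule. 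A subsidiary point to handle cleanly is the symmetry of the Kotzig rule under exchanging its two ingredients versus the possible ordering of a link's premises, which is dealt with by treating the pair of ingredients as unordered.
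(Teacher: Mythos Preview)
Your proposal is correct and follows essentially the same approach as the paper: an induction on the last rule, using Lemma~\ref{lemma-graphification-conclusion} to force the Kotzig bridge to correspond to a terminal link, then a case analysis on the emptiness of the two ingredients to recover the $\mathtt{ax}$/$\bindnasrepma$/$\otimes$ trichotomy, with disjoint unions matched to Mix. Your write-up is in fact more careful than the paper's, which dispatches bijectivity in a single sentence; your explicit treatment of the degenerate non-proof-net case, the compatibility of graphification with connected components, and the ordering/symmetry remark are all points the paper leaves implicit.
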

This entails, in particular, the analogous property to
Proposition~\ref{proofification-bridge} for graphifications.
\begin{proof}
  We convert a sequentialization $S$ of $(G,M)$ into a sequentialization
  $\Sigma$ of $\pi$ inductively as follows. Since $G \neq \emptyset$, the last
  rule of $S$ is either a disjoint union or the introduction of a bridge $e =
  (a_l,b_l) \in M$ by joining together $(G_a,M_a)$ and $(G_b, M_b)$ with
  respective sequentializations $S_a$ and $S_b$. In the latter case, $l$ is a
  terminal link of $\pi$.
  \begin{itemize}
  \item If $G_a = G_b = \emptyset$, then $l$ is an $\mathtt{ax}$-link, and
    $\Sigma$ consists of a single $\mathtt{ax}$-rule.
  \item If $G_a \neq \emptyset$ and $G_b = \emptyset$, then $l$ is a
    $\bindnasrepma$-link, and the removal of $l$ from $\pi$ yields a proof
    structure $\pi'$ whose graphification is $(G_a,M_a)$. $\Sigma$ then consists of a
    $\bindnasrepma$-rule introducing $l$ applied to the sequentialization of
    $\pi'$ corresponding to $S_a$.
  \item If $G_a \neq \emptyset$ and $G_b \neq \emptyset$, then $l$ is a
    $\otimes$-link. Since $e$ is a bridge, the removal of $l$ from $\pi$ yields
    two proof structures $\pi_a$ and $\pi_b$ whose respective graphifications are $(G_a,
    M_a)$ and $(G_b,M_b)$. $\Sigma$ then consists of an $\otimes$-rule applied
    to the translations of $S_a$ and $S_b$.
  \end{itemize}
  If the last rule of $S$ is a disjoint union rule, it is translated into a Mix
  rule in $\Sigma$.

  The bijectivity can be proven by defining the inverse transformation and by
  checking that it is indeed its inverse.
\end{proof}

In particular, $\pi$ is a MLL+Mix proof net if and only $(G,M)$ admits a
sequentialization, that is, according to Theorem~\ref{upm-sequentialization}, if
and only if $M$ is the only perfect matching of~$G$.
Proposition~\ref{graphification-correctness} tells us that this is equivalent to $\pi$
satisfying the Danos--Regnier acyclicity criterion. Therefore, this criterion 
characterizes MLL+Mix proof nets: as we wanted, we just proved the
sequentialization theorem for MLL+Mix (Theorem~\ref{danos-regnier}).

\subsection{A sequentialization algorithm for MLL+Mix proof nets}%
\label{sec:sequentialization}

In~\Cref{sec:linear-time}, we saw how to decide MLL+Mix correctness in linear
time, matching the known time complexity for MLL correctness. But the algorithms
for MLL correctness still have an advantage: they can compute a
sequentialization in linear time, whereas we only have a decision procedure for
\textsc{MixCorr} which returns a yes/no answer\footnote{It can find a switching
  cycle, witnessing incorrectness, but cannot produce a certificate of
  correctness.}. We do not know how to compute MLL+Mix sequentializations in
linear time. Nevertheless, by applying our bridge between proof nets and graph
theory, we get the first \emph{quasi-linear} time algorithm for \textsc{MixSeq}.
The beginning of the next section will discuss why the problem seems harder with
Mix.

Our algorithm proceeds by first determining the root of the derivation tree and
the link it introduces. To obtain the children of the root, it suffices to
recurse on the connected components created by removing this link.

Furthermore, through the correspondence of Theorem~\ref{graphification-seq-bijection},
finding a link which is introduced by the last rule of some sequentialization
amounts to finding a bridge in the matching of the graphification of the proof net (\cf\
\Cref{sec:graphification}). This is in fact a bit more convenient with graphifications than with
general unique perfect matchings, thanks to the following property:

\begin{lem}%
  \label{sequentialization-lemma}
  All bridges in the graphification of some proof structure are matching edges.
\end{lem}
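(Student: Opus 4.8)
The plan is to analyze the structure of an arbitrary non-matching edge of the graphification $(G,M)$ and show it cannot be a bridge. Recall from the definition that $V = \bigcup_{l \in L}\{a_l, b_l\}$ and that every non-matching edge arises from the rules of Figure~\ref{fig:graphification-rules}: it connects an endpoint of the matching edge $(a_l,b_l)$ of some link $l$ to an endpoint of the matching edge $(a_{l'},b_{l'})$ of a link $l'$ which is the target ($\otimes$- or $\bindnasrepma$-link) of an edge $l \to l'$ in $\pi$. The key observation is that these rules always produce \emph{two} non-matching edges for each incoming edge of $l'$: in the $\otimes$ case, both $a_l$ and $b_l$ are joined to the same endpoint ($C_l$-side) of $(a_{l'},b_{l'})$; in the $\bindnasrepma$ case the two premises go to endpoints determined by the rule, but again each single premise link $l$ contributes \emph{both} of its vertices $a_l, b_l$ to non-matching edges hitting a common vertex on the $l'$ side.

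First I would make this precise: let $f = (p, q) \in E \setminus M$ be an arbitrary non-matching edge, say coming from an edge $l \to l'$ in $\pi$ with $p \in \{a_l, b_l\}$ and $q \in \{a_{l'}, b_{l'}\}$. By inspecting both rules of Figure~\ref{fig:graphification-rules}, the vertex $p' \in \{a_l,b_l\} \setminus \{p\}$ is also joined by a non-matching edge to the \emph{same} vertex $q$. Then the path $p \,\text{---}\, q \,\text{---}\, p' \,\text{---}\, p$, where the first and second steps are non-matching edges and the last step is the matching edge $(a_l,b_l) = (p,p')$, is a cycle of length $3$ through $f$. Hence $f$ lies on a cycle, so removing $f$ does not disconnect $G$, i.e.\ $f$ is not a bridge. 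Therefore every bridge of $G$ lies in $M$, which is the claim.

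The only thing requiring a little care — and the main (small) obstacle — is the exhaustive check that in \emph{every} translation rule, the two endpoints $a_l, b_l$ of a premise link's matching edge really do share a common non-matching neighbour on the target side. For the $\otimes$-rule this is immediate since both premises plug into the single vertex $C_l$. For the $\bindnasrepma$-rule one must look at Figure~\ref{fig:graphification-rules} and the \enquote{$\bindnasrepma$ of $\mathtt{ax}$} degenerate case of Figure~\ref{fig:par-of-ax}: there the two premise edges coincide (both come from the same $\mathtt{ax}$-link), and the rule still sends $a_l$ and $b_l$ to a common vertex $C_l$, so the triangle argument goes through. One should also note that the simple-graph convention adopted for the \enquote{$\bindnasrepma$ of $\mathtt{ax}$} configuration does not affect this, since even if we kept parallel edges a $2$-cycle would still witness non-bridge-ness; but with the simple-graph convention the triangle above has three distinct vertices $p, q, p'$ and is a genuine cycle in the sense of \Cref{sec:terminology}. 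Once this case analysis is recorded, the proof is complete.
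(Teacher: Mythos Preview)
Your proof is correct and follows essentially the same route as the paper's: both observe that any non-matching edge $(p,q)$ has a \enquote{twin} non-matching edge $(p',q)$, where $p'$ is the other endpoint of the predecessor's matching edge, and together with that matching edge this yields a triangle through $(p,q)$. Your write-up is more explicit about the case analysis and the degenerate $\bindnasrepma$-of-$\mathtt{ax}$ situation, but the argument is the same.
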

\begin{proof}
  Let $e$ be a non-matching edge. Then there are matching edges $(u,v)$ and
  $(s,t)$ such that the link corresponding to $(u,v)$ is the predecessor of the
  one for $(s,t)$, and $e = (u,s)$. The non-matching edge $(v,s)$ is then also
  present in the graph, and so $e$ cannot be a bridge.
\end{proof}

The algorithm will alternate between finding and deleting bridges; a deletion
may cut cycles and thus create new bridges, which we want to detect without
traversing the entire graph each time. To do so, we use a \emph{dynamic
  bridge-finding data structure} designed for this kind of use case by Holm et
al.~\cite{holm_dynamic_2018}. It keeps an internal state corresponding to a
graph, whose set of $n$ vertices is immutable but whose set of edges may vary,
and supports the following operations in $O({(\log n)}^2 {(\log \log n)}^2)$
amortized time:
\begin{itemize}
\item updating the graph by inserting or deleting an edge;
\item computing the number of vertices of the connected component of a given vertex;
\item finding a bridge in the connected component of a given vertex;
\item determining whether two vertices are in the same connected component.
\end{itemize}

\begin{thm}%
  \label{quasi-linear-decomposition}
  \textsc{MixSeq} can be solved in $O(n {(\log n)}^2 {(\log \log n)}^2)$ time.
\end{thm}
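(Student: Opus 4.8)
The plan is to turn the recursive description sketched just before the theorem into an algorithm whose running time is dominated by $O(n)$ calls to the dynamic bridge-finding data structure of Holm et al., each costing $O((\log n)^2 (\log\log n)^2)$ amortized time. First I would compute the graphification $(G,M)$ of the input proof net $\pi$; by Proposition~\ref{graphification-correctness} and the remark that this reduction is computable in linear time, this costs $O(n)$ and produces a graph with $\Theta(n)$ vertices and edges, equipped with its perfect matching. I would then initialise the dynamic data structure with all edges of $G$. Note that by Theorem~\ref{graphification-seq-bijection} a sequentialization of $\pi$ corresponds to a derivation tree for $(G,M)$ in the sense of Theorem~\ref{upm-sequentialization}, and by Lemma~\ref{lemma-graphification-conclusion} and Lemma~\ref{sequentialization-lemma} the links that may be introduced by a last rule are exactly the bridges of $G$, all of which are matching edges.

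The core of the algorithm is a recursion mimicking the inductive generation of $\mathcal{UPM}$. Given a connected component $K$ (identified by one of its vertices, via a query to the data structure), I would: if $K$ has two vertices, emit an $\mathtt{ax}$-rule and return; otherwise ask the data structure for a bridge $e=(a_l,b_l)$ in $K$ — which exists by Kotzig's theorem (Theorem~\ref{kotzig}) since $M$ restricted to $K$ is unique — delete $e$, and recurse on the connected component(s) of $a_l$ and of $b_l$. By Lemma~\ref{lemma-graphification-conclusion} the link $l$ is terminal, and the case split (both sides nonempty / one side empty / both empty) recovers whether $l$ is a $\otimes$-, $\bindnasrepma$-, or $\mathtt{ax}$-link exactly as in the proof of Theorem~\ref{graphification-seq-bijection}; when $a_l$ and $b_l$ land in the same component after deletion (which happens precisely when the other endpoints feed a single link, i.e. when some non-matching edge short-circuits them — impossible once $e$ is a bridge, so this case does not arise and the two branches are genuinely disjoint), we simply output a $\bindnasrepma$-rule on the one nonempty side. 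Disjoint-union (Mix) steps are handled by noticing, before touching bridges, whether the current component is in fact the union of several components of the ambient mutable graph; more cleanly, I would maintain the invariant that each recursive call works on a single connected component, and detect a needed Mix rule by checking whether the component obtained after removing a bridge is itself disconnected — but in the graphification there is a subtlety here, so instead I would pre-split $G$ into its connected components once at the top level (each corresponds to a maximal run of Mix rules) and recurse inside each. Emitting the derivation tree bottom-up as the recursion returns yields the sequent calculus proof by Theorem~\ref{graphification-seq-bijection}.

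For the complexity analysis: the recursion tree has exactly one leaf per $\mathtt{ax}$-link and one internal node per other link, so $O(n)$ nodes total; at each node we perform a constant number of data-structure operations (one bridge query, one edge deletion, a couple of component-size and same-component queries), each $O((\log n)^2(\log\log n)^2)$ amortized, and the edge deletions are $O(n)$ in total so the amortization is legitimate. All bookkeeping (maintaining which original link a matching edge corresponds to, assembling the output tree) is $O(1)$ per node with simple pointers. This gives the claimed $O(n(\log n)^2(\log\log n)^2)$ bound. The main obstacle I expect is not the timing but the correctness of the recursive descent in the presence of Mix: one must be careful that "remove the bridge, recurse on each side" correctly interleaves with "a connected component that is internally a disjoint union must be split by a Mix rule first," and that the graphification-specific simplifications (Lemmas~\ref{lemma-graphification-conclusion} and~\ref{sequentialization-lemma}, which guarantee bridges are matching edges and that bridge-removal cleanly exposes the premises of a terminal link) make this interleaving unambiguous — essentially, one re-runs the bijection proof of Theorem~\ref{graphification-seq-bijection} while checking that each step is implementable with the four supported operations and nothing more.
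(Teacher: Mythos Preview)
Your approach is essentially the paper's, but there is a concrete gap that makes the algorithm incorrect as stated: you delete only the bridge $(a_l,b_l)$, not the non-matching edges incident to $a_l$ and $b_l$. This breaks the invariant that the residual graph is the graphification of the sub-proof net obtained by removing $l$, and can in fact leave the current component with no bridge at all.

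Take the proof net of Figure~\ref{fig:proof-structure} and its graphification (Figure~\ref{fig:graphification-example}). The only bridge is the matching edge for the bottom $\bindnasrepma$-link. After deleting just this edge, the vertex $P2_l$ is still connected by four non-matching edges to $T_l, T_r, P1_l, P1_r$. Every remaining matching edge now lies on a cycle through $P2_l$: for instance $(P1_l,P1_r)$ is not a bridge because $P1_r$ reaches $P1_l$ via $P2_l$, and $(T_l,T_r)$ is not a bridge for the same reason. The bridge query returns nothing and your recursion is stuck. Kotzig's theorem does not rescue you, since the residual graph has two unmatched vertices and is no longer a graphification, so neither Theorem~\ref{kotzig} nor Lemma~\ref{sequentialization-lemma} applies.

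The same omission is the source of the ``subtlety'' about Mix that you sensed but did not resolve: with $a_l$ still wired to all four endpoints of the predecessors' matching edges, the two predecessors of a $\bindnasrepma$-link are \emph{always} in the same component after deleting the bridge, so you can never detect that a Mix rule is needed. Your fallback of pre-splitting only once at the top level handles only outermost Mix rules, not nested ones (a connected proof net whose last rule is a $\bindnasrepma$ applied to a disconnected premise is a counterexample).

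The paper's fix is exactly the missing step: along with the bridge, delete all remaining edges incident to the non-isolated endpoint (both endpoints in the $\otimes$ case). This restores the graphification of $\pi \setminus \{l\}$, so Kotzig applies recursively, and the same-component test on the two predecessors then correctly detects whether a Mix is needed at this step. Since vertices of a graphification have bounded degree, this is still $O(1)$ deletions per call (the paper bounds the total by $|E| = O(n)$), and after this correction your complexity analysis goes through.
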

\begin{proof}
  Let $\pi$ be a MLL+Mix proof net with $n$ links, and $(G = (V,E), M)$ be its
  graphification. Both $V$ and $E$ have cardinality $O(n)$ (in fact, $|V| = 2n$ and
  $|M| = n$).

  The algorithm starts by initializing the bridge-finding data structure $D$
  with the graph~$G$, computing the weakly connected components of $\pi$ in
  linear time, and selecting a link in each component. On each selected link
  $l$, we call the following recursive procedure; its role is to sequentialize
  the sub-proof net of $\pi$ containing $l$ whose graphification is a current connected
  component of $G$ ($G$ and $D$ being mutable global variables):
  \begin{itemize}
  \item Let $u$ be one endpoint of the matching edge corresponding to $l$. Using
    the bridge-finding structure, find a bridge $e = (v,w)$ in the component of
    $u$; necessarily, $e \in M$. Remove the edge $e$ from $G$ (and reflect this
    change on $D$ with a deletion operation).
  \item If both $v$ and $w$ are isolated vertices, $e$ corresponds to an
    $\mathtt{ax}$-link and the entire sub-proof net consisted of this link. In
    this case, return a sequentialization with a single $\mathtt{ax}$-rule.
  \item If one of $v$ and $w$ is isolated, and the other is not --- by symmetry,
    let us assume the latter is $v$ --- then $e$ corresponds to a
    $\bindnasrepma$-link $l'$. Let $p$ and $p'$ be its predecessors.
    \begin{itemize}
    \item Remove all edges incident to $v$.
    \item If the matching edges corresponding to $p$ and $p'$ are in the same
      connected component of $G$, recurse on $p$, add a final
      $\bindnasrepma$-link and return the resulting sequentialization.
    \item If $p$ and $p'$ are in different connected components of $G$, recurse
      on $p$ and $p'$, use the results as the two premises of a Mix rule, add a
      final $\bindnasrepma$-link and return the resulting sequentialization.
    \end{itemize}
  \item If neither $v$ nor $w$ is isolated, $e$ corresponds to a $\otimes$-link.
    This is handled similarly to the $\bindnasrepma$+Mix case above.
  \end{itemize}
  Let us evaluate the time complexity. At each recursive call, one bridge is
  eliminated from $G$, so the number of recursive calls is $n$. The cost of each
  recursive call is $O(1)$ except for the updates and queries of the
  bridge-finding data structure. In total, there are $|E| = O(n)$ deletions,
  $|M| = n$ bridge queries, and at most $n$ connectedness tests, and each of
  those takes $O({(\log n)}^2 {(\log \log n)}^2)$ amortized time. Hence the $O(n
  {(\log n)}^2 {(\log \log n)}^2)$ bound.
\end{proof}

\begin{rem}
  If we want to compute a sequentialization for a unique perfect matching, in
  general, a complication is the existence of bridges which are not in the
  matching.

  Interestingly, one can determine whether a bridge $e$ is in $M$ \emph{without
    looking at $M$}: it is the case if and only if both of the connected
  components created by removing $e$ have an odd number of vertices. This leads
  to an algorithm for \textsc{UniquePM}; it is virtually the same as the one
  proposed by Gabow et al.~\cite[\S2]{gabow_unique_2001}\footnote{Not to be
    confused with their algorithm for
    \textsc{UniquenessPM}~\cite[\S3]{gabow_unique_2001} that we used in
    \Cref{sec:linear-time}. They only claim a bound of $O(m \log^4 n)$ because
    the best dynamic 2-edge-connectivity data structure known at the time has
    operations in $O(\log^4 n)$ amortized time.}, from which we took our
  inspiration.
\end{rem}

\begin{rem}
  One needs to use a sparse representation for derivation trees: the size of a
  fully written-out sequent calculus proof is, in general, not linear in the
  size of its proof net.
\end{rem}

\section{On the kingdom ordering of links}%
\label{sec:kingdom-ordering}

One may wonder if we could not have just tweaked an algorithm for MLL
sequentialization into an algorithm for \textsc{MixSeq}. In order to argue to
the contrary, let us briefly mention a difference between Bellin and van de
Wiele's study of the sub-proof nets of MLL proof nets~\cite{bellin_subnets_1995}
and its extension to the MLL+Mix case by Bellin~\cite{bellin_subnets_1997}. Any
MLL sub-proof net of a MLL proof net may appear in the sequentialization of the
latter; however, for MLL+Mix, Figure~\ref{fig:counterexample} serves as a
counterexample: the sub-proof structure containing all links but the
$\otimes$-link is correct for MLL+Mix, but it cannot be an intermediate step in
a sequentialization of the entire proof net. A \emph{normality} condition is
needed to distinguish those sub-proof nets which may appear in a
sequentialization, and this is why sequentialization algorithms which are
morally based on a greedy parsing strategy, such as Guerrini's linear-time
algorithm~\cite{guerrini_linear_2011}, do not adapt well to the presence of the
Mix rule.

Any link $l$ in a MLL+Mix proof net $\pi$ admits a minimum normal sub-proof net
of $\pi$ containing $l$, its \emph{kingdom}~\cite{bellin_subnets_1997}. Bellin's
\emph{kingdom ordering} is the partial order on links corresponding to the
inclusion between kingdoms. We give an algorithm to compute this order for any
MLL+Mix proof net: this is yet another application of matching theory. It uses a
characterization of the kingdom ordering in terms of a relation called
\emph{dependency} by Bagnol et al.~\cite{bagnol_dependencies_2015} (who, in
turn, take this name from the closely related \emph{dependency graph} of Mogbil
and Naurois~\cite{jacobe_de_naurois_correctness_2011}). We will also see how
this dependency relation can be reformulated, through our correspondence between
proof structures and perfect matchings, in terms of the \emph{blossoms}
mentioned in \Cref{sec:pm} and \Cref{sec:linear-time}.

One may in fact define the kingdom ordering, written $\ll_\pi$, without
reference to the notion of normal sub-proof net (we will not introduce the
latter formally here):

\begin{defi}
  Let $\pi$ be a MLL+Mix proof net. For any two links $p,q$ of $\pi$, $p \ll_\pi
  q$ if and only if, in any sequentialization of $\pi$, the rule introducing $q$
  has, among its premises, a proof net containing $p$.
\end{defi}

From this point of view, the kingdom ordering gives us information about the set
of all sequentializations. Let us give some examples. The proof net of
Figure~\ref{fig:proofification-ex2} admits a unique sequentialization, so this
directly gives us the kingdom ordering: for instance the middle $\otimes$-link
is the greatest element. On the other hand, in the proof net of
Figure~\ref{fig:counterexample}, both $\bindnasrepma$-links may be introduced by a
last rule, so there is no greatest element. In fact, the kingdom ordering
coincides with the predecessor relation. So it does not distinguish between the
3 terminal links even though, unlike the 2 others, the $\otimes$-link cannot be
introduced last.

Before proceeding further, here is another property of MLL proof nets which is
contradicted by Figure~\ref{fig:counterexample} for MLL+Mix proof nets, providing more
evidence that \textsc{MixSeq} is trickier algorithmically than MLL
sequentialization.

\begin{prop}%
  \label{last-rule}
  Let $\pi$ be a MLL proof net and $l$ be a maximal link for $\ll_\pi$. Then
  there exists a sequentialization of $\pi$ whose last rule introduces $l$.
\end{prop}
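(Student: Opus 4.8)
The plan is to peel off easy cases until a single one genuinely requires work, namely showing that a maximal $l$ which happens to be a terminal $\otimes$-link must be \emph{splitting}. First I would observe that a link $l$ maximal for $\ll_\pi$ is terminal: if $l$ had a successor $l'$, then $l'$ is a $\otimes$- or $\bindnasrepma$-link having $l$ among its two predecessors, so in \emph{every} sequentialization the rule introducing $l'$ has, among its premises, a proof net containing $l$ --- i.e.\ $l \ll_\pi l'$ with $l \neq l'$, contradicting maximality. With $l$ terminal, two cases are immediate. If $l$ is an $\mathtt{ax}$-link it is an isolated vertex, and since an MLL proof net is connected (its switching graphs being trees, by Theorem~\ref{danos-regnier}) we get $\pi = \{l\}$, sequentialized by the $\mathtt{ax}$-rule alone. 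If $l$ is a $\bindnasrepma$-link, then $l$ is a leaf in every switching graph of $C(\pi)$, so deleting it leaves all switching graphs trees; hence $\pi \setminus \{l\}$ is an MLL proof net, and appending a $\bindnasrepma$-rule introducing $l$ to any of its sequentializations finishes this case (note maximality was not even needed here).

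The remaining case is $l$ a terminal $\otimes$-link with predecessors $u$ and $v$. Here I would reduce the goal to: \emph{$l$ is splitting}, meaning $u$ and $v$ lie in distinct connected components of $\pi \setminus \{l\}$. This suffices: in any switching graph of $\pi \setminus \{l\}$, obtained from a switching tree of $\pi$ by deleting the degree-two vertex $l$, the two resulting trees each span one of these components, so both components are MLL proof nets by Theorem~\ref{danos-regnier}; sequentializing each and joining them by a final $\otimes$-rule that introduces $l$ then proves the proposition. So it remains to establish the contrapositive: a terminal $\otimes$-link $l$ that is not splitting is not maximal for $\ll_\pi$.

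Assume $u$ and $v$ lie in the same component of $\pi \setminus \{l\}$, and pick a path $W$ between them avoiding $l$, of minimal length; closing $W$ up with the two edges into $l$ yields a cycle through $l$ in $\pi$. Since $\pi$ is a proof net it contains no switching cycle (the remark after Theorem~\ref{danos-regnier}), so this cycle must enter some $\bindnasrepma$-link $q$ --- necessarily distinct from $l$, and lying on $W$ --- through \emph{both} of its premise edges. The key claim would then be that every sub-proof-net of $\pi$ containing $q$ also contains $l$, i.e.\ that $l$ belongs to the kingdom of $q$; this gives $l \ll_\pi q$ with $q \neq l$, the desired contradiction.

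This last claim is where I expect the real work to be. The difficulty is that a sub-proof-net $N$ need only be a connected, switching-acyclic, vertex-induced subgraph closed under the predecessors of its $\otimes$- and $\bindnasrepma$-links, so it is not forced to contain any particular path joining two of its links. I would try to argue that if some such $N$ contained $q$ but not $l$, one could splice the portion of the cycle running through $l$ (outside $N$) together with a path inside $N$ to produce a switching cycle in $N$, contradicting the Danos--Regnier criterion for $N$, with the minimality of $W$ keeping the spliced object vertex-simple. An alternative and probably cleaner route is to go through the switching-graph description of \emph{empires} due to Bellin and van de Wiele~\cite{bellin_subnets_1995}: non-splitting means some vertex ``flips sides'' between the two trees $T_S \setminus \{l\}$ as the switching $S$ varies, and from such a flip one reads off a $\bindnasrepma$-link $q$ whose empire omits $l$ while no sub-proof-net through $q$ can, since those empire formulas already encapsulate exactly the combinatorial content needed. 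Either way, this identification of the right $q$ and the proof that $l$ lies in its kingdom is the technical heart of the proposition; everything else is bookkeeping.
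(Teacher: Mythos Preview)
Your overall reduction is right and matches the paper: maximality forces $l$ to be terminal, the $\mathtt{ax}$ and $\bindnasrepma$ cases are immediate, and for a terminal $\otimes$-link the task is exactly to show that $l$ is splitting.

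Where you diverge from the paper is in how you produce the $\bindnasrepma$-link witnessing non-maximality of a non-splitting $l$. You pick $q$ as a $\bindnasrepma$-link on a shortest $u$--$v$ path $W$ at which both premise edges are used, and then try to argue that $l$ lies in the kingdom of $q$. As you yourself flag, this is where the work piles up --- and the gap is genuine. Minimality of $W$ does not prevent it from passing through \emph{several} $\bindnasrepma$-links via both their premises, so the portion of your cycle from one predecessor of $q$ around through $l$ to the other is not in general a switching path, and you do not get $(l,q)\in D(\pi)$. Your splicing sketch would build a cycle that is neither contained in $N$ nor guaranteed to be switching; the appeal to empires is a pointer rather than an argument. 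So as written, the proposal does not close.

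The paper sidesteps this entirely by choosing the $\bindnasrepma$-link differently. Fix any sequentialization of $\pi$. When $l$ is introduced by its $\otimes$-rule it is splitting in the current sub-proof net; in $\pi$ it is not; and a $\otimes$-rule never changes the splitting status of $l$. Hence some $\bindnasrepma$-rule, applied to an MLL sub-proof net $\pi'$ in which $l$ is still splitting, introduces a link $p$ whose two predecessors lie in different components of $\pi'\setminus\{l\}$. Because $\pi'$ is an MLL proof net, every switching graph of $\pi'$ is a tree, so there \emph{is} a switching path in $\pi'$ between the predecessors of $p$, and it must cross $l$. That is $(l,p)\in D(\pi)$ outright, whence $l \ll_\pi p$ by the easy inclusion $D(\pi)\subseteq{\ll_\pi}$ from Theorem~\ref{kingdom-ordering-from-dep}. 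Using a sequentialization to locate $p$ is precisely what hands you the switching property for free; your route tries to extract the link from the raw graph and is then left to reconstruct that property by hand.
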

\begin{proof}
  If $l$ is a terminal $\bindnasrepma$-link, no other assumption is needed for
  the existence of such a sequentialization. Else, $l$ is a terminal
  $\otimes$-link and it suffices to show that $l$ is \emph{splitting}, i.e.,
  that the removal of $l$ splits $\pi$ into two connected components.

  Suppose that it is not the case, and consider some sequentialization of $\pi$:
  it must contain a $\bindnasrepma$-rule, applied to a sub-proof net $\pi'$ for
  which $l$ is splitting, which turns it into a sub-proof net for which $l$ is
  not splitting anymore. Let $p$ be the $\bindnasrepma$-link introduced by that
  rule; its predecessors lie in different connected components of $\pi'
  \setminus \{l\}$. Since $\pi'$ is a MLL proof net, the predecessors of $p$ are
  connected by a switching path in $\pi'$, which must cross $l$. This shows that
  $l$ is a dependency of $p$ in the sense of Definition~\ref{def-dependency},
  contradicting the maximality of $l$. (This only uses the fact that $D(\pi)
  \subseteq \ll_\pi$, which is the \enquote{easy} part of Bellin's theorem.)
\end{proof}

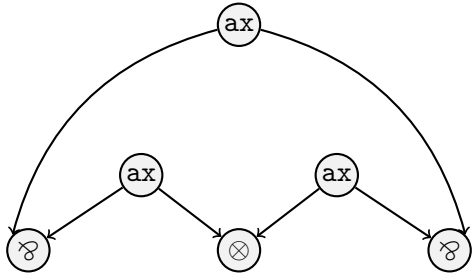
\begin{figure}
  \centering
  \begin{tikzpicture}
    \node[bigvertex] (P1) at (0.5,0) {$\bindnasrepma$};
    \node[bigvertex] (T) at (3.3,0) {$\otimes$};
    \node[bigvertex] (P2) at (6.1,0) {$\bindnasrepma$};
    \node[bigvertex] (Ax1) at (2,1) {$\mathtt{ax}$};
    \node[bigvertex] (Ax2) at (3.3,3) {$\mathtt{ax}$};
    \node[bigvertex] (Ax3) at (4.6,1) {$\mathtt{ax}$};

    \draw[black, thick, ->] (Ax2) to [bend right] (P1.north west);
    \draw[black, thick, ->] (Ax2) to [bend left] (P2.north east);
    \draw[black, thick, ->] (Ax1) -- (P1);
    \draw[black, thick, ->] (Ax1) -- (T);
    \draw[black, thick, ->] (Ax3) -- (T);
    \draw[black, thick, ->] (Ax3) -- (P2);
  \end{tikzpicture}
  \caption{A MLL+Mix proof net which highlights a difficulty in solving
    \textsc{MixSeq}.}%
  \label{fig:counterexample}
\end{figure}

\subsection{Computing the kingdom ordering}%
\label{sec:computing-kingdom}

\begin{defi}%
  \label{def-dependency}
  Let $\pi$ be a proof structure. We write $D(\pi)$ for the \emph{dependency
    relation} defined as follows: for any two links $p \neq q$ of $\pi$,
  \emph{$p$ is a dependency of $q$} when $q$ is a $\bindnasrepma$-link and there
  exists a switching path between the predecessors of $q$ going through $p$.
\end{defi}

For instance, in the proof net of Figure~\ref{fig:proofification-ex2}
(\Cref{sec:proofification}), the left $\bindnasrepma$-link depends on the left
$\otimes$-link, but not on the other $\otimes$-links or $\bindnasrepma$-links;
the middle $\otimes$-link has no dependency. In the case of
Figure~\ref{fig:counterexample}, the dependency relation is empty.

\begin{thm}[Bellin~{\cite[Lemma~2]{bellin_subnets_1997}}\footnote{This theorem
    was rediscovered by Bagnol et
    al.~\cite[Theorem~11]{bagnol_dependencies_2015} in the special case of MLL
    proof nets without Mix (they refer to the kingdom ordering as the
    \enquote{order of introduction}). We borrow the notations $D(\pi)$ and
    $S(\pi)$ from them.}]%
  \label{kingdom-ordering-from-dep}
  Let $\pi$ be a \emph{MLL+Mix proof net}. The transitive closure of $D(\pi)
  \cup S(\pi)$ is $\ll_\pi$, where $(p,q) \in S(\pi)$ means that $p$ is a
  predecessor of $q$.
\end{thm}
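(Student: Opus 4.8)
The plan is to prove the two inclusions $\text{tc}(D(\pi) \cup S(\pi)) \subseteq\ {\ll_\pi}$ and ${\ll_\pi}\ \subseteq \text{tc}(D(\pi) \cup S(\pi))$ separately, where $\text{tc}$ denotes transitive closure. Since $\ll_\pi$ is transitive by definition (it is an inclusion ordering between kingdoms, or equivalently it is clearly transitive from the sequentialization-based definition), for the first inclusion it suffices to show $D(\pi) \subseteq\ {\ll_\pi}$ and $S(\pi) \subseteq\ {\ll_\pi}$. The containment $S(\pi) \subseteq\ {\ll_\pi}$ is immediate: if $p$ is a predecessor of a link $q$, then in any sequentialization the rule introducing $q$ must already have $p$ available among the links of one of its premises (a link cannot be introduced before its predecessors). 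For $D(\pi) \subseteq\ {\ll_\pi}$ — the \enquote{easy} direction already invoked in the proof of Proposition~\ref{last-rule} — suppose $p$ is a dependency of a $\bindnasrepma$-link $q$, witnessed by a switching path $\gamma$ between the two predecessors of $q$ passing through $p$. In any sequentialization, consider the rule introducing $q$; it is a $\bindnasrepma$-rule whose premise is a sub-proof net $\pi'$ containing both predecessors of $q$. If $p \notin \pi'$, then $\gamma$ would have to leave and re-enter $\pi'$; but then, together with the edges from the predecessors of $q$ up to $q$, one would obtain a switching cycle, or an obstruction to $\pi'$ being a Danos--Regnier-correct sub-proof net — contradiction. (One must be slightly careful that $\gamma$ restricted appropriately yields a genuine switching path inside $\pi'$; the key point is that $\pi'$, being a sub-proof net, is switching-acyclic, so its two predecessor-links of $q$ are joined by a switching path inside $\pi'$, and the presence of the external path $\gamma$ through $p$ forces $p \in \pi'$.) Hence $p \ll_\pi q$.

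For the reverse inclusion ${\ll_\pi} \subseteq \text{tc}(D(\pi) \cup S(\pi))$, I would argue that if $p \ll_\pi q$ then $p$ lies in the kingdom of $q$, and show that the kingdom of $q$ is exactly the down-closure of $\{q\}$ under the relation $D(\pi) \cup S(\pi)$. Concretely, define $R$ to be the set of links reachable from $q$ by following $S(\pi)$-edges backwards (to predecessors) and $D(\pi)$-edges backwards (to dependencies of $\bindnasrepma$-links encountered). The plan is to show: (i) the sub-proof structure induced by $R$ is correct for MLL+Mix (i.e.\ switching-acyclic) and normal, so it is a sub-proof net containing $q$; (ii) $R$ is contained in \emph{every} sub-proof net containing $q$ that can occur in a sequentialization, by an argument dual to the first inclusion — if such a sub-proof net omits some link of $R$, one traces back along the $S$/$D$-path from $q$ to that link and derives a contradiction with correctness/normality of the sub-proof net. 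Together (i) and (ii) identify $R$ with the kingdom of $q$, and since $p \ll_\pi q$ means $p$ is in the kingdom of $q$, we get $p \in R$, i.e.\ $(p,q) \in \text{tc}(D(\pi) \cup S(\pi))$.

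The main obstacle is step (i) of the reverse inclusion: proving that the induced sub-structure on $R$ is actually a proof net (switching-acyclic and, crucially, \emph{normal} in Bellin's sense). Normality is exactly the subtlety that distinguishes MLL+Mix from MLL — as the counterexample of Figure~\ref{fig:counterexample} shows, not every correct sub-proof structure is normal, and it is normality that makes the kingdom well-defined as a \emph{minimum}. Verifying switching-acyclicity of $R$ should follow from $\pi$ being a proof net (any switching cycle inside $R$ is a switching cycle in $\pi$), but verifying that $R$ is downward-closed under the appropriate closure conditions for normal sub-proof nets — essentially that including a $\bindnasrepma$-link forces including enough links to \enquote{connect} its premises, which is precisely why we threw in the $D(\pi)$-edges — requires carefully unwinding Bellin's definition of normal sub-proof net. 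I would handle this by an induction on the number of links in $R$ outside $\{q\}$, peeling off a $D$- or $S$-maximal link and invoking the inductive characterization; alternatively, one may cite Bellin's original proof of Lemma~2 in~\cite{bellin_subnets_1997} for the MLL+Mix case and only supply the matching-theoretic repackaging, which is all the present paper really needs here. A secondary technical point to watch is the interaction with the Mix rule: kingdoms can be disconnected, and the $S(\pi)$-backward-closure correctly handles connected components being glued by Mix, but one should double-check that $D(\pi)$ never needs to \enquote{cross} a Mix boundary in a way that breaks the argument — it does not, since a switching path between the premises of a $\bindnasrepma$-link stays within a single connected component.
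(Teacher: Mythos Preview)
The paper does not prove this theorem: it is quoted from Bellin~\cite{bellin_subnets_1997} (and Bagnol et al.\ for the Mix-free case) and used as a black box. The paper later shows, via Propositions~\ref{blossom-graphification} and~\ref{blossom-proofification}, that Bellin's theorem is equivalent to a statement about blossoms in unique perfect matchings, and defers a self-contained proof of \emph{that} version to the companion paper~\cite{nguyen_constrained_2019}. So there is no in-paper argument to compare your sketch against; your overall outline --- prove $D(\pi)\cup S(\pi)\subseteq{\ll_\pi}$ directly, then identify the kingdom of $q$ with the $(D\cup S)$-downward closure of $\{q\}$ --- is indeed the shape of Bellin's own proof, and you correctly locate the crux of the hard direction in the normality condition.

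One concrete flaw, however: your argument for the \enquote{easy} inclusion $D(\pi)\subseteq{\ll_\pi}$ does not work as written. You claim that if the witnessing switching path $\gamma$ leaves the premise $\pi'$, then \enquote{together with the edges from the predecessors of $q$ up to $q$, one would obtain a switching cycle}. But the two incoming edges of a $\bindnasrepma$-link are \emph{paired}, so closing $\gamma$ through $q$ is never a switching cycle --- indeed this is exactly why $\gamma$ may exist in a correct $\pi$ in the first place. Your parenthetical fallback, that switching-acyclicity of $\pi'$ forces a switching path between the two predecessors \emph{inside} $\pi'$, also fails in the presence of Mix: $\pi'$ may be disconnected, with the two predecessors of $q$ in different components (this is precisely the phenomenon behind Figure~\ref{fig:counterexample}). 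The correct argument must use that $\pi'$ is downward closed in the DAG of $\pi$, so that $\gamma$ can only exit $\pi'$ towards a \emph{successor}, and then reason about how those external links are introduced later in the sequentialization --- already a non-trivial piece of Bellin's machinery, even for the \enquote{easy} direction.
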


The dependency relation can be computed by reduction to a matching problem
\emph{in the case of MLL+Mix proof nets}: even though it is well-defined in
arbitrary proof structures, we need MLL+Mix correctness to compute it, because
our matching algorithm relies on the absence of alternating cycles. It is mostly
a matter of applying a lemma from our paper~{\cite{nguyen_constrained_2019}};
since the latter has not been peer-reviewed as of the time of writing, we
reproduce the proof in the appendix.

\begin{lem}[{\cite{nguyen_constrained_2019}} / \Cref{appendix-proof-prescribed}]%
  \label{lemma-prescribed}
  Let $M$ be a matching of some graph $G = (V,E)$. Suppose that:
  \begin{itemize}
  \item there are \emph{no alternating cycles} for $M$ --- equivalently, $M$ is
    the unique perfect matching of the subgraph induced by the vertices matched
    by $M$;
  \item there are exactly two unmatched vertices $u,v$.
  \end{itemize}
  Then the existence of an alternating path for $M$ with endpoints $u,v$ and
  \emph{crossing a prescribed matching edge $e \in M$} can be reduced in
  $\mathsf{AC^0}$ to the existence of a perfect matching; furthermore, such a
  path can be found in linear time.
\end{lem}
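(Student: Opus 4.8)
The plan is to reduce the prescribed-edge alternating path problem to an ordinary perfect matching existence problem by a gadget construction, and then separately argue the linear-time search bound by exploiting the absence of alternating cycles. Write $e = (a,b) \in M$. An alternating path from $u$ to $v$ crossing $e$ can be split at $e$ into two alternating subpaths: one from $u$ to $a$ (or $b$) that does not use $e$, and one from $b$ (resp.\ $a$) to $v$. The first key observation is that, since $M$ restricted to the matched vertices is a \emph{unique} perfect matching (no alternating cycles), an alternating $M$-path between two $M$-unmatched vertices is essentially forced once one decides which matching edges it uses; in particular the two halves interact only through the shared edge $e$, so it suffices to detect, for each of the two orientations, the existence of an alternating $u$-to-$a$ path avoiding $e$ together with an alternating $b$-to-$v$ path.

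For the $\mathsf{AC}^0$ reduction, I would build an auxiliary graph $G'$ as follows. Delete the edge $e$ from $G$, turning $a$ and $b$ into two additional unmatched vertices, so that $G \setminus \{e\}$ now has four unmatched vertices $u, v, a, b$ with respect to $M$. The standard fact relating alternating paths to perfect matchings (the \enquote{add a parallel universe} or \enquote{two-copies} trick, \cf{} Lemma~\ref{lemma-prescribed}'s own hypothesis phrasing) says that an alternating $M$-path between unmatched vertices $x$ and $y$ exists iff the graph obtained by identifying/pairing up the remaining unmatched vertices appropriately has a perfect matching. Concretely, to force the path to go $u \rightsquigarrow a$ and $b \rightsquigarrow v$, add two fresh edges $(u, v')$-style connector vertices — more simply, add a new vertex $z$ adjacent to $u$ and $a$, and a new vertex $z'$ adjacent to $b$ and $v$: then $G'$ has a perfect matching extending $M$ iff there are vertex-disjoint alternating paths $u \rightsquigarrow a$ and $b \rightsquigarrow v$ in $G \setminus \{e\}$, which (re-inserting $e$) is exactly one of the two ways to get an alternating $u$-$v$ path through $e$. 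Running this for both orientations and taking the disjunction gives an $\mathsf{AC}^0$ reduction to perfect matching existence (two oracle calls, a constant-depth post-processing). The disjointness of the two halves is automatic here because the only shared vertices they could have are $a$ and $b$, which are the endpoints.

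For the linear-time search claim, I would not go through the matching reduction but instead use acyclicity directly. Because $M$ has no alternating cycle on its matched vertices, the alternating-path structure is \enquote{tree-like}: concretely, consider the alternating reachability relation and observe that from $u$ the set of vertices reachable by an alternating path, recorded together with the parity of arrival, forms something without the ambiguity that normally makes elementary alternating-path finding hard (the obstacle discussed in Remark~\ref{rem-alternating-difficult}). One can therefore perform a straightforward alternating graph traversal from $u$ and from $v$: the absence of alternating cycles guarantees that a naive traversal already produces an \emph{elementary} path, so we find the (essentially unique) alternating $u$-to-$a$ and $b$-to-$v$ segments in linear time, check they are vertex-disjoint, and concatenate through $e$.

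The main obstacle I anticipate is making the \enquote{tree-like} claim fully rigorous: I need to argue carefully that, under the no-alternating-cycle hypothesis, a depth-first alternating traversal never needs to backtrack past a matching edge and never revisits a vertex on a valid route, so that disjointness and elementariness come for free rather than requiring blossom-style bookkeeping. This is exactly the point where the uniqueness hypothesis does the real work, and it is also what distinguishes this lemma from the genuinely hard general alternating-path problem; I would spend most of the proof nailing down that the two traversals from $u$ and $v$ either meet correctly at $e$ or certify non-existence.
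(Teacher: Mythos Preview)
Your reduction has a real gap. The gadget you describe --- adding a fresh vertex $z$ adjacent to $u$ and $a$, and $z'$ adjacent to $b$ and $v$ --- does not encode what you claim. In any perfect matching $N$ of your $G'$, the vertex $z$ is matched to one of $u,a$ (say $u$) and $z'$ to one of $b,v$ (say $b$); then $N$ restricted to the original vertex set matches everything except $u$ and $b$, so by Berge's lemma the symmetric difference with $M' = M \setminus \{e\}$ is a \emph{single} augmenting path between $a$ and $v$, not a pair of vertex-disjoint paths $u\text{--}a$ and $b\text{--}v$. (Under the literal reading \enquote{extending $M$}, the situation is even worse: the only freedom left is matching the six exposed vertices among the four gadget edges, which tests nothing about alternating reachability.) Your disjointness remark also goes the wrong way: splitting a single $u$--$v$ path at $e$ gives disjoint halves trivially, but the hard direction is showing that two separately-found paths can be glued, and nothing in your construction enforces that.

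The insight you are missing is that the no-alternating-cycle hypothesis already does the pairing work for you, so no gadget and no orientation enumeration are needed. Just delete $e$ and ask whether $G' = (V, E \setminus \{e\})$ has a perfect matching. If it does, Berge's path lemma applied to $M'$ versus that matching yields two vertex-disjoint augmenting paths whose four endpoints are $\{u,v,a,b\}$ (no cycle components, by hypothesis). The pairing $\{u,v\},\{a,b\}$ is impossible because an $a$--$b$ augmenting path together with $e$ would be an alternating cycle for $M$; hence the two paths necessarily join $\{u,a\},\{b,v\}$ or $\{u,b\},\{a,v\}$, and concatenating through $e$ gives the prescribed path. That is the whole $\mathsf{AC}^0$ reduction: one perfect-matching instance.

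For the linear-time bound, your \enquote{tree-like so naive traversal suffices} claim is the exact pitfall flagged in Remark~\ref{rem-alternating-difficult}: blossoms exist even in the absence of alternating cycles (indeed the paper exhibits them in graphifications of correct proof nets), so a naive alternating DFS can still revisit vertices. The paper does not attempt to avoid blossom machinery; instead it observes that $M'$ leaves only four vertices exposed, so two successive augmenting-path searches --- each in linear time via the standard Gabow--Tarjan technology --- produce a perfect matching of $G'$ when one exists, from which the prescribed path is read off as above.
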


\begin{rem}
  An alternating path between unmatched vertices is often called an
  \emph{augmenting path}; combinatorial maximum matching algorithms generally
  work by iteratively searching for augmenting paths, see, \eg,~\cite[Chapter~9]{tarjan_data_1983}.
\end{rem}

\begin{thm}%
  \label{dependency-linear-time}
  Let $\pi$ be a \emph{MLL+Mix proof net} with a link $p$ and a
  $\bindnasrepma$-link $q$. Deciding whether $(p,q) \in D(\pi)$ can be done in
  linear time, in randomized $\mathsf{NC}$ and in
  $\mathsf{quasiNC}$.
\end{thm}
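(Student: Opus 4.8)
The plan is to reduce the dependency query $(p,q) \in D(\pi)$ to an instance of the prescribed-edge augmenting path problem solved by Lemma~\ref{lemma-prescribed}, via the graphification construction of \Cref{sec:graphification}. First I would take the graphification $(G,M)$ of $\pi$; since $\pi$ is a MLL+Mix proof net, Proposition~\ref{graphification-correctness} guarantees that $M$ is the unique perfect matching of $G$, so there are no alternating cycles --- this is exactly the first hypothesis of Lemma~\ref{lemma-prescribed}. A switching path between the two predecessors of the $\bindnasrepma$-link $q$ should, after translation, become an alternating path in $G$; the natural way to make the endpoints \emph{unmatched} (as required by the second hypothesis of the lemma) is to delete from $G$ the two matching edges $(a_r,b_r)$ and $(a_{r'},b_{r'})$ corresponding to the two predecessors $r,r'$ of $q$, and also the matching edge $(a_q,b_q)$ of $q$ itself, then look for an augmenting path between a vertex formerly matched to $r$ and one formerly matched to $r'$ that crosses the prescribed matching edge $(a_p,b_p)$. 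One has to be slightly careful about exactly which of the four endpoints $a_r,b_r,a_{r'},b_{r'}$ serve as the unmatched endpoints and about removing the now-dangling non-matching edges, so that the remaining matched subgraph still has a unique perfect matching and exactly two unmatched vertices; I expect this bookkeeping to be the main fiddly (though not deep) part.

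Next I would prove the correspondence in both directions: a switching path in $\pi$ between the predecessors of $q$ going through $p$ translates, using essentially the same argument as in the proof of Proposition~\ref{graphification-correctness} (decompose the switching path into maximal directed $\mathtt{ax}$-to-$\otimes$ segments, and join consecutive matching edges by the unique available non-matching edge), into an alternating path in the modified graph with the prescribed endpoints that crosses $(a_p,b_p)$; conversely an alternating path of the required shape, read back through the graphification, yields a sequence of links that is a switching path through $p$ between the predecessors of $q$ --- the switching condition holding because, just as in that earlier proof, an alternating path cannot cross two incident non-matching edges and hence cannot traverse both premises of a single $\bindnasrepma$-link. The absence of alternating cycles is what makes this a clean bijection rather than merely an implication, and it is also what lets us invoke Lemma~\ref{lemma-prescribed} at all.

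Once this reduction is in place, the complexity bounds follow immediately. Lemma~\ref{lemma-prescribed} gives an $\mathsf{AC}^0$ reduction of the prescribed-edge augmenting path problem to the existence of a perfect matching, together with a linear-time algorithm for actually finding such a path; the graphification itself is computable in linear time and in $\mathsf{AC}^0$, and the deletion of a constant number of edges is trivial. Composing these, deciding $(p,q) \in D(\pi)$ is an $\mathsf{AC}^0$ (hence in particular logspace) reduction to perfect matching existence, so it inherits membership in randomized $\mathsf{NC}$ via the Mulmuley--Vazirani--Vazirani algorithm~\cite{mulmuley_matching_1987} and in deterministic $\mathsf{quasiNC}$ via Svensson--Tarnawski~\cite{svensson_matching_2017}, exactly as in the proof of Proposition~\ref{correctness-quasi-nc}; and the explicit linear-time path-finding clause of Lemma~\ref{lemma-prescribed} gives the linear-time bound.

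The step I expect to be the main obstacle is getting the reduction set up so that the hypotheses of Lemma~\ref{lemma-prescribed} hold on the nose --- in particular, arranging exactly two unmatched vertices and no alternating cycles after the edge deletions, and checking the degenerate cases (e.g.\ when a predecessor of $q$ is an $\mathtt{ax}$-link, so the corresponding matching edge is incident to few non-matching edges, or when $p$ coincides with one of the predecessors of $q$). The underlying bijection between switching paths and alternating paths is by now routine given Proposition~\ref{graphification-correctness}, so the real content is the careful matching-theoretic massaging plus a correct invocation of Lemma~\ref{lemma-prescribed}.
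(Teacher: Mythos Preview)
Your overall strategy---reduce to Lemma~\ref{lemma-prescribed} via the graphification and then invoke the known complexity of perfect-matching existence---is exactly the paper's, and your account of the final complexity step is correct. The difference is in how the instance of Lemma~\ref{lemma-prescribed} is manufactured, and here the paper's route is much simpler than yours.

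You propose to delete \emph{three} matching edges (for $q$ and both predecessors $r,r'$), which produces six unmatched vertices and forces the ``bookkeeping'' you anticipate. That bookkeeping is not as innocent as you suggest: there is no canonical choice of two endpoints among $a_r,b_r,a_{r'},b_{r'}$, and pruning vertices or dangling edges to reach exactly two unmatched vertices risks destroying precisely the alternating paths you are trying to detect. The paper avoids all of this by deleting only the single matching edge $(a_q,b_q)$ corresponding to $q$; its endpoints $a_q,b_q$ are then the two unmatched vertices required by Lemma~\ref{lemma-prescribed}, and the prescribed edge is $(a_p,b_p)$. No further surgery is needed, and acyclicity is inherited directly from Proposition~\ref{graphification-correctness}.

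The paper also treats the degenerate case $p\in\{r,r'\}$ (which you flag but do not resolve) by a separate, cleaner reduction: $(p,q)\in D(\pi)$ then holds iff relabeling $q$ as a $\otimes$-link makes $\pi$ incorrect, so this case reduces directly to (the complement of) \textsc{MixCorr} rather than to Lemma~\ref{lemma-prescribed}.
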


\begin{proof}
  A degenerate case is when $p$ is a predecessor of $q$: in this case, $p$
  depending on $q$ is equivalent to $\pi$ becoming incorrect if $q$ is turned
  into a $\otimes$-link, and thus the complexity is the same as that of (the
  complement of) the correctness problem.

  When $p$ is not a predecessor of $q$, the definition of dependency translates
  into the problem defined in the above lemma by taking the graphification of
  $\pi$, and removing the matching edge corresponding to $q$. The endpoints of
  this edge then become unmatched, and we choose as prescribed intermediate edge
  the matching edge corresponding to $p$. The fact that $\pi$ is a proof net
  ensures that the acyclicity assumption of Lemma~\ref{lemma-prescribed} is
  satisfied.

  We directly obtain the linear time complexity, and since the existence of a
  perfect matching can be decided in randomized $\mathsf{NC}$ or
  $\mathsf{quasiNC}$ (\cf\ \Cref{sec:pm}), so can our problem.
\end{proof}

A transitive closure can be computed in polynomial time, and reachability in a
directed graph can be decided in $\mathsf{NL} \subset \mathsf{quasiNC}$, so we
get in the end:

\begin{cor}%
  \label{order-intro-polytime}
  There are a polynomial-time algorithm and a $\mathsf{quasiNC}$
  algorithm to compute the kingdom ordering $\ll_\pi$ of any MLL+Mix proof net
  $\pi$.
\end{cor}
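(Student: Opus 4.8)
The plan is to combine Theorem~\ref{kingdom-ordering-from-dep} with Theorem~\ref{dependency-linear-time} in a straightforward way. By Theorem~\ref{kingdom-ordering-from-dep}, for an MLL+Mix proof net $\pi$ the kingdom ordering $\ll_\pi$ is precisely the transitive closure of $D(\pi) \cup S(\pi)$. The successor relation $S(\pi)$ is immediately readable from $\pi$ (it is just the edge relation of the underlying directed graph, up to identifying edges with their endpoint links). So the whole computation splits into two tasks: first build the relation $D(\pi) \cup S(\pi)$ as a directed graph on the links of $\pi$, then compute its transitive closure.

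For the first task, I would iterate over all pairs $(p,q)$ where $q$ is a $\bindnasrepma$-link and $p$ is any link, and for each pair run the decision procedure of Theorem~\ref{dependency-linear-time} to test whether $(p,q) \in D(\pi)$; each such test is in $\mathsf{P}$ (indeed linear time) and in $\mathsf{quasiNC}$, and there are only $O(n^2)$ pairs, so the relation $D(\pi)$ can be assembled in polynomial time and, running the $O(n^2)$ tests in parallel, in $\mathsf{quasiNC}$ as well (note $\mathsf{quasiNC}$ is closed under this kind of polynomial-fan-in parallel composition). Adding $S(\pi)$ is trivial. For the second task, transitive closure of an $n$-vertex directed graph is computable in polynomial time by the usual fixpoint/matrix-power computation, and the transitive closure relation $p \ll_\pi q$ holds iff $q$ is reachable from $p$, which is an instance of directed reachability; directed reachability is in $\mathsf{NL}$, and since $\mathsf{NL} \subseteq \mathsf{quasiNC}$ (in fact $\mathsf{NL} \subseteq \mathsf{NC}^2 \subseteq \mathsf{quasiNC}$), the closure is also computable in $\mathsf{quasiNC}$. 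Putting the two halves together yields both a polynomial-time and a $\mathsf{quasiNC}$ algorithm for $\ll_\pi$.

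The only genuine subtlety — and the place one should be careful — is in the $\mathsf{quasiNC}$ bound: one must check that composing the quasi-$\mathsf{NC}$ dependency tests with the $\mathsf{NL}$ (hence $\mathsf{quasiNC}$) reachability step stays inside $\mathsf{quasiNC}$, rather than blowing the circuit up to quasi-polynomial depth. This is fine because $\mathsf{quasiNC}$ is closed under composition with a bounded (here: $2$) number of layers, each of quasi-polynomial size and polylogarithmic depth; concretely, one first computes the $O(n^2)$ bits of the adjacency matrix of $D(\pi)\cup S(\pi)$ in parallel in quasi-$\mathsf{NC}$, then feeds this matrix into a $\mathsf{quasiNC}^3$ circuit for transitive closure (or reachability). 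No other step presents a difficulty: the logical content is entirely in Theorem~\ref{kingdom-ordering-from-dep}, which we are allowed to assume, and in Theorem~\ref{dependency-linear-time}, which has already been proved.
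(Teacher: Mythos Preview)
Your proposal is correct and follows essentially the same approach as the paper: compute $D(\pi)\cup S(\pi)$ using Theorem~\ref{dependency-linear-time}, then take the transitive closure, noting that transitive closure is in polynomial time and that directed reachability is in $\mathsf{NL}\subseteq\mathsf{quasiNC}$. Your additional care about closure of $\mathsf{quasiNC}$ under this two-layer composition is a welcome elaboration, but the underlying argument is identical to the paper's.
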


\subsection{Dependencies and blossoms in unique perfect matchings}%
\label{sec:blossoms}

We will now see how, through the correspondence of \Cref{sec:equivalence},
Bellin's theorem can be rephrased as a statement on unique perfect matchings.

\begin{defi}
  Let $G$ be a graph and $M$ be a perfect matching of $G$. A \emph{blossom} for
  $M$ is a cycle whose vertices are all matched within the cycle, except for
  one, its \emph{root}. The matching edge incident to the root is called the
  \emph{stem} of the blossom.
\end{defi}

\begin{figure}
  \centering
      \begin{tikzpicture}
      \node[vertex] (s) at (0,0) {};
      \node[vertex] (a) at (2.5,0) {};
      \node[vertex] (b) at (4,-2) {};
      \node[vertex] (c) at (6,-1) {};
      \node[vertex] (d) at (6,1) {};
      \node[vertex] (e) at (4,2) {};

      \draw[matching edge] (s) -- (a);
      \draw[non matching edge] (a) -- (b);
      \draw[matching edge] (b) -- (c);
      \draw[non matching edge] (c) -- (d);
      \draw[matching edge] (d) -- (e);
      \draw[non matching edge] (e) -- (a);
    \end{tikzpicture}
  \caption{A blossom of length 5, with its stem on the left.}%
  \label{fig:blossom}
\end{figure}
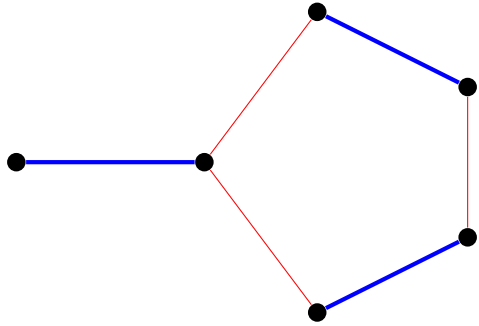

That is, a blossom consists of an alternating path between two vertices,
starting and ending with a matching edge, together with a non-matching edge from
the root to each of these two vertices. See Figure~\ref{fig:blossom} for an
illustration; as another example, in Figure~\ref{fig:pm-unique}, the two
triangles are blossoms with a common stem. The stem of a blossom is not part of
the cycle. Blossoms are central to combinatorial matching algorithms, \eg,~\cite{edmonds_paths_1965, gabow_unique_2001}, as we have previously
mentioned.

\begin{defi}
  When $e \in M$ is in some blossom with stem $f \in M$, we write $e \rightarrow
  f$.
\end{defi}

This is the graph-theoretical counterpart of the dependency relation, as is
shown by the following two propositions.

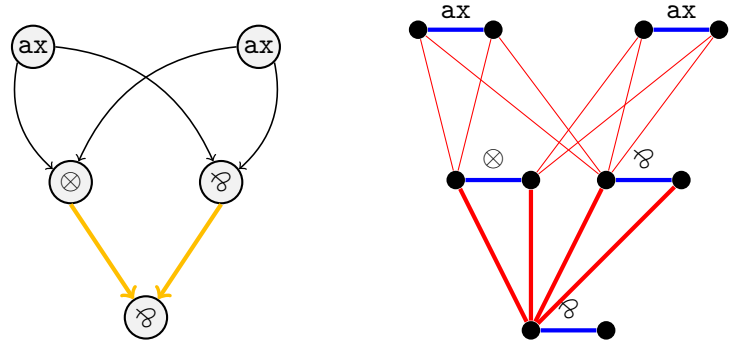
\begin{figure}
  \centering
      \begin{tikzpicture}
      \node[bigvertex] (Ax1) at (1.5,3.6) {$\mathtt{ax}$};
      \node[bigvertex] (Ax2) at (4.5,3.6) {$\mathtt{ax}$};

      \node[bigvertex] (T) at (2,1.8) {$\otimes$};
        \draw[black, semithick, ->] (Ax1.south west) to [bend right] (T);
        \draw[black, semithick, ->] (Ax2.west) to [bend right] (T);

      \node[bigvertex] (P1) at (4,1.8) {$\bindnasrepma$};
        \draw[black, semithick, ->] (Ax1.east) to [bend left] (P1);
        \draw[black, semithick, ->] (Ax2.south east) to [bend left] (P1);

        \node[bigvertex] (P2) at (3,0) {$\bindnasrepma$};
        \draw[amber, ultra thick, ->] (P1.south) -- (P2);
        \draw[amber, ultra thick, ->] (T.south) -- (P2);
  \end{tikzpicture}
  \qquad\qquad
    \begin{tikzpicture}
        \node[vertex] (Ax1l) at (1,4) {};
        \node[vertex] (Ax1r) at (2,4) {};
        \draw[matching edge] (Ax1l) -- node[above, text=black] {$\mathtt{ax}$} ++ (Ax1r);
        \node[vertex] (Ax2l) at (4,4) {};
        \node[vertex] (Ax2r) at (5,4) {};
        \draw[matching edge] (Ax2l) -- node[above, text=black] {$\mathtt{ax}$} ++ (Ax2r);

          \node[vertex] (Tl) at (1.5,2) {};
          \node[vertex] (Tr) at (2.5,2) {};
          \draw[matching edge] (Tl) -- node[above, text=black] {$\otimes$} ++ (Tr);
          \draw[non matching edge] (Ax1l) -- (Tl);
          \draw[non matching edge] (Ax1r) -- (Tl);
          \draw[non matching edge] (Ax2l) -- (Tr);
          \draw[non matching edge] (Ax2r) -- (Tr);

          \node[vertex] (P1l) at (3.5,2) {};
          \node[vertex] (P1r) at (4.5,2) {};
          \draw[matching edge] (P1l) -- node[above, text=black] {$\bindnasrepma$} ++ (P1r);
          \draw[non matching edge] (Ax1l) -- (P1l);
          \draw[non matching edge] (Ax1r) -- (P1l);
          \draw[non matching edge] (Ax2l) -- (P1l);
          \draw[non matching edge] (Ax2r) -- (P1l);

          \node[vertex] (P2l) at (2.5,0) {};
          \node[vertex] (P2r) at (3.5,0) {};
          \draw[matching edge] (P2l) -- node[above, text=black] {$\bindnasrepma$} ++ (P2r);

            \draw[non matching edge, ultra thick] (Tl) -- (P2l);
            \draw[non matching edge, ultra thick] (Tr) -- (P2l);
            \draw[non matching edge, ultra thick] (P1l) -- (P2l);
            \draw[non matching edge, ultra thick] (P1r) -- (P2l);
    \end{tikzpicture}
    \caption{The proof net of Figure~\ref{fig:proof-structure} and its graphification
      (\cf~Figure~\ref{fig:graphification-example}); the directed edges of the proof
      net correspond to blossoms of length 3 in its graphification.}%
  \label{fig:blossom-subformula}
\end{figure}

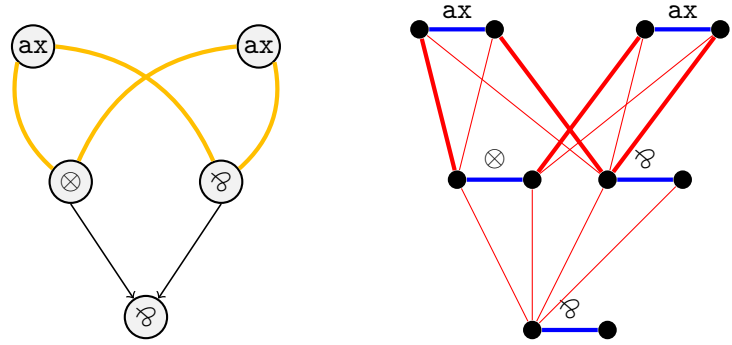
\begin{figure}
  \centering
      \begin{tikzpicture}
      \node[bigvertex] (Ax1) at (1.5,3.6) {$\mathtt{ax}$};
      \node[bigvertex] (Ax2) at (4.5,3.6) {$\mathtt{ax}$};
      \node[bigvertex] (T) at (2,1.8) {$\otimes$};
      \node[bigvertex] (P1) at (4,1.8) {$\bindnasrepma$};
      \node[bigvertex] (P2) at (3,0) {$\bindnasrepma$};

        \draw[black, semithick, ->] (P1.south) -- (P2);
        \draw[black, semithick, ->] (T.south) -- (P2);
        \draw[amber, ultra thick] (Ax1.south west) to [bend right] (T);
        \draw[amber, ultra thick] (Ax2.west) to [bend right] (T);
        \draw[amber, ultra thick] (Ax1.east) to [bend left] (P1);
        \draw[amber, ultra thick] (Ax2.south east) to [bend left] (P1);

  \end{tikzpicture}
  \qquad\qquad
    \begin{tikzpicture}
        \node[vertex] (Ax1l) at (1,4) {};
        \node[vertex] (Ax1r) at (2,4) {};
        \draw[matching edge] (Ax1l) -- node[above, text=black] {$\mathtt{ax}$} ++ (Ax1r);
        \node[vertex] (Ax2l) at (4,4) {};
        \node[vertex] (Ax2r) at (5,4) {};
        \draw[matching edge] (Ax2l) -- node[above, text=black] {$\mathtt{ax}$} ++ (Ax2r);

          \node[vertex] (Tl) at (1.5,2) {};
          \node[vertex] (Tr) at (2.5,2) {};
          \draw[matching edge] (Tl) -- node[above, text=black] {$\otimes$} ++ (Tr);
          \draw[non matching edge] (Ax1l) -- (Tl);
          \draw[non matching edge] (Ax1r) -- (Tl);
          \draw[non matching edge] (Ax2l) -- (Tr);
          \draw[non matching edge] (Ax2r) -- (Tr);

          \node[vertex] (P1l) at (3.5,2) {};
          \node[vertex] (P1r) at (4.5,2) {};
          \draw[matching edge] (P1l) -- node[above, text=black] {$\bindnasrepma$} ++ (P1r);
          \draw[non matching edge] (Ax1l) -- (P1l);
          \draw[non matching edge] (Ax1r) -- (P1l);
          \draw[non matching edge] (Ax2l) -- (P1l);
          \draw[non matching edge] (Ax2r) -- (P1l);

          \node[vertex] (P2l) at (2.5,0) {};
          \node[vertex] (P2r) at (3.5,0) {};
          \draw[matching edge] (P2l) -- node[above, text=black] {$\bindnasrepma$} ++ (P2r);
          \draw[non matching edge] (Tl) -- (P2l);
          \draw[non matching edge] (Tr) -- (P2l);
          \draw[non matching edge] (P1l) -- (P2l);
          \draw[non matching edge] (P1r) -- (P2l);

            \draw[non matching edge, ultra thick] (Ax1l) -- (Tl);
            \draw[non matching edge, ultra thick] (Ax2l) -- (Tr);
            \draw[non matching edge, ultra thick] (Ax2r) -- (P1l);
            \draw[non matching edge, ultra thick] (Ax1r) -- (P1l);
    \end{tikzpicture}

    \caption{A blossom of length 7 corresponding to a dependency. The yellow
      cycle is not a switching cycle, but should be seen as a switching path
      between both predecessors of the $\bindnasrepma$-link.}%
  \label{fig:blossom-dependency}
\end{figure}

\begin{prop}%
  \label{blossom-graphification}
  Let $\pi$ be a MLL+Mix proof net and $(G,M)$ be its graphification. Let $p, q$
  be links in~$\pi$ with corresponding matching edges $e_p, e_q \in M$. Then
  $e_p \rightarrow e_q$ if and only if $p$ is a dependency of $q$ or a
  predecessor of $q$, i.e., $(p, q) \in D(\pi) \cup S(\pi)$.
\end{prop}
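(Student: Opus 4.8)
The plan is to translate the relation $e_p\rightarrow e_q$ --- the existence of a blossom with stem $e_q$ through $e_p$ --- back into the language of $\pi$, by reading off the non‑matching edges of the graphification from Figure~\ref{fig:graphification-rules}. Two structural facts will do most of the work. First, if a link $l$ is a premise of a link $m$, then \emph{both} endpoints of $e_l$ are joined by non‑matching edges to a \emph{single} endpoint of $e_m$, the one determined by $m$'s connective and, when $m$ is a $\otimes$, by whether $l$ is the left or right premise; I will call this endpoint the \emph{port} of $m$ facing $l$. Second, if $q$ is a $\bindnasrepma$‑link with premises $p_1,p_2$, then the \emph{four} vertices $a_{p_1},b_{p_1},a_{p_2},b_{p_2}$ are all joined to the \emph{same} endpoint $a_q$ of $e_q$, whereas $b_q$ has no non‑matching edge except possibly one towards the port of $q$'s successor. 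From these I will extract a lifting correspondence: an alternating path of $(G,M)$ that begins and ends with matching edges runs through matching edges $e_{l_1},\dots,e_{l_m}$ whose links form a \emph{switching} path in $\pi$, and conversely every switching path lifts in this way. The ``switching'' clause is the crucial point: a path using both premises of an intermediate $\bindnasrepma$‑link $l_i$ would be forced to both enter and leave $e_{l_i}$ at the vertex $a_{l_i}$, which is absurd since $a_{l_i}$ is only one of the two endpoints.

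Granting this, a blossom with stem $e_q$ is exactly an alternating path $\widetilde P$ through matching edges $e_{l_1},\dots,e_{l_m}$ (none equal to $e_q$, since the stem lies off the cycle), together with non‑matching edges from one endpoint $r$ of $e_q$ --- the root --- to the two ends of $\widetilde P$. For the implication $(p,q)\in D(\pi)\cup S(\pi)\Rightarrow e_p\rightarrow e_q$: if $p$ is a predecessor of $q$, the two non‑matching edges from the endpoints of $e_p$ to the port of $q$ facing $p$ close $e_p$ into a triangle that is a blossom with stem $e_q$ --- this is precisely Figure~\ref{fig:blossom-subformula}. If instead $q$ is a $\bindnasrepma$‑link and $p$ lies on a switching path $P$ between $p_1$ and $p_2$, I would first show that $P$ avoids $q$: a switching path from $p_1$ to $p_2$ through $q$ could use only one of $q$'s two incoming edges, hence would read $p_1,q,m_0,\dots,p_2$ where $m_0$ is the successor of $q$, and then its sub‑path from $m_0$ to $p_2$ together with the edges $(p_2,q)$ and $(q,m_0)$ would be a switching cycle, contradicting correctness of $\pi$ (Theorem~\ref{danos-regnier}). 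Once $P$ avoids $q$, I would lift it to an alternating path $\widetilde P$ and close it into a blossom by joining both ends --- which are endpoints of $e_{p_1}$ and $e_{p_2}$, hence neighbours of $a_q$ --- to the root $a_q$, legitimately since $e_q$ is off $\widetilde P$.

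For the converse I would run a case analysis on the connective of $q$, using correctness to discard the parasitic configurations. If $q$ is an $\mathtt{ax}$‑link, every non‑matching neighbour of either endpoint of $e_q$ is a port of a successor of $q$, and $a_q$ and $b_q$ have the \emph{same} such neighbours; so appending $b_q$ and the stem to $\widetilde P$ would yield a genuine alternating cycle, impossible in a proof net by Proposition~\ref{graphification-correctness} --- this case cannot occur. If $q$ is a $\otimes$‑link with premises $p_1,p_2$, then $a_q$ sees only the endpoints of $e_{p_1}$ (besides possibly a successor port) and $b_q$ only those of $e_{p_2}$, so the two ends of $\widetilde P$ lie on a single premise $e_{p_a}$, which forces $m=1$ and $p=p_a$; the one alternative, that an end of $\widetilde P$ is a successor port, again completes $\widetilde P\cup e_q$ to an alternating cycle --- contradiction. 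If $q$ is a $\bindnasrepma$‑link, the root cannot be $b_q$ (too few non‑matching neighbours), so $r=a_q$, and its two neighbours on $\widetilde P$ are endpoints among $a_{p_1},b_{p_1},a_{p_2},b_{p_2}$, giving either $p$ a predecessor of $q$ (both ends on one premise) or, by the lifting correspondence, a switching path from $p_1$ to $p_2$ through $p$, i.e.\ $(p,q)\in D(\pi)$; the remaining possibility of a successor port is again excluded by the alternating‑cycle argument.

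The main difficulty I expect is not conceptual but a matter of careful bookkeeping: setting up the lifting correspondence and the ``complete to an alternating cycle'' arguments precisely requires tracking, for each connective and each role (premise versus successor, left versus right, the $\bindnasrepma$‑hub versus its other vertex), exactly which endpoints of which matching edges are joined in the graphification, i.e.\ a systematic reading of Figure~\ref{fig:graphification-rules}. The genuinely new content, already depicted in Figures~\ref{fig:blossom-subformula} and~\ref{fig:blossom-dependency}, is only the dictionary ``directed edge of $\pi$ $\leftrightarrow$ length‑$3$ blossom'' and ``switching path between the two premises of a $\bindnasrepma$‑link $\leftrightarrow$ longer blossom rooted at its hub''.
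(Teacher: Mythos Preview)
Your proposal is correct and follows essentially the same route as the paper: translate a blossom with stem $e_q$ into either a length-$3$ triangle (giving $(p,q)\in S(\pi)$) or a longer alternating path between the two predecessors of a $\bindnasrepma$-link (giving $(p,q)\in D(\pi)$), using the ``replace the successor-port edge by its twin and append $e_q$'' trick to rule out parasitic configurations via an alternating cycle. Your organisation differs only cosmetically --- you case on the connective of $q$ whereas the paper cases on whether the first and last matching edges of the blossom coincide --- and you are in fact more careful than the paper on one point: you explicitly argue that a switching path between the predecessors of a $\bindnasrepma$-link $q$ must avoid $q$ itself (else a switching cycle), which the paper's proof leaves implicit when it adds the two root edges.
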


Both the cases $(p,q) \in S(\pi)$ and $(p,q) \in D(\pi)$ occur in the proof net
of Figure~\ref{fig:proof-structure}, see respectively Figure~\ref{fig:blossom-subformula}
and Figure~\ref{fig:blossom-dependency}.

\begin{proof}
  If $(p,q) \in S(\pi)$, then by construction there exists a blossom of length 3
  containing $p$ with stem $q$. If $(p,q) \in D(\pi)$, then for the same reason
  as Proposition~\ref{graphification-correctness}, we can get, from the switching path
  between the predecessors of $q$ visiting $p$, an alternating path for $M$
  starting and ending with the edges corresponding to those predecessors and
  crossing the edge corresponding to $p$. By adding two non-matching edges to
  the same endpoint of the matching edge for $q$, we get a blossom with stem
  $q$.

  Conversely, let $q$ be a link, $e$ the corresponding matching edge, and $B$ be
  a blossom with stem $q$. Let us first note that if $B$ contains a
  non-matching edge joining $e$ with the matching edge corresponding to a
  successor of $q$, then by replacing this non-matching edge with its twin
  incident to the other endpoint of $q$, we get an alternating cycle; this is
  impossible because we have assumed $\pi$ to be a MLL+Mix proof net. Therefore,
  the first and last matching edges in $B$ are both precedessors of $q$. If they are
  the same --- that is, if $B$ has length 3 and contains a single matching edge
  --- then this edge corresponds to a predecessor $p$ of $q$. Otherwise, $B$
  gives an alternating path between two distinct predecessors of $q$; necessarily
  $q$ is a $\bindnasrepma$-link (otherwise, there would be an alternating
  cycle), and all links corresponding to matching edges in $B$ are dependencies
  of $q$.
\end{proof}

\begin{prop}%
  \label{blossom-proofification}
  Let $G$ be a graph, $M$ be a perfect matching of $G$ and $\pi$ be the proofification
  of $(G,M)$. Let $e,f \in M$ with corresponding $\otimes$-links $l_e,l_f \in
  M$. Then $e \rightarrow f$ if and only if $l_e$ is a dependency of some
  $\bindnasrepma$-link $q$ from which $l_f$ is reachable (by a directed path).
\end{prop}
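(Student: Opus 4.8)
The plan is to transport a blossom's petal‑path to a switching path in the proofification, and back, via the path analogue of the correspondence in the proof of Proposition~\ref{proofification-cycle}: a switching path in a proofification turns around only at $\mathtt{ax}$- and $\otimes$-links, so it decomposes into an alternation of $\otimes$-turnarounds — one per matching edge — and $\mathtt{ax}$-crossing segments — one per non-matching edge — threaded through the $k$-ary $\bindnasrepma$-links. Write $f=(r,r')$, and for a vertex $u$ with $\deg(u)\ge 2$ let $T_u$ be the attached $k$-ary $\bindnasrepma$-link with $k=\deg(u)-1$ (a single edge when $\deg(u)=2$, a genuine tree of binary $\bindnasrepma$-links once $\deg(u)\ge 3$), whose outgoing edge $B_u$ feeds the $\otimes$-link of $u$'s matching edge. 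Since $B_r$ feeds $l_f$, every $\bindnasrepma$-link lying inside $T_r$ or $T_{r'}$ reaches $l_f$ by a directed path; and since $l_f$ is a terminal $\otimes$-link whose only incoming edges are $B_r,B_{r'}$, a $\bindnasrepma$-link reaches $l_f$ only if it lies in $T_r$ or in $T_{r'}$.

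For the forward direction, suppose $e\to f$ via a blossom $B$ rooted at $r$, so $\deg(r)\ge 3$; let $w_1\ne w_2$ be the two neighbours of $r$ along $B$ (so $(r,w_1),(r,w_2)\in B\setminus M$), and put $P=B\setminus\{(r,w_1),(r,w_2)\}$ — an alternating path in $G$ from $w_1$ to $w_2$, beginning and ending with matching edges, avoiding $r$, and containing $e$. Let $q$ be the lowest common ancestor in $T_r$ of the distinct leaves $A_{r,w_1},A_{r,w_2}$; it is a binary $\bindnasrepma$-link whose two predecessors $q_1,q_2$ lie above $A_{r,w_1}$ and $A_{r,w_2}$ respectively. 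Define $\sigma$: descend from $q_1$ to the leaf $A_{r,w_1}$, cross $\mathtt{ax}_{(r,w_1)}$, follow the translation of $P$ into an alternation of $\otimes$-turnarounds and $\mathtt{ax}$-crossings (crossing $l_e$ where $P$ uses $e$) as far as the leaf $A_{r,w_2}$, cross $\mathtt{ax}_{(r,w_2)}$, and climb inside the subtree below $q_2$ up to $q_2$. Because $r\notin P$ and $w_1\ne w_2$, the subtrees of $T_r$ below $q_1$ and below $q_2$ (which both avoid $q$) and the trees $T_u$ for $u\in V(P)$ are pairwise disjoint, so $\sigma$ repeats no vertex and crosses each $\bindnasrepma$-pair at most once; thus $\sigma$ is a switching path between the predecessors of $q$ through $l_e$, giving $(l_e,q)\in D(\pi)$, and $l_f$ is reachable from $q$. (This implication does not use correctness.)

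For the converse, let $q$ be a $\bindnasrepma$-link with $(l_e,q)\in D(\pi)$ and $l_f$ reachable from $q$; by the first paragraph $q\in T_r$, say, so $\deg(r)\ge 3$; let $q_1,q_2$ be its predecessors and $\sigma$ a switching path from $q_1$ to $q_2$ through $l_e$. The key claim, which is the main obstacle, is that $\sigma$ must avoid $q$, that its two ends form a descent inside the subtree below $q_1$ to a leaf $A_{r,w_1}$ and an ascent inside the subtree below $q_2$ from a leaf $A_{r,w_2}$ with $w_1\ne w_2$, and that the portion of $\sigma$ in between translates to an alternating path $P$ in $G$ from $w_1$ to $w_2$ that begins and ends with matching edges, avoids $r$, and crosses $e$. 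The reason: any other behaviour of $\sigma$ — passing through $q$ via its outgoing edge (which forces $\sigma$ up through $T_r$, along $B_r$ and through $l_f$), or re-entering $T_r$ through a leaf not below $q$ (with the same effect) — would, once translated, yield a genuine alternating cycle through $f$ in $G$, impossible because $\pi$ is a proof net, i.e.\ $M$ is the unique perfect matching of $G$ (Proposition~\ref{proofification-cycle} with Danos--Regnier); so, as in the companion Proposition~\ref{blossom-graphification}, it is correctness of $\pi$ that makes the converse go through. Granting the claim, $B:=P\cup\{(r,w_1),(r,w_2)\}$ is a simple cycle whose only two edges at $r$ are non-matching — a blossom rooted at $r$ with stem the matching edge at $r$, namely $f$ — and $e\in P\subseteq B$ gives $e\to f$ (when $P$ is a single edge, $B$ is a triangle and the argument is unchanged). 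I expect the delicate part to be exactly this claim: carefully excluding vertex-repeating excursions of $\sigma$ into $T_r$ and matching every "illegal" configuration against an alternating cycle; the remainder is a routine unwinding of the proofification correspondence.
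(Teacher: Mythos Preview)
Your forward direction is exactly the paper's: translate the petal-path of the blossom to a switching path between the two $\mathtt{ax}$-links, take the lowest common ancestor $q$ in $T_r$, and extend by the tree paths to a switching path between the predecessors of $q$.

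For the converse, the paper says only that \enquote{any switching path between the two predecessors of a $\bindnasrepma$-link corresponds to a blossom for $M$ in $G$}, whereas you give a genuine case analysis. In particular, you correctly flag that the switching path $\sigma$ might pass through $q$ (or re-enter $T_r$ outside the subtrees below $q_1,q_2$), and that excluding this requires $M$ to be unique, i.e., $\pi$ to be correct --- by the mechanism you describe, such an excursion forces $\sigma$ through $l_f$ and its translation closes up into an alternating cycle through $f$. This is not over-caution on your part: it contradicts the paper's own Remark that \enquote{no uniqueness property is required}, and indeed the converse \emph{fails} without uniqueness. Take $V=\{1,\dots,6\}$, $M=\{(1,2),(3,4),(5,6)\}$, non-matching edges $(1,3),(1,5),(2,4),(2,6)$. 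Then $(5,6)\not\to(1,2)$ (any blossom with stem $(1,2)$ would need an alternating path from $3$ to $5$ or from $4$ to $6$ avoiding the root, and there is none), yet the switching path
\[
\mathtt{ax}_{(1,3)} \;\text{--}\; q \;\text{--}\; l_{(1,2)} \;\text{--}\; q' \;\text{--}\; \mathtt{ax}_{(2,6)} \;\text{--}\; l_{(5,6)} \;\text{--}\; \mathtt{ax}_{(1,5)}
\]
(with $q\in T_1$, $q'\in T_2$) witnesses $(l_{(5,6)},q)\in D(\pi)$, and $l_{(1,2)}$ is reachable from $q$. So your appeal to correctness is genuinely needed, and what you have written is a proof of the corrected statement (with the uniqueness hypothesis added) that fills a gap in the paper's argument. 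Just be aware that, as stated, the proposition carries no such hypothesis --- so strictly speaking you are proving a different (true) statement rather than the (false) one written.
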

\begin{proof}
  Let $B$ be a blossom with stem $f$, whose two non-matching edges incident to
  $f$ are $a$ and $b$. $B$ translates into a switching path between
  $\mathtt{ax}_a$ and $\mathtt{ax}_b$ in $\pi$. Now, $\mathtt{ax}_a$ and
  $\mathtt{ax}_b$ are also leaves of a binary tree of $\bindnasrepma$-links
  whose root has the single successor $l_f$; by taking $q$ to be the lowest
  common ancestor of $\mathtt{ax}_a$ and $\mathtt{ax}_b$ in this tree, $l_f$ is
  reachable from $q$, and every link in the path between $\mathtt{ax}_a$ and
  $\mathtt{ax}_b$ depends on $q$. Conversely, any switching path between the two
  predecessors of a $\bindnasrepma$-link corresponds to a blossom for $M$ in $G$.
\end{proof}

\begin{rem}
  In Proposition~\ref{blossom-graphification}, the \enquote{if} direction holds even for
  incorrect proof structures; in Proposition~\ref{blossom-proofification}, note that no
  uniqueness property is required of the perfect matching.
\end{rem}

Thus, we see that Bellin's theorem is equivalent to the following theorem where
$\rightarrow^+$ is the transitive closure of $\rightarrow$.

\begin{thm}
  Let $G$ be a graph with a \emph{unique} perfect matching $M$, and $e, f \in
  M$. The edge $e$ occurs before $f$ in all sequentializations for $M$ if and
  only if $e \rightarrow^+ f$.
\end{thm}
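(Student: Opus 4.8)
The plan is to obtain this statement as the matching-theoretic avatar of Bellin's theorem (Theorem~\ref{kingdom-ordering-from-dep}), transported across the equivalence of \Cref{sec:equivalence}. Given $(G,M)$ with $M$ its unique perfect matching, I would take $\pi$ to be the proofification of $(G,M)$ from \Cref{sec:proofification}: since $M$ has no alternating cycle, Proposition~\ref{proofification-cycle} shows $\pi$ has no switching cycle, hence is a MLL+Mix proof net. By construction the terminal links of $\pi$ are exactly the $\otimes$-links $l_e$, one per matching edge $e \in M$. The theorem will then follow once I relate both sides of the claimed equivalence, for $e,f \in M$, to the corresponding statements about $l_e$ and $l_f$ in $\pi$, and invoke Theorem~\ref{kingdom-ordering-from-dep}.

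For the right-hand side I would use Proposition~\ref{blossom-proofification}: $e \rightarrow f$ holds iff $l_e$ is a dependency of some $\bindnasrepma$-link $q$ from which $l_f$ is reachable, i.e.\ iff $l_e \mathrel{D(\pi)} q$ and $q \mathrel{S(\pi)^\ast} l_f$ for some $q$. Concatenating such segments gives at once that $e \rightarrow^+ f$ implies $l_e \mathrel{(D(\pi)\cup S(\pi))^+} l_f$. For the converse I would decompose an arbitrary $(D(\pi)\cup S(\pi))$-path from $l_e$ to $l_f$ into alternating ``$D(\pi)$ then $S(\pi)^\ast$'' blocks: since $l_e$ and $l_f$ are terminal $\otimes$-links, the first step of such a path must be a $D(\pi)$-step and the last an $S(\pi)$-step, and a short induction on the length of the path, using the possible shapes of links in a proofification, should turn each block into one $\rightarrow$-step. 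This bookkeeping is the first point requiring care.

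For the left-hand side I need ``$e$ occurs before $f$ in all sequentializations of $(G,M)$'' — meaning: in every derivation tree for the inductive definition of Theorem~\ref{upm-sequentialization}, the bridge rule introducing $e$ lies above the one introducing $f$ — to be equivalent to $l_e \ll_\pi l_f$. The plan is to show that the derivation trees of $(G,M)$ and the sequent calculus proofs of $\pi$ agree up to the bureaucracy of the auxiliary $\mathtt{ax}$-links (for non-matching edges) and the binary $\bindnasrepma$-trees (at vertices of degree $>1$): recording, in a sequent proof of $\pi$, only the order in which the terminal $\otimes$-links are introduced should yield a derivation tree for $(G,M)$, and every such derivation tree should arise this way, with the ``introduced in a premise of'' relation preserved. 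I expect this to be the main obstacle, because — unlike graphifications, cf.\ Theorem~\ref{graphification-seq-bijection} — the proofification is \emph{not} built inductively along $\mathcal{UPM}$-derivations (growing a vertex's degree reshapes its $\bindnasrepma$-tree). I would therefore argue it semantically rather than syntactically: using Proposition~\ref{proofification-bridge} together with an inductive peeling argument, a matching edge is introducible by the last rule of a derivation of a sub-unique-perfect-matching of $(G,M)$ exactly when the corresponding $\otimes$-link is splitting in the corresponding sub-proof net of $\pi$, and iterating this identifies the ``normal'' sub-unique-perfect-matchings with the normal sub-proof nets.

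Granting these two translations, the theorem is immediate: $e$ occurs before $f$ in all sequentializations iff $l_e \ll_\pi l_f$ iff (Theorem~\ref{kingdom-ordering-from-dep}) $l_e \mathrel{(D(\pi)\cup S(\pi))^+} l_f$ iff $e \rightarrow^+ f$. Finally, to support the assertion that this theorem and Bellin's are genuinely \emph{equivalent}, I would note that the reverse derivation is cleaner: applying the theorem to the graphification $(G,M)$ of an arbitrary MLL+Mix proof net $\pi$, Theorem~\ref{graphification-seq-bijection} identifies the derivation trees of $(G,M)$ with the sequent proofs of $\pi$ — hence the above ordering with $\ll_\pi$ — while Proposition~\ref{blossom-graphification} identifies $\rightarrow$ with $D(\pi)\cup S(\pi)$ on links, so taking transitive closures recovers Theorem~\ref{kingdom-ordering-from-dep}.
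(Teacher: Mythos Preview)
Your approach is exactly the paper's: the statement is not given a standalone proof but is presented as the matching-theoretic translation of Bellin's theorem (Theorem~\ref{kingdom-ordering-from-dep}), the dictionary being supplied by Propositions~\ref{blossom-graphification} and~\ref{blossom-proofification} together with Theorem~\ref{graphification-seq-bijection} on the graphification side. The paper is in fact terser than you --- it simply asserts the equivalence after those two propositions and defers a self-contained combinatorial proof to the companion paper --- so the two obstacles you flag on the proofification side (decomposing a $(D(\pi)\cup S(\pi))$-chain between $\otimes$-links into $\rightarrow$-steps, and the absence of a sequentialization bijection analogous to Theorem~\ref{graphification-seq-bijection}) are genuine work that the paper does not carry out here; your sketches for both are sound, and your final paragraph recovering Bellin from the graph statement via graphification is precisely the clean direction the paper has in mind.
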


For instance, in Figure~\ref{fig:pm-unique}, the middle edge $e$ is the only
bridge, and it is the stem of the two triangular blossoms which contain the
other matching edges.

This graph-theoretic version is somewhat simpler to state than the original
theorem: one takes the transitive closure of a single relation, instead of a
union of two unrelated relations. And as far as we know, this is a new result in
graph theory. We have included it in the companion
paper~\cite{nguyen_constrained_2019}, aimed at a broader audience of graph
theorists, where we present a direct combinatorial proof with no mention of
proof nets.

\section{A reconstruction of RB-graphs via forbidden transitions}%
\label{sec:rb-graph-bis}

In this section, we come back to Retoré's RB-graphs (\Cref{sec:rb-graph}) and
factorize Retoré's correctness criterion (Corollary~\ref{cor-retore}) as a
composition of:
\begin{itemize}
\item the Danos--Regnier correctness graph 
  (Definition~\ref{def-correctness-graph});
\item a reduction to the \textsc{UniquenessPM} problem for a general notion of
  constrained cycles, namely \emph{closed trails avoiding forbidden transitions}.
\end{itemize}
We introduced the latter in~\cite{nguyen_constrained_2019}, but here the logical
order of exposition is the reverse of the order of discovery: it was by
attempting to understand Retoré's RB-graphs that we found this reduction.

\begin{defiC}[{\cite{szeider_finding_2003}}]%
  \label{def:transition}
  Let $G = (V,E)$ be a graph. A \emph{transition graph} for a vertex $v \in V$
  is a graph whose vertices are the edges incident to $v$: $T(v) =
  (\partial(v), E_v)$. A \emph{transition system} on $G$ is a family $T =
  {(T(v))}_{v \in V}$ of transition graphs.

  A graph equipped with a transition system is called a \emph{graph with
    forbidden transitions}.

  A path $v_1, e_1, v_2 \ldots, e_{k-1}, v_k$ is said to be \emph{compatible} if
  for $i = 1, \ldots, k-1$, $e_i$ and $e_{i+1}$ are adjacent in $T(v_{i+1})$.
  For a \emph{cycle}, we also require $e_{k-1}$ and $e_1$ to be adjacent in
  $T(v_1) = T(v_k)$. (So the edges of $T(v)$ actually specify the \emph{allowed}
  transitions.)
\end{defiC}

\begin{rem}
  By \enquote{transition} we mean a pair of consecutive edges in a path/cycle. A
  transition system could equivalently be specified by literally giving the set
  of forbiden transitions, i.e., of edge pairs that cannot occur consecutively.
  This generalizes paired graphs (Definition~\ref{def-paired-graph}) by dropping
  the disjointness requirement on pairs: switching graphs do not make sense
  anymore, but switching cycles (generalized to compatible cycles) still do.
\end{rem}

Finding a compatible path is proved to be $\mathsf{NP}$-complete
in~\cite{szeider_finding_2003}. As in the case of alternating paths in matchings
(cf.\ Remark~\ref{rem-alternating-difficult}), the difficulty is in the
interaction of a local constraint --- any transition (pair of consecutive edges)
must be allowed --- and a global one, namely the fact there must be no repeated
vertices. Indeed, recall from \Cref{sec:terminology} that by definition, paths
and cycles cannot repeat vertices twice. This is important to ensure that
Berge's lemma for alternating cycles (Lemma~\ref{berge}) holds. Following the
terminology of~\cite[\S 1.4]{bang-jensen_digraphs._2009}, let us introduce a
relaxation of this global condition.
\begin{defi}
  A \emph{trail} (resp.\ \emph{closed trail}) is a \enquote{path} (resp.\
  \enquote{cycle}) in which we allow vertex repetitions, but \emph{edge
    repetitions are prohibited}. Compatible trails and compatible closed trails
  in a graph with forbidden transitions are defined analogously to the above
  definition.
\end{defi}
\begin{rem}
  For perfect matchings, an alternating cycle is the same as an alternating
  closed trail: repeating a vertex would imply repeating its unique matching
  edge. However, this is not true for general graphs with forbidden transitions:
  see Figure~\ref{fig:paired-graph-cycle-a} for an example with compatible closed
  trails, but no compatible cycles.
\end{rem}

The relevance of this notion is that for compatible \emph{trails}, we showed the
problem to be tractable~\cite{nguyen_constrained_2019} by using an
\enquote{edge-colored line graph} construction. This construction has other uses
but, in the case of compatible (closed) trails, it can be replaced by a version
using perfect matchings that we define below --- which in fact is the
edge-colored line graph composed with a previously known reduction,
see~\cite{nguyen_constrained_2019} for details. All this arguably goes to show
that the objects which we manipulate are not contrived to fit with RB-graphs:
they arise naturally from other considerations.

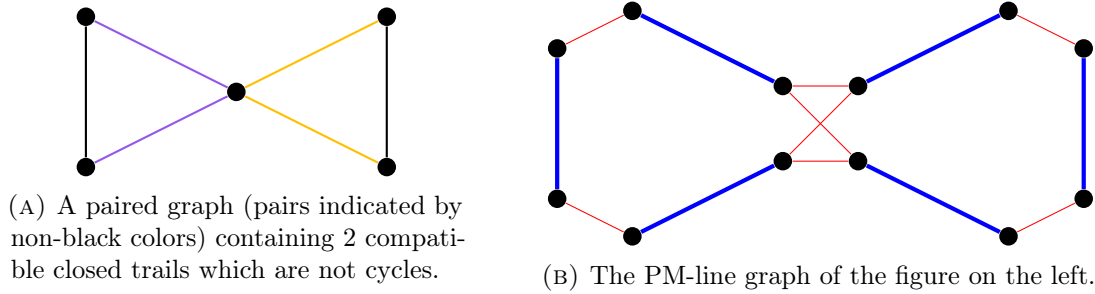
\begin{figure}
  \centering
  \begin{subfigure}{6cm}
    \centering
    \begin{tikzpicture}
      \node[vertex] (w) at (0,2) {};
      \node[vertex] (x) at (4,2) {};
      \node[vertex] (y) at (0,0) {};
      \node[vertex] (z) at (4,0) {};
      \node[vertex] (o) at (2,1) {};
      \draw[thick] (x) -- (z);
      \draw[thick] (w) -- (y);
      \draw[amber,thick] (x) -- (o);
      \draw[amber,thick] (z) -- (o);
      \draw[lavenderindigo,thick] (w) -- (o);
      \draw[lavenderindigo,thick] (y) -- (o);
    \end{tikzpicture}
    \caption{A paired graph (pairs indicated by non-black colors) containing 2
      compatible closed trails which are not cycles.}%
    \label{fig:paired-graph-cycle-a}
  \end{subfigure}\qquad\begin{subfigure}{8cm}
    \centering
    \begin{tikzpicture}
      \node[vertex] (w) at (0,3) {};
      \node[vertex] (w') at (-1,2.5) {};
      \node[vertex] (x) at (5,3) {};
      \node[vertex] (x') at (6,2.5) {};
      \node[vertex] (y) at (0,0) {};
      \node[vertex] (y') at (-1,0.5) {};
      \node[vertex] (z) at (5,0) {};
      \node[vertex] (z') at (6,0.5) {};
      \node[vertex] (ox) at (3,2) {};
      \node[vertex] (oy) at (2,1) {};
      \node[vertex] (oz) at (3,1) {};
      \node[vertex] (ow) at (2,2) {};
      \draw[matching edge] (x') -- (z');
      \draw[matching edge] (w') -- (y');
      \draw[matching edge] (x) -- (ox);
      \draw[matching edge] (z) -- (oz);
      \draw[matching edge] (w) -- (ow);
      \draw[matching edge] (y) -- (oy);
      \draw[non matching edge] (x) -- (x');
      \draw[non matching edge] (y) -- (y');
      \draw[non matching edge] (z) -- (z');
      \draw[non matching edge] (w) -- (w');
      \draw[non matching edge] (ox) -- (oy);
      \draw[non matching edge] (ox) -- (ow);
      \draw[non matching edge] (oz) -- (oy);
      \draw[non matching edge] (oz) -- (ow);
    \end{tikzpicture}
    \caption{The PM-line graph of the figure on the left.}%
    \label{fig:paired-graph-cycle-b}
  \end{subfigure}
  \caption{A graph with forbidden transitions and its PM-line graph.}%
  \label{fig:paired-graph-cycle}
\end{figure}
\begin{defi}%
  \label{def:lpm}
  Let $G$ be a graph and $T$ be a transition system on $G$. The
  \emph{PM-line graph} $L_{PM}(G,T)$ is defined as the graph:
  \begin{itemize}
  \item with vertex set $\{ u_e \mid e \in E,\, u \text{ is an endpoint of } e \}$;
  \item with edge set $M \sqcup E'$, where
    \[ M = \{ (u_e, v_e) \mid e = (u,v) \in E \} \quad E' = \{ (u_e, u_f) \mid
      u \in V,\, e,f \in \partial(u) \text{ are adjacent in } T(u)\} \]
  \item equipped with the perfect matching $M$.
  \end{itemize}
\end{defi}

\begin{propC}[{\cite{nguyen_constrained_2019}}]%
  \label{trail-bijection}
  Closed trails of length $k$ in $G$ compatible with $T$ correspond bijectively
  to alternating cycles of length $2k$ in $L_{PM}(G,T)$.
\end{propC}

An example is given by Figure~\ref{fig:paired-graph-cycle-b}: it contains two
alternating cycles corresponding to the compatible closed trails of
Figure~\ref{fig:paired-graph-cycle-a}.

Finally, we relate the PM-line graph construction to RB-graphs.
\begin{prop}
  Let $\pi$ be a proof structure \emph{with conclusions}
  (Definition~\ref{def-proof-structure-conclusions}) and $C(\pi)$ its
  correctness graph (adapting Definition~\ref{def-correctness-graph} to handle
  conclusion vertices/edges). Let $T$ be the transition system corresponding to
  the paired edges of $C(\pi)$.

  Then $L_{PM}(C(\pi),T)$ is exactly the RB-graph for $\pi$.
\end{prop}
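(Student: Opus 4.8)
The plan is to exhibit an explicit isomorphism $\Phi$ from $L_{PM}(C(\pi),T)$ to the RB-graph of $\pi$, respecting the perfect matchings, and then to verify link by link that it also respects the non-matching edges. First I would set up notation for the vertices. By the definition of the RB-graph, its matching $M$ is in bijection with the set $A$ of directed edges of $\pi$, each $a \in A$ contributing one matching edge and hence two vertices; for $a$ with source link $s$ and target $t$, write these as $a^{s}$ and $a^{t}$ according to the side on which they sit in Figure~\ref{fig:rb-graph-rules}. On the other side, the edges of $C(\pi)$ are by construction exactly the edges of $\pi$ with directions forgotten, so, writing $e$ for the edge of $C(\pi)$ coming from $a$, its endpoints are precisely $s$ and $t$, and Definition~\ref{def:lpm} equips it with the vertices $s_e$ and $t_e$. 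The map $\Phi : a^{s} \mapsto s_e,\ a^{t} \mapsto t_e$ is then a bijection on vertices, and it carries the RB matching edge $(a^{s},a^{t})$ to $(s_e,t_e) \in M$; since these exhaust both matchings, $\Phi$ restricts to a bijection between the two perfect matchings.

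It remains to see that $\Phi$ matches up the non-matching edges, and here I would use that both families are supported \enquote{at one vertex of $C(\pi)$}: a non-matching edge of $E'$ joins $u_e$ and $u_f$ for edges $e,f$ incident to a common vertex $u$ of $C(\pi)$ that are adjacent in $T(u)$, while each non-matching edge of the RB-graph is produced by applying one of the three panels of Figure~\ref{fig:rb-graph-rules} at a single link. So I would first unwind $T$: the transition system \enquote{corresponding to the paired edges of $C(\pi)$} is the one in which, at each $\bindnasrepma$-link $w$, the only forbidden transition is the one between the two premise edges of $w$ --- which form the unique pair of $\mathcal{P}$ whose common target is $w$ --- and every other transition at every vertex is allowed. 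Then I would run through the vertex types of $C(\pi)$ and compare with Figure~\ref{fig:rb-graph-rules}: at an $\mathtt{ax}$-link $v$ with its two outgoing edges $e,f$, all transitions are allowed, so $E'$ contributes a single non-matching edge between the source sides of $e$ and $f$, as in the $\mathtt{ax}$ panel; at a $\otimes$-link $w$ with incident edges $e_1,e_2,e_3$, all three transitions are allowed, so $E'$ contributes the triangle on the $w$-sides of $e_1,e_2,e_3$, as in the $\otimes$ panel; at a $\bindnasrepma$-link $w$ with premises $e_1,e_2$ and conclusion edge $e_3$, the transition between $e_1$ and $e_2$ is forbidden and the other two are allowed, so $E'$ contributes exactly the two non-matching edges joining the $w$-side of $e_3$ to the $w$-sides of $e_1$ and $e_2$, as in the $\bindnasrepma$ panel; and a conclusion vertex has degree $1$, hence no transition and no non-matching edge, matching the convention that conclusion vertices induce none. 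Since $\Phi$ identifies the relevant vertices in each case, it is an isomorphism of graphs equipped with a perfect matching, which is the claim.

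The step I expect to require the most care is pinning down $T$ and carrying out the $\bindnasrepma$ case: one must check that at a $\bindnasrepma$-link the two premise--conclusion transitions are \emph{allowed} while the premise--premise transition is the \emph{unique} forbidden one, which is exactly what \enquote{a switching path intersects each pair at most once} amounts to, once one observes that the two premises of a $\bindnasrepma$-link form its only pair and meet precisely at that link. A related subtlety worth a sentence is that $C(\pi)$ can be a genuine multigraph --- the \enquote{$\bindnasrepma$ of $\mathtt{ax}$} configuration of Figure~\ref{fig:par-of-ax} yields parallel edges, so a pair of $\mathcal{P}$ may share two vertices --- but the forbidden transition is placed only at the $\bindnasrepma$-link end, and both Definition~\ref{def:lpm} and the RB-graph rules are phrased per edge rather than per vertex pair, so the case analysis is unaffected.
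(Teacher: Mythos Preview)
Your proposal is correct and follows exactly the same approach as the paper: the paper's proof is the single line \enquote{Immediate by comparing Figure~\ref{fig:rb-graph-rules} with Definition~\ref{def:lpm}}, and you have simply carried out that comparison in full detail, setting up the vertex bijection and checking the non-matching edges link type by link type.
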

\begin{proof}
  Immediate by comparing Figure~\ref{fig:rb-graph-rules} with Definition~\ref{def:lpm}.
\end{proof}
The moral of the story is that the actual function of RB-graphs is to detect
\emph{compatible closed trails}. It turns out that for the correctness graphs of
proof structures, this is the same as switching cycles, but as
Figure~\ref{fig:paired-graph-cycle-a} shows this is not true in general. The
particularity of correctness graphs that entails this equivalence is that if a
vertex is incident to two edges that are paired together, then it is incident to
at most one unpaired edge (which corresponds to the outgoing edge of a
$\bindnasrepma$ link in the proof structure).

\section{Conclusion}

We have presented a correspondence between proof nets and perfect matchings, and
demonstrated its usefulness through several applications of graph theory to
linear logic: our results give the best known complexity for MLL+Mix correctness
and sequentialization, by taking advantage of sophisticated graph algorithms.
Beyond that, we have also contextualized this correctness problem as a member of
a family of equivalent constrained cycle-finding problems in graphs, and used
this to shed some light on earlier work on proof nets. These connections also
have some benefits for graph theory, as the rephrasing of Bellin's theorem and
our discovery of the \enquote{PM-line graph} construction illustrate; this is
what we attempt to demonstrate in the companion
paper~\cite{nguyen_constrained_2019}. In general, we hope to see fruitful
interactions arise between those two domains.

\subsection{Further hardness results: pomset logic and visible acyclicity}

To take advantage of this connection, one can peruse the literature on graphs to
look for results with potential applications to proof nets. For instance, there
is a $\mathsf{NP}$-hardness result for a certain constrained path-finding
problem on arc-colored \emph{directed graphs}~\cite{gourves_complexity_2013}.
From this, we deduced in~\cite{nguyen_constrained_2019} that finding an
alternating circuit --- for a certain notion of perfect matching in a directed
graph --- is $\mathsf{NP}$-hard. In other words, the absence of alternating
circuits --- one possible generalization of the \textsc{UniquenessPM} problem to
directed graphs (for which Berge's lemma doesn't hold) --- is
$\mathsf{coNP}$-hard.

It turns out that circuits (i.e., directed cycles) also appear in the study of
proof nets:
\begin{itemize}
\item Retoré's \emph{pomset logic}~\cite{goos_pomset_1997} is a conservative
  extension of MLL+Mix with a self-dual non-commutative connective
  $\triangleleft$. The extension of the Danos--Regnier correctness criterion to 
  pomset logic proof nets allows both premises of a $\triangleleft$-link to be
  traversed consecutively by a \enquote{switching cycle}, but \emph{only if the
    left premise is taken before the right one}: the direction of the cycle
  therefore becomes relevant. In~\cite{nguyen_complexity_2020}, we show that
  \emph{the correctness problem for pomset logic is
    $\mathsf{coNP}$-complete}\footnote{This contradicts (assuming that
    $\mathsf{P} \neq \mathsf{NP}$) the polynomial time claim
    of~\cite[Proposition~5]{goos_pomset_1997}, whose purported proof relies on a
    \enquote{standard breadth search algorithm} to find an alternating path for
    a perfect matching in a digraph. Remark~\ref{rem-alternating-difficult}
    explains the subtle issue with this argument.

    A similar mistake appears in Hughes's paper on combinatorial proofs: he
    claims that a \enquote{simple breadth-first
      search}~\cite[Footnote~3]{hughes_proofs_2006} can determine, in linear
    time, some condition that amounts to the correctness of a MLL+Mix proof
    structure. In that case, the mistake is harmless, thanks to our
    Theorem~\ref{thm-linear-time}.}, by adapting our proofification construction
  to take directed graphs as input. A direct proof of the $\mathsf{NP}$-hardness
  of the directed alternating cycle problem is also provided
  in~\cite{nguyen_complexity_2020}.
\item The \emph{visible acyclicity} condition was first introduced by Pagani for
  as a relaxation of the usual correctness criterion for MELL+Mix (by MELL we
  mean Multiplicative-\emph{Exponential} Linear Logic) proof
  structures~\cite{pagani_acyclicity_2006}; it was later extended to
  differential interaction nets~\cite{pagani_visible_2012}. It is defined as the
  absence of certain \enquote{visible cycles}, which become directed when
  exponential boxes are present. One can show that visible acyclicity is
  $\mathsf{coNP}$-hard (a result that we first announced at the DICE 2018
  workshop) by imitation of the proof for pomset logic. However, we do not know
  whether it is in $\mathsf{coNP}$.
\end{itemize}
Interestingly, both pomset logic correctness and visible acyclicity were
motivated by semantic considerations: they are necessary and sufficient
conditions for the soundness of the denotation of proof structures in coherence
spaces.

\subsection{Open questions}

Now that we have shed a new light on MLL+Mix proof nets, it would be interesting
to revisit the well-studied theory of MLL proof nets. Therefore, we would like
to find the right graph-theoretical counterpart to the connectedness condition
in the Danos--Regnier criterion for MLL\@. The goal would be to extract the 
combinatorial essence of the statics of MLL proof structures, \emph{forgetting
  about logic}; without having to handle the dynamics (cut-elimination), one
could hope to distill some simpler combinatorial object, in the same way that
perfect matchings are simpler than MLL+Mix proof structures.

But unique perfect matchings do not seem to be the right setting to do so; and
one year after the conference version of this paper, despite the connections
described here with, \eg, forbidden transitions, we still have not found a
natural graph-theoretic decision problem equivalent to correctness for MLL
without Mix. (As far as naturality is concerned, perfect matchings set a high
bar, given their importance in discrete mathematics!)

Here by \enquote{equivalent} we mean, in particular, through low-complexity
reductions (hopefully computable both in linear time and in $\mathsf{AC}^0$).
Though the $\mathsf{NL}$-completeness of MLL correctness means that it is
equivalent to directed reachability, Mogbil and Naurois's correctness
criterion~\cite{jacobe_de_naurois_correctness_2011} uses a subroutine for
connectivity in undirected forests, a $\mathsf{L}$-complete problem, in its
reduction. A related question is to understand why all known linear-time
correctness criteria for MLL --- including the one presented here --- rely on the
same sophisticated data structure, as mentioned in
Remark~\ref{rem-alternating-difficult}. (Namely, \enquote{incremental tree set
  union}: a restricted union-find data structure with $O(1)$ amortized
operations.)

In the same vein, the present paper does not treat at all --- except for
  the short Remark~\ref{rem-contractibility} --- the \emph{contractibility}
criterion introduced by Danos~\cite{danos_logique_1990}, despite its importance
in recent developments in proof nets (\eg,~\cite{hughes_conflict_2016,bellin_proof_2018}). It is also part of the divide
between MLL and MLL+Mix proof nets: contractibility, reformulated as graph
parsing, underlies a linear-time sequentialization algorithm for
MLL~\cite{guerrini_linear_2011}, while no such algorithm is known for MLL+Mix.
Aside from the obvious question of sequentalizing MLL+Mix nets in linear time,
looking for a mainstream graph-theoretic account of contractibility is also of
interest.

Another question\footnote{This was suggested to the author by Gianluigi Bellin.}
would be to give a graph-theoretic account of the notion of \emph{empire} in
proof nets, similarly to our treatment of kingdoms in
\Cref{sec:kingdom-ordering}. Empires were used in Girard's original proof of the
first correctness criterion (the so-called \enquote{long trip}
criterion)~\cite{girard_linear_1987}; while the kingdom of a link $l$ in a
MLL+Mix proof net is the minimum normal subnet having $l$ as a conclusion, the
empire of $l$ is, dually, the maximum such subnet. To achieve this goal, the
obvious place to start would be the characterization of empires in proof nets
given in~\cite[Lemma~3]{bellin_subnets_1997} using certain paths
(\enquote{chains}) in proof nets.

\subsection{Other variants of proof nets through the lens of graph theory}

We gather here miscellaneous ideas on extending the graph-theoretic viewpoint
beyond MLL+Mix, that we have not had the time to pursue further. Any assertion
that we make below should therefore be seen as purely speculative.

\subsubsection{Jumps and quantifiers}

We have argued that our graphification construction (\Cref{sec:graphification})
faithfully reflects the intrinsic order of logical rules in a proof net. It should
therefore be possible to incorporate \emph{jumps}, which are a way to prescribe
sequentiality constraints on proof nets. By doing so, one would extend our
results to MLL+Mix with (first-order or second-order) quantifiers
$\forall/\exists$: the technology of jumps was first introduced to handle proof
nets with quantifiers~\cite{girard_quantifiers_1991}. This treatment should also
accomodate more general uses of jumps such
as~\cite{di_giamberardino_proof_2008}.

\subsubsection{Essential nets}%
\label{sec:essential-nets}

Larmarche's \emph{essential nets} for \emph{intuitionistic} MLL admit a
correctness criterion formulated using a standard notion on graphs, namely the
\emph{domination} between vertices in a control flow graph. This is at the heart
of Murawski and Ong's linear time algorithm for MLL
correctness~\cite{murawski_fast_2006}. So it would be interesting, in view of
the aforementioned goal of understanding why the \enquote{incremental tree set
  union} data structure of~\cite{gabow_linear-time_1985} seems necessary to
decide MLL correctness in linear time (it occurs in the computation of a
\enquote{dominator tree} in~\cite{murawski_fast_2006}), to compare this
domination criterion with the criteria based on unique perfect matchings.

A first remark is that, via a classical correspondence between directed graphs
and graphs equipped with \emph{bipartite} perfect matchings, the essential net
obtained from a MLL proof structure by the reduction
of~\cite{murawski_fast_2006} (the so-called \enquote{trip translation}) can be
identified with a maximal bipartite subgraph of its RB-graph. The missing piece
is to understand whether this is an instance of a purely graph-theoretic
reduction from the domination condition to the \textsc{UniquenessPM} problem.

\section*{Acknowledgments}
\noindent This work started as a side project during an internship in the
Operations Research team at the Laboratoire d'Informatique de Paris 6,
supervised by Christoph Dürr, who taught the author the expressive power of
perfect matchings; this paper would not exist without him. Thanks also to Kenji
Maillard, Michele Pagani, Marc Bagnol, Antoine Amarilli, Alexis Saurin, Stefano
Guerrini and Virgile Mogbil for discussions, references and encouragements, and
to Thomas Seiller for his writing advice on the initial conference version.

We are also grateful to the anonymous reviewers for their useful and detailed
feedback on previous versions of this paper.


\bibliographystyle{alpha}
\bibliography{/home/tito/export.bib}

\appendix

\section{Proof of Lemma~\ref{lemma-prescribed}}%
\label{appendix-proof-prescribed}

We rely on a version of Berge's lemma (Lemma~\ref{berge}) for paths:
\begin{lem}[Berge~\cite{berge_two_1957}]%
  \label{berge-augmenting}
  Let $G$ be a graph and $M$ be a matching of $G$. If $P$ is an \emph{augmenting
    path} for $M$ --- i.e., an alternating path whose endpoints are unmatched ---
  then $M \triangle P$ is a matching and $|M \triangle P| = |M| + 1$. (Thus,
  adding $P$ \enquote{augments} $M$, hence the name.) Conversely, if $M$ is a
  matching with $|M'| > |M|$, then $M \triangle M'$ is a vertex-disjoint union
  of:
  \begin{itemize}
  \item $|M'| - |M|$ augmenting paths for $M$;
  \item some (possibly zero) cycles which are alternating for both $M$ and $M'$.
  \end{itemize}
\end{lem}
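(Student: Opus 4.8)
The plan is to prove the two halves of Lemma~\ref{berge-augmenting} by purely local arguments on vertex degrees in the symmetric difference; both halves are classical, and I expect no real difficulty beyond careful bookkeeping in the converse.

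For the forward direction I would write the augmenting path as $P = v_0\, e_1\, v_1 \cdots e_\ell\, v_\ell$ with $v_0$ and $v_\ell$ unmatched by $M$. Since $P$ alternates and $v_0$ is unmatched, the first edge $e_1$ lies outside $M$, and by alternation so does $e_\ell$; hence $\ell$ is odd, $P$ has $(\ell+1)/2$ edges outside $M$ and $(\ell-1)/2$ edges inside $M$, so $|M \triangle P| = |M| - (\ell-1)/2 + (\ell+1)/2 = |M| + 1$. To see that $M \triangle P$ is still a matching I would go vertex by vertex: an interior vertex $v_i$ (with $0 < i < \ell$) is incident within $P$ to exactly one $M$-edge and one non-$M$-edge, and its unique incident $M$-edge is already one of these two, so passing to $M \triangle P$ merely swaps them and leaves $v_i$ of degree $1$; each endpoint $v_0, v_\ell$ had degree $0$ in $M$ and acquires only its single incident $P$-edge; and vertices off $P$ are untouched. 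So every vertex has degree at most $1$ in $M \triangle P$.

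For the converse I would set $H = M \triangle M'$ and regard it as a subgraph. Every vertex is incident to at most one edge of $M$ and at most one of $M'$, hence has degree at most $2$ in $H$; therefore $H$ is a vertex-disjoint union of simple paths and cycles. Along any component, two consecutive edges cannot both lie in $M$ (that would contradict $M$ being a matching) nor both in $M'$, so each component is alternating for both $M$ and $M'$, and in particular every cycle has even length. For a path component I would inspect the two edges meeting its endpoints: if both are $M'$-edges the path has odd length with one more $M'$-edge than $M$-edge and its endpoints are unmatched by $M$ — i.e.\ it is an augmenting path for $M$; symmetrically, two $M$-edges give an augmenting path for $M'$, and one of each gives an even-length path contributing equally to $M \setminus M'$ and to $M' \setminus M$. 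Counting edges by type, $|M'| - |M| = |M' \setminus M| - |M \setminus M'|$; each cycle and each even path contributes $0$ to this difference, each augmenting path for $M$ contributes $+1$ and each augmenting path for $M'$ contributes $-1$, so the augmenting-for-$M$ components outnumber the augmenting-for-$M'$ components by exactly $|M'| - |M|$. In particular $H$ contains $|M'| - |M|$ vertex-disjoint augmenting paths for $M$, which is the content invoked in the proof of Lemma~\ref{lemma-prescribed}.

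The one place that needs care is this final accounting: one must check that an alternating path with both endpoints unmatched really does force both endpoint edges out of the relevant matching, that the three types of path component above are exhaustive, and — to read the decomposition exactly as stated — that even-length alternating paths (and possibly augmenting paths for $M'$) may also occur in $H$ alongside the $|M'|-|M|$ augmenting paths for $M$, without affecting the count. Everything else is routine degree-counting, which is why Berge's lemma is usually quoted rather than reproved.
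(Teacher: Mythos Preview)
Your proof is correct and is the standard textbook argument for Berge's lemma. However, the paper does not actually prove this statement: Lemma~\ref{berge-augmenting} is stated with attribution to Berge~\cite{berge_two_1957} and immediately used in the proof of Lemma~\ref{lemma-prescribed}, with no proof given. So there is nothing to compare against; you have supplied a proof where the paper simply quotes a classical result.

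One remark worth keeping: your final caveat is well taken. The decomposition as literally stated in the paper omits the possibility of even-length alternating paths and of augmenting paths for $M'$ appearing in $M \triangle M'$. Your accounting shows that what is actually guaranteed is \emph{at least} $|M'|-|M|$ vertex-disjoint augmenting paths for $M$, which is exactly what the application in Appendix~\ref{appendix-proof-prescribed} needs (there, the absence of alternating cycles for $M$ and the fact that only four vertices are unmatched rule out the extra components anyway).
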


\noindent
Let $u,v \in V$ be the unmatched vertices. If there is an augmenting path for
$M$ in $G$, its endpoints must be $u$ and $v$, and this is equivalent to the
existence of a perfect matching in $G$. Let $e = (a,b)$, $G' = (V, E \setminus
\{e\})$ and $M' = M \setminus \{e\}$.

Suppose $G'$ admits a perfect matching $M''$. Then the symmetric difference $M'
\triangle M''$ consists of two vertex-disjoint alternating paths for $M'$ whose
endpoints are $\{u, v, a, b\}$, by Berge's lemma for paths; indeed, our
assumptions prevent the existence of alternating cycles for $M$, and therefore
for $M' \subset M$ as well.

We claim that these paths either go from $u$ to $a$ and $b$ to $v$, or from
$u$ to $b$ and $a$ to $v$. Otherwise, there would be an alternating path from
$a$ to $b$ for $M'$ in $G'$, and together with $(a,b) = e \in M$, this would
give us an alternating cycle for $M$ in $G$.

In both cases, let us join the two paths together by adding $e$. We get a path
starting with $u$, ending with $v$, crossing $e$ and alternating for $M$ in
$G$. Conversely, from such a path, one can get a perfect matching in $G'$.

It is clear that the reduction is in $\mathsf{AC}^0$. For the linear time
complexity, we exploit the fact that we already have at our disposal a matching
$M'$ of $G'$ which leaves only 4 vertices unmatched. A perfect matching can then
be found as follows: find a first augmenting path $P$ for $M'$ in linear time,
and then a second one $P'$ for $M' \triangle P$, both steps being done in linear
time (using a similar (but simpler) algorithm than for \textsc{UniquenessPM},
see~\cite{gabow_linear-time_1985} and {\cite[Section~9.4]{tarjan_data_1983}}).
If both augmenting paths exist, then $M \triangle P \triangle P'$ is a perfect
matching, and conversely, if $G'$ admits a perfect matching, then the procedure
succeeds in finding some $P$ and $P'$. (This does not mean that $P$ and $P'$ are
the same as the paths in the previous part of the proof, since they may not be
vertex-disjoint.)

\end{document}